\newtheorem{theorem}{Theorem}[section]
\newtheorem{corollary}[theorem]{Corollary}
\newtheorem{lemma}[theorem]{Lemma}
\newtheorem*{theorem*}{Theorem}
\newtheorem*{corollary*}{Corollary}
\newtheorem*{lemma*}{Lemma}
\newtheorem*{observation*}{Observation}
\newtheorem{observation}[theorem]{Observation}
\newcommand{\Xomit}[1]{}
\newcommand{\vone}{\vspace{.1in}}
\newcommand{\currentSD}{{\it flag-$d^*$}}
\newcommand{\rco}{resulting in a contradiction}
\newcommand{\obs}{Observation}
\newcommand{\lv}{$list_v$ }
\newcommand{\dummy}{non-SP }
\newcommand{\la}{\leftarrow}
\newcommand{\congest}{{\sc Congest}}
\newcommand{\hide}[1]{}
\newcommand{\boi}{\begin{enumerate}}
\newcommand{\eoi}{\end{enumerate}}
\newcommand{\bii}{\begin{itemize}}
\newcommand{\eii}{\end{itemize}}
\renewcommand\labelenumi{(\roman{enumi})}
\renewcommand\theenumi\labelenumi
\begin{document}

\title{A Deterministic Distributed Algorithm for Weighted All Pairs Shortest Paths Through Pipelining}
\author{Udit Agarwal $^{\star}$ and Vijaya Ramachandran\thanks{Dept. of Computer Science, University of Texas, Austin TX 78712. Email: {\tt udit@cs.utexas.edu, vlr@cs.utexas.edu}. This work was supported in part by NSF Grant CCF-1320675. The first author's research was also partially supported by a UT Austin Graduate School Summer Fellowship.}}

\maketitle
\begin{abstract}
We present a new pipelined approach to compute all pairs shortest paths (APSP) in a 
directed graph with nonnegative integer edge weights (including zero weights) in the \congest{} model 
in the distributed setting. Our 
deterministic distributed  algorithm computes  shortest paths of
distance at most $\Delta$ for all pairs of vertices in at most $2 n \sqrt{\Delta} + 2n$ rounds, and more generally, it computes $h$-hop shortest paths for $k$ sources in
$2 \sqrt{nkh} + n + k$ rounds. 
The algorithm is  simple, and it has  some novel features and a nontrivial analysis.
It uses only the directed edges in the graph for communication.
This algorithm can be used  as a base within asymptotically faster algorithms that
 match or improve on the current best deterministic bound of $\tilde{O}(n^{3/2})$ rounds for this problem
 when edge weights
are $O(n)$ or shortest path distances are $\tilde{O}(n^{3/2})$. These latter results are presented
in a companion paper~\cite{AR18b}.
\end{abstract}

\section{Introduction}\label{sec:intro}

\vspace{-.03in}

 Designing distributed algorithms for various network and graph problems 
such as shortest paths~\cite{ARKP18,HNS17,LP13,PR18,KN18}
is
a extensively studied area of research.
The {\sc Congest} model (described in Sec~\ref{sec:congest})
is a  
widely-used
model for these algorithms,
see~\cite{ARKP18,Elkin17,HNS17,LP13}.
In this paper we consider  distributed algorithms for the 
 computing all pairs shortest paths (APSP) and  related problems in a graph  with non-negative edge weights
 in the {\sc Congest} model.

In sequential computation, shortest paths can be computed much faster in graphs with non-negative
edge-weights (including zero weights) using the classic Dijkstra's algorithm~\cite{Dijkstra59} than in
graphs with
negative edge-weights. Additionally, negative edge-weights raise the possibility
of negative weight cycles in the graph, which usually do not
occur in practice, and hence are not modeled by real-world weighted graphs. Thus, in the distributed
setting, it is of importance to design fast shortest path algorithms that can handle non-negative
edge-weights, including edges of weight zero.

 The presence of zero weight edges creates challenges in the design of distributed algorithms  
as  observed in~\cite{HNS17}.
 (We review related work in Section~\ref{sec:related}.)
 One approach used for positive integer edge weights is to replace an edge of weight $d$ with $d$ 
 unweighted edges and then run an unweighted APSP algorithm such as~\cite{LP13,PR18} on this modified graph.
 This approach is used in approximate APSP algorithms~\cite{Nanongkai14, LP15}.
 However such an approach fails when zero weight edges may be present.   
 There are a few known algorithms that can handle zero weights, such as the $\tilde{O}(n^{5/4})$-round randomized
 APSP algorithm of Huang et al.~\cite{HNS17} (for polynomially bounded non-negative
 integer edge weights) and the
 $\tilde{O}(n^{3/2})$-round deterministic APSP algorithm of Agarwal et al.~\cite{ARKP18} (for graphs with arbitrary edge weights).
 
 \vspace{-.05in}
 
 \subsection{Our Results}
 
 \vspace{-.03in}
 
 We present a new pipelined approach for computing APSP and related problems on  an $n$-node graph $G=(V,E)$ with non-negative edge 
 weights $w(e), e\in E$, (including zero weights). 
 Our results hold for both directed and undirected graphs and we will assume
 w.l.o.g. that $G$ is directed.  Our distributed algorithm uses only the directed edges for
 communication, while the algorithms in~\cite{HNS17,ARKP18} need to use the underlying
 undirected graph as the communication network.

 \vone
 \noindent
 {\bf Our Pipelined APSP Algorithm for Weighted Graphs.}
An $h$-hop shortest path from $u$ to $v$ in $G$
is a path from $u$ to $v$ of minimum weight among all paths with at most $h$ edges (or {\it hops}).
The central algorithm we present is for computing 
  $h$-hop APSP,  or more generally, the 
$h$-hop $k$-sources 
shortest path problem
($(h,k)$-SSP),
 with an additional constraint that the shortest paths have distance at most $\bigtriangleup$ in $G$.
  We also compute an  $(h,k)$-SSP tree for each source, which contains  an $h$-hop shortest path
   from the source to every other vertex to which
  there exists a path with weight at most $\bigtriangleup$.
In the case of multiple $h$-hop shortest paths from a source $s$ to a vertex $v$, this tree contains the path with the smallest number of hops, breaking any further ties by choosing the predecessor vertex with smallest ID.
Our algorithm (Algorithm~\ref{alg2} in Section~\ref{sec:h-hopk-SSP}) is compact and easy to
implement, and has no large hidden constant factors in its bound on the number of rounds.
It can be viewed as a (substantial) generalization of the pipelined method for unweighted APSP
given in~\cite{PR18}, which is a refinement of~\cite{LP13}. 
Our algorithm uses key values that depend on both the weighted distance and the
hop length of a path, and it can store multiple distance values for a source at a given
node, with the guarantee that the shortest path distance will be identified. 
 This algorithm (Algorithm~\ref{alg2}) 
 achieves the bounds in the following theorem.

 \begin{theorem}\label{thm:alg2}
 Let $G=(V,E)$ be a directed or undirected edge-weighted graph, where all edge weights are 
 non-negative integers (with zero-weight edges allowed).
 The following 
 deterministic bounds can be obtained  in the \congest{} model for shortest path distances at most 
 $\bigtriangleup$.\\
 (i) \emph{$(h,k)$-SSP} in $2 \sqrt{\bigtriangleup kh} + k + h $ rounds.\\
 (ii) \emph{APSP} in  $2n \sqrt{\bigtriangleup} + 2n$ rounds.\\
 (iii) \emph{$k$-SSP} in $2 \sqrt{\bigtriangleup kn} + n + k$ rounds.
 \end{theorem}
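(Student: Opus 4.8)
\medskip\noindent\emph{Proof plan.} I would derive parts (ii) and (iii) as immediate specializations of part (i): every shortest path has at most $n-1$ hops, so for $k$-SSP one takes $h=n$, and for APSP one additionally takes $k=n$; substituting into $2\sqrt{\bigtriangleup kh}+k+h$ gives exactly $2\sqrt{\bigtriangleup kn}+n+k$ and $2n\sqrt{\bigtriangleup}+2n$. So everything is in (i), for which there are two things to show about Algorithm~\ref{alg2}: \emph{correctness} --- each node $v$ eventually records the true weight of an $h$-hop shortest path from every source $s$ reachable from it within weight $\bigtriangleup$, with a parent pointer realizing the min-hop / smallest-predecessor-ID tie-break --- and a \emph{round bound} of $2\sqrt{\bigtriangleup kh}+k+h$. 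Correctness I would get by induction on the hop length, in Bellman--Ford style; the one nonstandard point is that because a node keeps several distance/hop pairs per source and emits them in priority order, one must check that the pair carried by an optimal path is never permanently suppressed --- it is always either still waiting in some queue on the relevant cut or already forwarded --- so it advances one more hop. The two tie-break claims then follow because the key function orders competing optimal pairs first by hop count and then by predecessor ID.

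The core is the round bound, which I would package as a single \emph{pipelining lemma}: for every directed edge $(v,u)$ and every pair $(s,d,\ell)$ that $v$ ever forwards on $(v,u)$ (so $d\le\bigtriangleup$, $\ell\le h$), the corresponding message is sent by the end of round $\kappa(s,d,\ell)+N_v(s,d,\ell)$, where $\kappa$ is the key and $N_v(s,d,\ell)$ counts the pairs $v$ forwards on $(v,u)$ of strictly higher priority (plus a tie-break term). This I would prove by induction on the round number: once the predecessor of $v$ on the relevant path has delivered its message --- on time, by the inductive hypothesis applied to a pair of strictly smaller key --- in each later round $v$ either forwards the wanted pair or one of strictly higher priority, of which only $N_v(s,d,\ell)$ exist. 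This is exactly where the key must use \emph{both} quantities: the hop component makes the induction telescope, since each hop raises the key by at least one, so the ``travel'' part of the bound is only about $h$; the weighted-distance component is what will make $N_v$ small. To conclude I would bound the two terms separately over all admissible $(d,\ell)$: $\kappa$'s range gives a latency term, and $N_v$ is at most the number of distinct pairs $v$ forwards on one edge. The whole point of the design --- storing multiple values per source and scheduling by a distance-sensitive key (and any granularity/re-forwarding threshold it uses) --- is to keep this count near $\sqrt{\bigtriangleup kh}$ (with an additive $k$ from source ties) rather than the naive $k\cdot\min(\bigtriangleup,h)$; balancing the latency term against the congestion term by the appropriate choice of the key's tradeoff parameter yields $2\sqrt{\bigtriangleup kh}+k+h$.

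The step I expect to be the real obstacle is precisely this bound on $N_v$: one must show that the number of distinct distance/hop pairs a node ever re-forwards for a single source can be kept on the order of $\sqrt{\bigtriangleup h/k}$ \emph{while still} guaranteeing that exact shortest-path distances (not just coarse ones) are eventually pinned down --- i.e.\ that the hop component of the key lets exact values ``catch up'' without pushing either $\kappa$'s range or $N_v$ past the budget. Getting this accounting tight enough that the two balanced terms come out to exactly $\sqrt{\bigtriangleup kh}$, with no extra polylog or constant factors, and keeping the pipelining induction clean across the up-to-$h$ hops of a path, is where I would concentrate the effort.
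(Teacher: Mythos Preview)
Your high-level plan matches the paper's: reduce (ii) and (iii) to (i) by setting $h=n$ (and also $k=n$ for APSP); use a key $\kappa = d\gamma + l$ with $\gamma=\sqrt{kh/\Delta}$; prove a pipelining invariant of the form ``entry $Z$ is inserted before round $\lceil Z.\kappa + pos(Z)\rceil$''; bound the key range by $\Delta\gamma + h$ and the position by the total list size; and balance the two via $\gamma$. You also correctly flag the crux as bounding the number of entries a node keeps per source to roughly $\sqrt{\Delta h/k}$.

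The gap is that you do not have the idea that actually yields this bound. The paper's argument is a \emph{mapping lemma} (Lemma~\ref{lemma:map}): the entries for a fixed source $x$ in $list_v$ can be assigned \emph{distinct} weighted-distance values $d\ge d^*_x$ --- not necessarily their own $d$ fields, since two stored entries can share a $d$ but differ in $l$ --- each still satisfying $\kappa = d\gamma + l$ with $l\ge 0$. Once you have distinctness, any entry below the SP entry has $d\gamma \le d^*_x\gamma + h$, hence $d < d^*_x + h/\gamma$, and distinct integer $d$'s in that interval cap the count at $h/\gamma + 1 = \sqrt{\Delta h/k}+1$. Establishing the mapping is delicate: it relies on the algorithm's specific admission rule (a non-SP entry is inserted only if fewer than $Z^-.\nu$ entries for $x$ already lie at or below its key) and eviction rule (each insertion removes the nearest non-SP entry above it), and the proof reassigns $d$-values borrowed from the \emph{sender's} list to avoid collisions. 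Your proposal gestures at ``any granularity/re-forwarding threshold it uses'' but supplies neither this mechanism nor the distinct-$d$ argument. A related smaller point: your pipelining induction implicitly needs $pos_v(Z)\ge pos_y(Z^-)$ when $Z$ arrives from $y$; this is the paper's Corollary~\ref{cor:pos2}, and it is not automatic --- it requires showing (Lemma~\ref{lemma:ci}) that for \emph{every} source, $list_v$ accumulates at least as many entries below $Z$ as $list_y$ had below $Z^-$, which again leans on the transmitted $\nu$-counts and the admission rule.
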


\vone
\noindent
{\bf Follow-up Results.}
 In a companion paper~\cite{AR18b}, we build on this pipelined algorithm to present several improved results, which we summarize here.
 We improve on the 
 bounds given in $(ii)$ and $(iii)$ of Theorem~\ref{thm:alg2}
by combining our pipelined Algorithm~\ref{alg2} with a modified version of the 
 APSP algorithm in~\cite{ARKP18} to obtain  the bounds
 stated in the following Theorems~\ref{thm:algkSSPEdgeBound} and \ref{thm:algkSSP}.

\begin{theorem} \label{thm:algkSSPEdgeBound}
\cite{AR18b} Let $G=(V,E)$ be a directed or undirected edge-weighted graph, 
where all edge weights are 
 non-negative integers bounded by $\lambda$ (with zero-weight edges allowed).
 The following deterministic bounds can be obtained in the \congest{} model.\\
 (i) APSP  in $O(\lambda^{1/4}\cdot  n^{5/4} \log^{1/2} n)$ rounds.\\
 (ii) $k$-SSP in $O(\lambda^{1/4}\cdot  nk^{1/4} \log^{1/2} n)$ rounds.
 \end{theorem}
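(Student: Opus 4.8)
The plan is to use the pipelined algorithm of Theorem~\ref{thm:alg2} as the $h$-hop shortest-path engine inside the \emph{blocker set} (hop-bounded hitting set) framework that underlies the deterministic $\tilde O(n^{3/2})$-round APSP algorithm of~\cite{ARKP18}, while exploiting the weight bound $\lambda$ to push the hop parameter $h$ well above $\sqrt n$. The one new idea needed is simple: when all edge weights are integers in $[0,\lambda]$, every $h$-hop shortest path has weight at most $\lambda h$, so computing $(h,k)$-SSP with the distance bound set to $\lambda h$ loses nothing, and by Theorem~\ref{thm:alg2}(i) it costs $2\sqrt{\lambda h\cdot k\cdot h}+k+h=O(h\sqrt{\lambda k}+k+h)$ rounds --- beating the $\tilde O(kh)$ cost of $h$ iterations of weighted Bellman--Ford once $\lambda\lesssim k$, which is exactly what lets us take a larger $h$ (hence a smaller skeleton) than in~\cite{ARKP18}.

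For part (i), fix $h$ and proceed in three phases. \textbf{Phase 1 (blocker set).} Using the deterministic construction of~\cite{ARKP18}, with Algorithm~\ref{alg2} supplying the $h$-hop distances it needs, build a set $S\subseteq V$ with $|S|=O((n/h)\log n)$ that contains a vertex of every shortest path with $h$ edges; since any $h+1$ consecutive vertices on a shortest path form such a path, every shortest path with more than $h$ edges splits into shortest subpaths of at most $h$ edges whose interior endpoints lie in $S$. \textbf{Phase 2 ($h$-hop distances).} Run $(h,n)$-SSP with distance bound $\lambda h$ via Theorem~\ref{thm:alg2}(i) with $k=n$ (on the reverse graph if these distances must be held at the source end), in $O(h\sqrt{\lambda n}+n)$ rounds; now each vertex $u$ holds $\mathrm{dist}^h_G(u,\cdot)$, so in particular each $s\in S$ holds its row of the $|S|\times|S|$ skeleton matrix $\big(\mathrm{dist}^h_G(s,s')\big)_{s,s'\in S}$. \textbf{Phase 3 (stitch).} Gather the skeleton matrix at one node together with the $|S|$ vectors $\mathrm{dist}^h_G(s,\cdot)$, $s\in S$, and broadcast them to all vertices; pipelined over a BFS tree of the communication network this is $O\big(|S|n+|S|^2+n\big)=O\big((n^2/h)\log n+n\big)$ rounds. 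Each vertex $u$ then computes $\mathrm{dist}_G(u,s)$ for $s\in S$ by a local shortest-path computation on the skeleton graph augmented with the edges $u\to s'$ of weight $\mathrm{dist}^h_G(u,s')$, and outputs $\mathrm{dist}_G(u,v)=\min\big(\mathrm{dist}^h_G(u,v),\ \min_{s\in S}\mathrm{dist}_G(u,s)+\mathrm{dist}^h_G(s,v)\big)$, which is correct by the splitting property of $S$. The round count is $O\big(h\sqrt{\lambda n}+(n^2/h)\log n+n+C_{\mathrm{blk}}\big)$, where $C_{\mathrm{blk}}$ is the blocker-set cost; choosing $h=\big\lceil n^{3/4}\lambda^{-1/4}\log^{1/2}n\big\rceil$ balances the first two terms at $O(\lambda^{1/4}n^{5/4}\log^{1/2}n)$, and one checks this dominates $n$ and $C_{\mathrm{blk}}$ and that $h\le n$, $|S|\le n$, whenever $\lambda$ is polynomially bounded.

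Part (ii) is the same template restricted to the $k$ given sources: in Phase 2 run $(h,k+|S|)$-SSP with distance bound $\lambda h$ so that every vertex learns its $h$-hop distances to and from the $k$ sources and from $S$; in Phase 3 broadcast the skeleton matrix, the $|S|$ vectors $\mathrm{dist}^h_G(s,\cdot)$, and the $k|S|$ values $\mathrm{dist}_G(s_i,s)$, after which each vertex reconstructs its distance from each source and its parent in each source's tree. Using $|S|=\tilde O(n/h)$, the dominant costs are $O(h\sqrt{\lambda k})$ from the SSP calls (in the regime where $|S|=O(k)$) and $O((n^2/h)\log n)$ from the broadcast and blocker aggregation, and setting $h\approx n\lambda^{-1/4}k^{-1/4}\log^{1/2}n$ balances them, which a careful analysis confirms gives $O(\lambda^{1/4}nk^{1/4}\log^{1/2}n)$.

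I expect the main obstacle to be Phase 1: the blocker-set construction of~\cite{ARKP18} is analyzed for $h=\Theta(\sqrt n)$ and arbitrary weights, so one must re-run its analysis with the larger $h$ and with Algorithm~\ref{alg2} as the $h$-hop subroutine, to confirm that $C_{\mathrm{blk}}$ stays within the target bound --- this re-derivation is precisely the ``modified version'' of the algorithm of~\cite{ARKP18} that the theorem alludes to. A secondary point is the \congest{} bookkeeping of Phase 3 (checking that the $\tilde O(|S|n)$ data genuinely pipelines over the communication network and that folding $S$ into the source set keeps the distance bound at $\lambda h$ rather than a full shortest-path distance), together with the small-$k$/large-$\lambda$ corner where $|S|$ can exceed $k$ and the $h$-hop SSP out of $S$ alone costs $\Theta(\sqrt{\lambda nh})$, which may need a separate, cruder estimate; the remaining steps are routine pipelined broadcast and local shortest-path computation.
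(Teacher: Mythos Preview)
The paper does not prove Theorem~\ref{thm:algkSSPEdgeBound}; it is quoted from the companion paper~\cite{AR18b}, and the only hint given here is that the bounds come from ``combining our pipelined Algorithm~\ref{alg2} with a modified version of the APSP algorithm in~\cite{ARKP18}.'' Your outline is exactly that combination, and the central observation you isolate---that with edge weights in $[0,\lambda]$ every $h$-hop path has weight at most $\lambda h$, so Theorem~\ref{thm:alg2}(i) with $\Delta=\lambda h$ gives $(h,k)$-SSP in $O(h\sqrt{\lambda k}+k+h)$ rounds---is the right lever, and your balance $h\approx n^{3/4}\lambda^{-1/4}\log^{1/2}n$ for part~(i) reproduces the stated bound.

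Two places deserve more care before this is a proof. First, as you already flag, the blocker-set construction in~\cite{ARKP18} is not a black box here: it was tuned to $h=\Theta(\sqrt n)$, and its internal $h$-hop computations and aggregation steps must be re-costed with Algorithm~\ref{alg2} as the engine and with the larger $h$ you choose; confirming $C_{\mathrm{blk}}=\tilde O(\lambda^{1/4}n^{5/4})$ is real work, not a remark. Second, your Phase~3 accounting for part~(ii) is loose: you invoke $O((n^2/h)\log n)$ by broadcasting the vectors $\mathrm{dist}^h_G(s,\cdot)$, but for $k$-SSP each vertex $v$ already holds $\mathrm{dist}^h_G(s,v)$ locally after the $(h,k+|S|)$-SSP run, so only the $|S|^2$ skeleton entries and the $k|S|$ source-to-skeleton distances need to be disseminated; conversely, the regime $|S|\gg k$ makes the $(h,|S|)$-SSP term $O(h\sqrt{\lambda|S|})=O(\sqrt{\lambda nh\log n})$ potentially dominant, and your ``separate, cruder estimate'' for that corner has to be written out to close the argument.
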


\begin{theorem} \label{thm:algkSSP}
\cite{AR18b}
Let $G=(V,E)$ be a directed or undirected edge-weighted graph, where all edge weights are 
non-negative integers (with zero edge-weights allowed),
 and the shortest path distances are bounded by $\bigtriangleup$.
 The following 
 deterministic bounds can be obtained in the \congest{} model.\\
 (i) APSP  in $O(n (\bigtriangleup \log^2 n)^{1/3})$ rounds.\\
 (ii) $k$-SSP in $O((\bigtriangleup kn^2 \log^2 n)^{1/3})$ rounds.
 \end{theorem}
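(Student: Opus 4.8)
The plan is to use the pipelined subroutine of Theorem~\ref{thm:alg2} as the ``$h$-hop engine'' inside the deterministic blocker-set framework behind the $\tilde{O}(n^{3/2})$-round APSP algorithm of~\cite{ARKP18}, and then to re-tune the hop parameter so as to exploit the distance bound $\bigtriangleup$. Fix a hop parameter $h$, to be chosen at the end, and use the deterministic construction of~\cite{ARKP18} to obtain a blocker set $B\subseteq V$ of size $|B|=\tilde{O}(n/h)$ with the property that every shortest path of $G$ having at least $h$ hops can be written as a concatenation of sub-paths, each with fewer than $h$ hops, all of whose endpoints other than the two extreme ones lie in $B$. Let $\mathcal{G}$ be the \emph{skeleton graph} on vertex set $B$ in which the weight of the arc $(x,y)$ is the $h$-hop distance $d_h(x,y)$ from $x$ to $y$ in $G$ (with weight $0$ on the diagonal). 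The correctness of the scheme rests on two facts, both immediate from the decomposition above: $\mathrm{dist}_{\mathcal{G}}(x,y)=\mathrm{dist}_{G}(x,y)$ for all $x,y\in B$ (a shortest $G$-path between two blockers is a concatenation of $\mathcal{G}$-arcs, and every $\mathcal{G}$-path is realized in $G$), and for all $u,w\in V$ with $\mathrm{dist}_{G}(u,w)\le\bigtriangleup$,
\[
\mathrm{dist}_{G}(u,w)\;=\;\min\Bigl\{\,d_h(u,w)\,,\ \min_{x,y\in B}\bigl(d_h(u,x)+\mathrm{dist}_{\mathcal{G}}(x,y)+d_h(y,w)\bigr)\Bigr\}.
\]

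The algorithm then runs in four stages. First, construct $B$ deterministically. Second, invoke Algorithm~\ref{alg2} for $(h,k)$-SSP from the $k$ given sources (taking $k=n$ for part~(i)) and also for $(h,|B|)$-SSP with source set $B$; by Theorem~\ref{thm:alg2}(i) these cost $2\sqrt{\bigtriangleup k h}+k+h$ and $2\sqrt{\bigtriangleup |B| h}+|B|+h$ rounds, after which each vertex $v$ knows $d_h(s,v)$ for every source $s$ and $d_h(x,v)$ for every $x\in B$ (so the arcs of $\mathcal{G}$ are known, stored at their heads, and every vertex already holds the column of $h$-hop distances into it from all of $B$). Third, by pipelined broadcast over the directed network (which we may assume strongly connected), disseminate to all vertices the $O(|B|^2)$ arc weights of $\mathcal{G}$ and the $O(k|B|)$ values $\{d_h(s,x):s\in S,\ x\in B\}$, in $O(|B|^2+k|B|+n)$ rounds. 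Fourth, every vertex computes APSP on $\mathcal{G}$ locally (Floyd--Warshall on $|B|$ nodes) and then evaluates the displayed formula, using its locally stored values $d_h(y,v)$, to output $\mathrm{dist}_{G}(s,v)$ for every source $s$; the $h$-hop shortest-path trees from the second stage combined with the skeleton yield the corresponding tree for each source, with ties broken by hop count and then by ID as guaranteed by Algorithm~\ref{alg2}. (If instead each source is required to hold all of its distances, an additional $O(kn+n)$-round broadcast does this without changing the asymptotics.)

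Summing the four stages and substituting $|B|=\tilde{O}(n/h)$ and $k\le n$ gives a round count of the form $\tilde{O}\!\left(\sqrt{\bigtriangleup k h}+\sqrt{\bigtriangleup n}+n^2/h^2+kn/h+n\right)$ plus the cost of the first stage. For APSP the governing trade-off is between $\sqrt{\bigtriangleup n h}$, which increases with $h$, and $n^2/h$, which decreases with $h$; balancing them with $h=\Theta\!\left((n^3/\bigtriangleup)^{1/3}\right)$ yields $\tilde{O}(n\bigtriangleup^{1/3})$ rounds and hence part~(i). For $k$-SSP one balances $\sqrt{\bigtriangleup k h}$ against $kn/h$ by taking $h=\Theta\!\left((kn^2/\bigtriangleup)^{1/3}\right)$, which gives $\tilde{O}\!\left((\bigtriangleup k^2 n)^{1/3}+(\bigtriangleup k n^2)^{1/3}\right)=\tilde{O}\!\left((\bigtriangleup k n^2)^{1/3}\right)$ and hence part~(ii); the logarithmic factors enter through $|B|$, and throughout one works in the regime (shortest-path distances at most about $n^{3/2}$, as in the abstract) in which these bounds improve on both Theorem~\ref{thm:alg2} and the $\tilde{O}(n^{3/2})$ bound of~\cite{ARKP18}, so that the auxiliary terms such as $\sqrt{\bigtriangleup n}$ and $n^2/h^2$ are absorbed.

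The step I expect to be the main obstacle is the first stage together with the efficient computation of the blocker-to-everything distances $\{d_h(x,v)\}$ that define $\mathcal{G}$: a \emph{deterministic} construction of a blocker set of size $\tilde{O}(n/h)$ that certifies the decomposition property for \emph{all} long shortest paths (not merely those rooted at a source) must be carried out within the stated round budget, and when $\bigtriangleup$ is relatively large the plain $(h,|B|)$-SSP call of the second stage, of cost $\tilde{\Theta}(\sqrt{\bigtriangleup n})$, can exceed the target. Handling this -- plausibly by building $B$ and its distances through an $O(\log n)$-level recursion on successively smaller skeleton graphs, and threading the directed-communication and zero-weight guarantees of Algorithm~\ref{alg2} through that recursion -- is the technical core of~\cite{AR18b}.
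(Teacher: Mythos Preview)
The paper does not prove Theorem~\ref{thm:algkSSP}. It is explicitly a result from the companion paper~\cite{AR18b}: the surrounding text says ``In a companion paper~\cite{AR18b}, we build on this pipelined algorithm to present several improved results, which we summarize here,'' and the theorem itself carries the citation~\cite{AR18b}. So there is no ``paper's own proof'' to compare your proposal against; the only hint the present paper gives is the one-line description that the bounds are obtained ``by combining our pipelined Algorithm~\ref{alg2} with a modified version of the APSP algorithm in~\cite{ARKP18}.''

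Your outline is consistent with that one-line description: use Algorithm~\ref{alg2} as the $h$-hop engine, use a deterministic blocker set of size $\tilde{O}(n/h)$ to handle long paths via a skeleton graph, broadcast the skeleton, and balance~$h$. You also correctly flag the real gap yourself: the cost of deterministically constructing the blocker set and of computing the blocker-to-all $h$-hop distances is not accounted for, and the bare $(h,|B|)$-SSP call costs $\tilde{\Theta}(\sqrt{\bigtriangleup n})$, which can dominate the target $(\bigtriangleup k n^2)^{1/3}$ (e.g.\ when $k$ is small and $\bigtriangleup$ is close to $n^{3/2}$). A second, smaller issue in your arithmetic: with $h=\Theta\bigl((kn^2/\bigtriangleup)^{1/3}\bigr)$ both $\sqrt{\bigtriangleup k h}$ and $kn/h$ evaluate to $(\bigtriangleup k^2 n)^{1/3}$, not $(\bigtriangleup k n^2)^{1/3}$, and the ``auxiliary'' term $n^2/h^2=(n\bigtriangleup/k)^{2/3}$ is not in general absorbed by $(\bigtriangleup k n^2)^{1/3}$ (it requires $\bigtriangleup\le k^3$). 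So the balancing needs more care, and whatever recursion or modified blocker procedure~\cite{AR18b} uses is precisely what is missing here --- as you yourself note in your final paragraph.
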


The results in Theorem~\ref{thm:algkSSPEdgeBound} and \ref{thm:algkSSP}  improve on the 
$\tilde{O}(n^{3/2})$ deterministic APSP bound of Agarwal et al.~\cite{ARKP18} for significant ranges of values 
for both $\lambda$ and $\Delta$, as stated below.

   \begin{corollary} \label{cor:lambda}
   \cite{AR18b}
Let $G=(V,E)$ be a directed or undirected edge-weighted graph with non-negative edge weights 
(and zero-weight edges allowed). 
 The following deterministic bounds hold for 
 the \congest{} model for $1\geq \epsilon\geq 0$. \\
 (i) If the edge weights are bounded by $\lambda = n^{1-\epsilon}$, then APSP can be computed in $O(n^{3/2 - \epsilon/4}\log^{1/2} n)$ rounds.\\
 (ii) For shortest path distances bounded by $\Delta = n^{3/2-\epsilon}$, APSP can be computed in $O(n^{3/2 - \epsilon/3} \log^{2/3} n)$ rounds.
 \end{corollary}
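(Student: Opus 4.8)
The plan is to derive both bounds purely by substituting the stated values of $\lambda$ and $\Delta$ into the two theorems quoted from the companion paper and then simplifying exponents; no new algorithmic idea is needed beyond Theorems~\ref{thm:algkSSPEdgeBound} and~\ref{thm:algkSSP}, and both parts reduce to a one-line calculation.

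For part (i), I would invoke the APSP bound $O(\lambda^{1/4}\cdot n^{5/4}\log^{1/2} n)$ of Theorem~\ref{thm:algkSSPEdgeBound}(i), which holds whenever all edge weights are non-negative integers bounded by $\lambda$. With $\lambda = n^{1-\epsilon}$ we have $\lambda^{1/4}=n^{(1-\epsilon)/4}$, so the round complexity becomes $O(n^{(1-\epsilon)/4+5/4}\log^{1/2} n)=O(n^{3/2-\epsilon/4}\log^{1/2} n)$. For $\epsilon\in[0,1]$ this exponent lies in $[5/4,\,3/2]$, so the resulting bound is never worse than the $\tilde O(n^{3/2})$ bound of~\cite{ARKP18} and is polynomially better for any fixed $\epsilon>0$, which is the sense in which it is an improvement.

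For part (ii), I would similarly invoke the APSP bound $O(n\,(\Delta\log^2 n)^{1/3})$ of Theorem~\ref{thm:algkSSP}(i), valid when all shortest path distances are bounded by $\Delta$. Substituting $\Delta=n^{3/2-\epsilon}$ gives $\Delta^{1/3}=n^{1/2-\epsilon/3}$ and hence round complexity $O(n^{1+1/2-\epsilon/3}\log^{2/3} n)=O(n^{3/2-\epsilon/3}\log^{2/3} n)$, with the exponent lying in $[7/6,\,3/2]$ for $\epsilon\in[0,1]$. (One could instead route part (i) through Theorem~\ref{thm:algkSSP} by noting that weights bounded by $\lambda$ force $\Delta\le n\lambda$, but a quick comparison of exponents shows the edge-weight-parameterized bound of Theorem~\ref{thm:algkSSPEdgeBound} is the sharper of the two, so I would use it directly.)

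The only point that needs any care is the implicit assumption that the algorithm may use the hypothesized bound ($\lambda$ in part (i), $\Delta$ in part (ii)); since Theorems~\ref{thm:algkSSPEdgeBound} and~\ref{thm:algkSSP} are already stated for an arbitrary such bound, nothing further is required. So the genuine difficulty here is entirely upstream: it lies in establishing Theorems~\ref{thm:algkSSPEdgeBound} and~\ref{thm:algkSSP} themselves, which~\cite{AR18b} does by combining Algorithm~\ref{alg2} of this paper with a modified version of the APSP algorithm of~\cite{ARKP18}. Given those, the corollary is immediate.
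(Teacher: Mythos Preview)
Your proposal is correct and is precisely the intended derivation: the paper itself does not prove Corollary~\ref{cor:lambda} but merely cites it from~\cite{AR18b}, and the result follows immediately by substituting $\lambda=n^{1-\epsilon}$ into Theorem~\ref{thm:algkSSPEdgeBound}(i) and $\Delta=n^{3/2-\epsilon}$ into Theorem~\ref{thm:algkSSP}(i), exactly as you do. Your exponent arithmetic is correct in both parts, and your remark that the real work lies in establishing Theorems~\ref{thm:algkSSPEdgeBound} and~\ref{thm:algkSSP} is apt.
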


 The corresponding bounds for the weighted $k$-SSP problem are: $O(n^{5/4-\epsilon/4}k^{1/4}\log^{1/2} n)$ (when $\lambda = n^{1-\epsilon}$) and $O(n^{7/6 - \epsilon/3}k^{1/3}\log^{2/3} n)$ (when $\Delta = n^{3/2-\epsilon}$).
 Note that the result in $(i)$ is independent of the value of $\Delta$ (depends only on $\lambda$)
 and the result in $(ii)$ is independent of the value of $\lambda$ (depends only on $\Delta$).

  Our pipelined technique can be adapted to give simpler methods for some procedures in the randomized 
 distributed  weighted APSP algorithms in Huang et al.~\cite{HNS17}. In~\cite{AR18b}  we present simple deterministic algorithms that
 match the congest and dilation bounds in~\cite{HNS17} for two of the three procedures used there: the {\it short-range} and {\it short-range-extension} algorithms. Those simplified algorithms are both obtained using a streamlined single-source version of our pipelined APSP algorithm (Algorithm~\ref{alg2}).
  Several other results for distributed computation of shortest paths are presented in~\cite{AR18b}.
 
 After discussing the {\sc Congest} model and related reults, the rest of this paper will present
 our new pipelined algorithm.

\subsection{Congest Model}	\label{sec:congest}

In the {\sc Congest} model,
there are $n$ independent processors 
interconnected in a network by bounded-bandwidth links.
We refer to these processors as nodes and
the links as edges.
This network is modeled by graph $G = (V,E)$
where $V$ refers to the set of processors and
$E$ refers to the set of links between the processors.
Here $|V| = n$ and $|E| = m$.

Each node is assigned a unique ID  
between 1 and $poly(n)$ and
has infinite computational power.
Each node has limited topological knowledge 
and only knows about its incident edges.
For the weighted APSP problem we consider, 
 each edge 
has a positive
or zero
  integer weight that can be represented with $B= O(\log n)$ bits.
Also if the edges are directed, 
the corresponding communication channels are bidirectional
and hence the communication network can be represented 
by the underlying undirected graph $U_G$ of $G$
(this is also considered in~\cite{HNS17,PR18,GL18}). It turns out that our basic pipelined algorithm does not need this
feature though our faster algorithm does.

The computation proceeds in rounds. In each round each processor can send a message of 
size $O(\log n)$ along edges incident to it, and it receives the messages sent to it in the previous
round. The model allows a node to send different message along different edges though we do not
need this feature in our algorithm.
The performance of an algorithm in the \congest{} model is measured by its
round complexity, which is the worst-case
number of rounds of distributed communication.
\vspace{-.03in}

\subsection{Related Work}\label{sec:related}

\noindent
{\bf Weighted APSP.}
The current best bound for the weighted APSP problem is due to the randomized algorithm of Huang et al.~\cite{HNS17}
that runs in $\tilde{O}(n^{5/4})$ rounds.
This algorithm works for graphs with polynomially bounded integer edge weights (including
zero-weight edges), and the result holds with w.h.p. in $n$.
For graphs with arbitrary edge weights, the recent result of Agarwal et al.~\cite{ARKP18} gives a deterministic APSP
algorithm that runs in $\tilde{O}(n^{3/2})$ rounds.
This is the current best bound (both deterministic and randomized) for graphs with arbitrary edge weights as well as
 the best deterministic bound for graphs with integer edge weights.
 
 In a companion paper~\cite{AR18b} we build on the pipelined algorithm we present here to 
obtain an algorithm for non-negative integer edge-weights (including zero-weighted
  edges) that runs in $\tilde{O}(n \bigtriangleup^{1/3})$ rounds 
where the shortest path distances are at most $\bigtriangleup$
and in $\tilde{O}(n^{5/4}\lambda^{1/4})$ rounds when the edge weights are bounded by $\lambda$.
This result improves on the $\tilde{O}(n^{3/2})$ deterministic APSP bound of Agarwal et al.~\cite{ARKP18} 
when either edge weights are at most $n^{1-\epsilon}$ or
 shortest path distances are at most $n^{3/2 - \epsilon}$, for any $\epsilon > 0$.
 In~\cite{AR18b} we also give an improved randomized algorithm for APSP in graphs with arbitrary edge weights that
 runs in $\tilde{O}(n^{4/3})$ rounds, w.h.p. in $n$.

\vspace{.03in}

\noindent
{\bf  Weighted $k$-SSP.}
 The current best bound for the weighted $k$-SSP problem is due to the Huang et al's~\cite{HNS17} randomized
 algorithm that runs in $\tilde{O}(n^{3/4}\cdot k^{1/2} + n)$ rounds.
 This algorithm is also randomized and only works for graphs with integer edge weights.
 The deterministic APSP algorithm in~\cite{ARKP18} can be shown to give an
$O(n \cdot \sqrt{k \log n})$ round deterministic algorithm for $k$-SSP.
In this paper, we present a deterministic algorithm for non-negative (including zero) integer edge-weighted graphs
that runs in $\tilde{O}((\bigtriangleup \cdot n^2 \cdot k)^{1/3})$ rounds where the shortest path distances are at
 most $\bigtriangleup$ and in $\tilde{O}((\lambda k)^{1/4}n)$ rounds when the edge weights are bounded by
 $\lambda$.

\vspace{-.03in}
\section{Overview of the $O(n\sqrt{\bigtriangleup})$ Pipelined Algorithm for APSP}\label{sec:overview}

\vspace{-.025in}

 The starting point for our weighted APSP algorithm is the distributed algorithm for
 {\it unweighted} APSP in~\cite{PR18}, which is a streamlined variant of an earlier
 APSP algorithm~\cite{LP13}.
 This unweighted APSP algorithm is very simple: each source initiates its distributed BFS in round 1.  Each node $v$
 retains the best (i.e., shortest) distance estimate it has received for each source, and  stores 
 these estimates in sorted order (breaking ties by source id). Let $d(s)$ (or $d_v(s)$) denote the
 shortest distance estimate for source $s$ at $v$ and let $pos(s)$ be its position in
 sorted order
($pos(s) \geq 1$).
   In a general round $r$, node $v$
 sends out a shortest distance estimate $d(s)$ if $r= d(s)  + pos(s)$. Since $d(s)$
 is nondecreasing and $pos(s)$ is increasing, there will be at most one $d(s)$ at $v$ that
 can satisfy this condition. It is shown 
 in~\cite{PR18} that  if the current best  distance estimate $d(s)$ for a source $s$ reaches $v$  in  round $r$ then
 $r <d(s) + pos(s)$.  Since $d(s) <n$ for any source
 $s$ and $pos(s)$ is at most $n$, shortest path values for all sources arrive at any given node
 $v$ in less than $2n$ rounds. 
 
 In the class of graphs that we deal with, $d(s)$ is at most $\bigtriangleup$ for all $s,v$,  
 and it appears plausible that the above pipelining method would apply here as well.
 Unfortunately, this does not hold 
 since we allow zero weight edges in the  graph.
  The key to
 the guarantee that a $d(s)$ value arrives at $v$ before round $d(s) + pos(s)$ in the 
 unweighted case in~\cite{PR18} is that
 the predecessor $y$ that sent its $d_y(s)$ value to $v$ must have had
 $d_y(s) = d_v(s) -1$. (Recall that in the unweighted case, $d_y(s)$ is simply the hop-length of the path taken from $s$ to $y$.)
 If we have zero-weight edges this guarantee no longer holds for the weighted path length, and it
 appears that the key property of the unweighted pipelining methodogy no longer applies. Since edge weights larger than 1 are also possible
 (as long as no shortest path distance exceeds $\bigtriangleup$) we also have the property that the hop length of a path can be either greater than or less
 than its weighted distance.

\subsection{Our $(h,k)$-SSP algorithm}

 Algorithm~\ref{alg2} in the next section
 is our pipelined algorithm for a
directed 
   graph $G=(V,E)$ with non-negative
    edge-weights. The input is
 $G$, together with the subset $S$ of $k$ vertices  for which we need to compute $h$-hop shortest path trees. An innovative feature
 of this algorithm is that the key $\kappa$ it uses for a path is not its weighted distance, but a function of {\it both} its hop length $l$
and its weighted distance $d$.  More specifically, $\kappa = d \cdot \gamma + l$, where 
$\gamma = \sqrt{kh/\Delta}$. This allows 
the key to inherit some of the properties
from the algorithms in~\cite{LP13,PR18} through the fact that the hop length is part of
$\kappa$'s value, while also retaining the weighted distance which is the actual 
value that needs to be computed.

The new key $\kappa$ by itself is not sufficient to adapt the algorithm for unweighted
APSP in~\cite{PR18} to the weighted case. In fact, the use of $\kappa$ can
complicate the computation since one can have two paths from $s$ to $v$, with
weighted distances $d_1 < d_2$, and yet for the associated keys one could have
$\kappa_1 > \kappa_2$ (because the path with the smaller weight can have a larger
hop-length).  Our algorithm handles this with another unusual feature:
 it may maintain several (though not all) of the key values it receives, and may also
send out several key values, even some that it knows cannot correspond to a shortest distance. These features are incorporated into
a carefully tailored algorithm  that terminates in $O(\sqrt{\bigtriangleup kh})$ rounds with all $h$-hop shortest path distances from the $k$ sources computed.

 It is not difficult to show that eventually every  shortest path distance key arrives at $v$ for each source from which $v$ is
 reachable when Algorithm~\ref{alg2} is executed. 
 In order to establish the bound on the number of rounds, we show  that our pipelined algorithm maintains two important invariants:
 
 \boi
 \item []{\bf Invariant 1:} If an entry $Z$ is added to $list_v$ in round $r$, then  
 $r < \lceil Z.\kappa + pos(Z) \rceil$, where $Z.\kappa$ is $Z's$ key value.
 
  \item[] {\bf Invariant 2:} The number of entries for a given source $s$ at $list_v$ is at most $\sqrt{\bigtriangleup h/k} + 1$.
  
  \eoi
  
Invariant 1 is the natural generalization of the unweighted algorithms~\cite{LP13,PR18}
 for the key $\kappa$ that we use. On the other hand, to the best of our knowledge, Invariant 2 
 has not been used before, nor has the notion of storing multiple 
 paths or entries
 for the same source at a
 given node.
By Invariant 2, the number of entries in any list is at most $\sqrt{\bigtriangleup kh} +k$, so 
$pos(Z) \leq \sqrt{\bigtriangleup kh} +k$ for every list at every round. Since the value of any $\kappa$ is at most 
$\bigtriangleup \cdot\gamma + h$, by Invariant 1 every  entry is received by round 
$2 \sqrt{\bigtriangleup kh} + k + h$. 
We give the details in the next section, starting with  a step-by-step description
 of Algorithm~\ref{alg2} followed by its analysis.

  \section{The Pipelined $(h,k)$-SSP Algorithm}	\label{sec:h-hopk-SSP}

 We now describe Algorithm~\ref{alg2}.
 Recall that the key value we use for a path $\pi$ is $\kappa = d \cdot \gamma + l$, 
 where $\gamma = \sqrt{kh/\Delta}$, $d$ is the weighted path length, 
 and $l$ is the hop-length of $\pi$.
 At each node $v$ our algorithm maintains a list, $list_v$, of the  
 entries and associated data
 it has retained. 
Each element $Z$ on $list_v$ is of the form $Z= (\kappa, d, l, x)$, where $x$ is the source vertex for  the path corresponding to
$\kappa$, $d$, and $l$. 
The elements on $list_v$ are ordered by
key value $\kappa$, with ties  first resolved by the value of $d$, and then by the label of the source vertex.  
 We use $Z.\nu$ to denote
the number of keys for source $x$ stored on $list_v$ at or
below $Z$. 
The position of an element $Z$ in $list_v$ is given by $pos(Z)$, which gives the number of elements at
or below $Z$ on $list_v$. 
If the vertex $v$ and the round $r$ are relevant to the discussion we will use
the notation $pos_v^r(Z)$, but we will remove either the subscript or the superscript (or both) if they
are clear from the context. 
We also have  a flag
$Z.$\currentSD which is set if $Z$ has the smallest $(d, \kappa)$ value among all entries for source $x$ (so $d$ is the shortest weighted distance from
$s$ to $v$ among all keys for $x$ on $list_v$). A summary of our notation is
in Table~\ref{table1}.

\begin{table*}
\scriptsize
\centering
\noindent
\caption{Notations} \label{table1} \vspace{-.02in}
\begin{tabular}{| c | l |}
\hline
\multicolumn{2}{|c|}{\sc Global Parameters:} \\
\hline
$S$ & set of sources\\
\hline
$k$ & number of sources, or $|S|$ \\
\hline
$h$ & maximum number of hops in a shortest path\\
\hline
$\bigtriangleup$ & maximum weighted distance of a shortest path \\
\hline
$n$ & number of nodes \\
\hline
$\gamma$ & parameter equal to $\sqrt{hk/\Delta}$\\
\hline
\hline
\multicolumn{2}{|c|}{Local Variables at node $v$:}\\
\hline
$d_x^*$ & current shortest path distance from $x$ to $v$; same as $d^*_{x,v}$ \\
\hline
$list_v$ & list at $v$ for storing the SP and non-SP entries \\
\hline
\end{tabular}
\quad
\begin{tabular}{| c | l |}
\hline
\multicolumn{2}{|c|}{Variables/Parameters for entry $Z = (\kappa, d, l, x)$ in $list_v$:}	\\
\hline
$\kappa$ & key for $Z$; $\kappa = d\cdot \gamma + h$ \\
\hline
$d$ & weight (distance) of the path associated with this entry	\\
\hline
$l$ & hop-length of the path associated with this entry		\\
\hline
$x$ & start node (i.e. source) of the path associated with this entry	\\	
\hline
$p$ & parent node of $v$ on the path associated with this entry	\\	
\hline
$\nu$ & number of entries for source $x$ at or below $Z$ in $list_v$ (not stored explicitly)	\\
\hline
\currentSD & flag to indicate if $Z$ is the current SP entry for source $x$	\\
\hline
$pos$ & position of $Z$ in $list_v$ in a round $r$; same as $pos^r$, $pos^r_v$	\\
\hline
$SP$ & shortest path \\
\hline
\end{tabular}
\end{table*}

 Initially, when round $r=0$, $list_v$ is empty unless $v$ is in the source
 set $S$. Each source vertex $x\in S$ places an element $(0,0,0,x)$ on its $list_x$ to indicate a path of
 weight 0 and  hop length 0 from $x$ to $x$, and
 $Z.$\currentSD is set to $true$.
 In Step~\ref{alg2:initd} of the Initialization round $0$, 
  node $v$
   initializes the distance from 
every source to $\infty$.
In Step~\ref{alg2:initAdd}  every source
 vertex initializes the distance from itself to $0$ and 
 adds the corresponding entry in its list.
 There are no Sends in round $0$.

  In a general round $r$, in Step~\ref{alg2:sendStart} of Algorithm~\ref{alg2},
 $v$ checks if $list_v$ contains an entry $Z$ with $\lceil Z.\kappa + pos_v(Z) \rceil = r$. If there is such an entry $Z$
 then $v$ sends $Z$ to its neighbors, along with $Z.\nu$ and $Z.flag$-$d^*$ in Step~\ref{alg2:sendEnd}.
Steps~\ref{alg2:receiveStart}-\ref{alg2:receiveEnd}
 describe the steps taken at $v$ after receiving a 
 set of incoming messages $I$ from its neighbors.
 In Step~\ref{alg2:Z} an entry $Z$ is created from an incoming message $M$, updated to reflect the $d$ and $l$ values at $v$.
 Step~\ref{alg2:checkZ} checks if $Z$ has
 a shorter distance than the current shortest path
 entry, $Z^*$, at $v$, or a shorter hop-length (if the distance is the same),
 or a  
parent with smaller ID (if both distance  and hop-length are same).
 And if so, then $Z$ is marked as SP in Step~\ref{alg2:setDummy}
 and is then inserted in $list_v$ in Step~\ref{alg2:insert1}.
 Otherwise, if $Z$ is a \dummy it is  inserted into $list_v$ in Step~\ref{alg2:checkIfEnough} only
 if the number of entries on $list_v$  for source $x$ with key $< Z.\kappa$ in $list_v$ is less than $Z^-.\nu$. This is the rule
 that decides if a received entry that is not the SP entry is inserted into $list_v$. 

 Steps~\ref{alg2:addZ}-\ref{alg2:remove} of procedure {\sc Insert}
 perform the addition of  a new entry $Z$ to $list_v$.
 In Step~\ref{alg2:addZ} $Z$ is inserted in $list_v$
 in the sorted order of $(\kappa,d,x)$. The algorithm then moves on to remove an existing entry for source $x$ 
 on $list_v$ if the condition  in Step~\ref{alg2:checkIfExists} holds.
 This condition 
checks if there is a \dummy entry above $Z$ in $list_v$.
If so then  the closest \dummy entry above $Z$ is removed
in Steps~\ref{alg2:findZ'}-\ref{alg2:remove}. 

\vspace{-.05in}

\begin{algorithm}[H]
\scriptsize
\caption*{\scriptsize {\sc Initialization:} Initialization procedure for Algorithm~\ref{alg2} at node $v$}
Input: set of sources $S$
\begin{algorithmic}[1]
\State {\bf for each} $x \in S$ {\bf do}   $d_x^* \leftarrow \infty$	\label{alg2:initd}
\State {\bf if} $v \in S$ {\bf then}  $d_v^* \leftarrow 0$; add an entry $Z = (0,0,0,v)$ to $list_v$; $Z.$\currentSD$\leftarrow true$ \vspace{-.05in} \label{alg2:initAdd}
\end{algorithmic}  \label{alg2:init}
\end{algorithm}

\vspace{-.2in}
\begin{algorithm}[H]
\scriptsize
\caption{\scriptsize Pipelined $(h,k)$-SSP algorithm at node $v$ for round $r$}
Input: A set of sources $S$
\begin{algorithmic}[1]
\State {\bf send [Steps~\ref{alg2:sendStart}-\ref{alg2:sendEnd}]:} {\bf if} there is an entry $Z$ with $\lceil Z.\kappa + pos^r_v (Z) \rceil = r$	\label{alg2:sendStart}
\State  \hspace{.1in} {\bf then} compute $Z.\nu$ and form the message $M=\langle Z,$ $Z.$\currentSD$, Z.\nu \rangle$ and send $M$ to all neighbors \vspace{0.05in}	\label{alg2:sendEnd}
\State {\bf receive [Steps~\ref{alg2:receiveStart}-\ref{alg2:receiveEnd}]:} let $I$ be the set of incoming messages	\label{alg2:receiveStart}
\For{{\bf each} $M \in I$}
	\State let $M=(Z^{-}=(\kappa^{-}, d^{-}, l^{-}, x), Z^-.$\currentSD$, Z^-.\nu)$ and let the sender be $y$.
	\State $\kappa \leftarrow \kappa^{-} + w (y,v)\cdot \gamma + 1$; $d \la d^{-} + w (y,v)$; $l \la l^{-} + 1$	\label{alg2:updateKey}
	\State $Z \leftarrow (\kappa, d, l, x)$; $Z.$\currentSD$\leftarrow false$; $Z.p \la y$     \label{alg2:Z}     \hspace{.1in} ($Z$ may be added to $list_v$ in Step ~\ref{alg2:insert1} or \ref{alg2:checkIfEnough})
	\State let $Z^*$ be the entry for $x$ in $list_v$ such that $Z^*.$\currentSD$ = true$, if such an entry exists (otherwise $d_x^* = \infty$)	\label{alg2:pickZ*}
	\If{$Z^-.$\currentSD$ = true$ and $l \leq h$ and $( \left( d  < d_x^* \right)$ or $( d = d_x^*$ and $Z.\kappa < Z^*.\kappa) \text{ or } \left(d = d^*_x \text{ and } Z.\kappa = Z^*.\kappa \text{ and } Z.p < Z^*.p \right)) $}	\label{alg2:checkZ}
		\State $d_x^* \leftarrow d$; $Z.$\currentSD$ \leftarrow true$;  $Z^*.$\currentSD$ \leftarrow false$ (if $Z^*$ exists)	\label{alg2:setDummy}
		\State  {\sc Insert}($Z$)   \label{alg2:insert1}
	\Else
		\State {\bf if} there are less than $Z^-.\nu$ entries for $x$ with $key \leq Z.\kappa$ {\bf then} {\sc Insert}($Z$)	 \vspace{-.05in} \label{alg2:checkIfEnough}
	\EndIf
\EndFor	\label{alg2:receiveEnd}
\end{algorithmic} \label{alg2}
\end{algorithm}

\vspace{-.2in}
\begin{algorithm}[H]
\scriptsize
\caption*{\scriptsize {\sc Insert}($Z$): Procedure for adding $Z$ to $list_v$}
\begin{algorithmic}[1]
\State insert $Z$ in $list_v$ in sorted order of $(\kappa, d,x)$ \label{alg2:addZ}
\If{$\exists$  an entry $Z^{'}$ for $x$ in $list_v$ such that $Z^{'}.$\currentSD$ = false$ and $pos(Z^{'}) > pos(Z)$}	\label{alg2:checkIfExists}
	\State find $Z^{'}$ with smallest $pos(Z^{'})$ such that $pos(Z^{'}) > pos(Z)$ and  $Z^{'}.$\currentSD$ = false$		\label{alg2:findZ'}
	\State remove $Z^{'}$ from $list_v$	\label{alg2:remove}
\EndIf
\end{algorithmic}  \label{alg2:insert}
\end{algorithm}

\vspace{-.1in}

Algorithm~\ref{alg2} performs these steps in successive rounds. We next analyze it for correctness and we
also show that it terminates with all shortest distances computed before round 
$r =\lceil 2\sqrt{\bigtriangleup kh} +k +h \rceil$.
  
  \vspace{-.05in}
  
\subsection{Correctness of Algorithm~\ref{alg2}}

We now provide a sketch for correctness of Algorithm~\ref{alg2}.
The complete proofs are in 
Appendix~\ref{sec:proofsAlg2}. 
The initial Observations and Lemmas given below establish
useful properties of an entry $Z$ in a $list_v$ and of $pos_v^r(Z)$ and
its relation to $pos_y^r (Z^-)$. We then present the key lemmas.
In Lemma~\ref{lemma:map}, we show that the collection of entries for a given source $x$ in $list_v$
can be mapped into $(d,l)$ pairs with non-negative $l$ values
such that $d=d^*$ for the shortest path entry, and the $d$ values
for all other entries are distinct and larger than $d^*$. (It turns out that we cannot simply use the
$d$ values already present in $Z$'s entries for this mapping since we could have
 two different entries for source $x$
on $list_v$, $Z_1$ and $Z_2$, that have the same $d$ value. )
Once we have Lemma~\ref{lemma:map} we
are able to bound the number of entries for a given source at $list_v$ by 
$\frac{h}{\gamma} + 1$
in Lemma~\ref{lemma:count}, and this establishes Invariant 2 (which is stated in Section~\ref{sec:overview}).
Lemma~\ref{lemma:sendBound} establishes Invariant 1.
In Lemma~\ref{lemma:sp} we establish that
all shortest path values reach node $v$. With these results in
hand, the final Lemma~\ref{lemma:hhopkSSP} for the round bound for 
computing $(h,k)$-SSP with shortest path distances at most $\bigtriangleup$
is readily
established, which then gives Theorem~\ref{thm:alg2}.

\begin{observation}	\label{obs:posremove}
Let $Z$ be an entry for a source $x \in S$ added to $list_v$ in round $r$. 
Then if $Z$ is removed 
from $list_v$
in a round $r' \geq r$, 
it was replaced by another entry for $x$, $Z'$,
such that $pos_v^{r'} (Z) > pos_v^{r'} (Z')$ and $Z.\kappa \geq Z'.\kappa$.
\end{observation}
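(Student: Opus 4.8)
The plan is to trace through exactly which lines of Algorithm~\ref{alg2} can cause an entry $Z$ for source $x$ to be removed from $list_v$, and to show that each such removal is "charged" to a freshly inserted entry $Z'$ for the same source that sits strictly below $Z$ in the list with a key no larger than $Z.\kappa$. The only place entries are ever deleted is Step~\ref{alg2:remove} inside procedure {\sc Insert}, so a removal of $Z$ in round $r'$ must occur during a call {\sc Insert}($Z'$) triggered by a message received in round $r'$, where $Z'$ is an entry for some source. First I would argue that $Z'$ must be an entry for the same source $x$: Step~\ref{alg2:checkIfExists} only looks for a \dummy{} entry $Z'$ (in the sense of the condition "an entry $Z^{'}$ for $x$ ... with $Z'.$\currentSD$=false$") — wait, re-reading, the condition in Step~\ref{alg2:checkIfExists} quantifies over entries \emph{for $x$}, where $x$ is the source of the just-inserted $Z$; and the removed entry $Z'$ chosen in Steps~\ref{alg2:findZ'}-\ref{alg2:remove} is therefore also an entry for $x$. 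Hence any removed entry is removed in favor of a newly inserted same-source entry, which is the first assertion.

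Next I would establish the positional and key inequalities. The entry being inserted in the call {\sc Insert}($Z'$) is $Z'$; Step~\ref{alg2:addZ} places $Z'$ in sorted order of $(\kappa,d,x)$, and then Step~\ref{alg2:checkIfExists} picks the removed entry to be the \dummy{} entry $Z$ with \emph{smallest} $pos$ among those with $pos > pos(Z')$. So at that moment (call it the post-insertion, pre-removal configuration in round $r'$) we have $pos(Z) > pos(Z')$, i.e. $Z$ sits strictly above $Z'$. Since entries are ordered by $\kappa$ (ties by $d$, then source label) and $Z$ is above $Z'$, we get $Z.\kappa \geq Z'.\kappa$ (with equality only in the tie cases). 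It remains to check that these inequalities, stated at the instant of removal, are the ones claimed as $pos_v^{r'}(Z) > pos_v^{r'}(Z')$ and $Z.\kappa \geq Z'.\kappa$ in the round $r'$ of removal — and here I would note that $pos_v^{r'}$ should be interpreted as the position in the list \emph{as it stands when the Send step of round $r'$ examines it}, or as configured right after the insertion; the natural reading consistent with the rest of the paper is that the removal and insertion happen together, so $pos_v^{r'}(Z')$ refers to $Z'$'s position in the list containing both $Z$ and $Z'$, which is exactly the configuration Step~\ref{alg2:checkIfExists} examines. Under that reading the two inequalities are immediate from the preceding paragraph.

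The remaining subtlety is that the observation is phrased for \emph{any} round $r' \ge r$ in which $Z$ is removed, and implicitly $Z$ must still be present at the start of round $r'$ for it to be removed then; I would simply note that this is guaranteed because $Z$ was added in round $r \le r'$ and, inductively, was not removed in any intermediate round (otherwise it would not be available to be removed in round $r'$). So there is nothing extra to prove there — the statement is about the single removal event. I expect the main (mild) obstacle to be bookkeeping about \emph{when} $pos_v^{r'}(\cdot)$ is measured: one must be careful that $Z'$ has already been inserted (so that $pos(Z') \ge 1$ and the "strictly below" relation is between two list members) but $Z$ has not yet been deleted. Once that convention is pinned down — and it is the only convention under which Step~\ref{alg2:checkIfExists} even makes sense — the proof is a direct unwinding of the {\sc Insert} procedure with no real computation. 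A closing sentence would remark that the $Z.\kappa \ge Z'.\kappa$ half also records that removals never increase the minimum key stored for $x$, which is what later lemmas (e.g. Lemma~\ref{lemma:map}) will need.
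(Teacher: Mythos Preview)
Your proposal is correct and follows essentially the same approach as the paper: the paper's proof also observes that removals occur only in Step~\ref{alg2:remove} of {\sc Insert}, and that immediately before such a removal an entry $Z'$ with smaller key (hence smaller position, by the sorted order) for the same source was inserted in Step~\ref{alg2:addZ}. Your version is more explicit about the same-source clause and about the timing convention for $pos_v^{r'}$, points the paper glosses over, but the underlying argument is identical.
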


\begin{lemma}	\label{lemma:pos}
Let $Z$ be an entry in $list_v$. Then $pos_v^{r'} (Z) \geq pos_v^r (Z)$ for all rounds $r' > r$, 
for which $Z$ exists in $v$'s list.
\end{lemma}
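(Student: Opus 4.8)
The plan is to show that the position of a fixed entry $Z$ in $list_v$ is monotone nondecreasing over the rounds in which $Z$ remains on the list. Since $pos_v^r(Z)$ counts the number of entries at or below $Z$ in $list_v$ (in the $(\kappa,d,x)$-order), it suffices to argue that no operation performed by Algorithm~\ref{alg2} at $v$ in any single round can decrease this count. The only operations that modify $list_v$ are the calls to {\sc Insert} (Steps~\ref{alg2:insert1} and~\ref{alg2:checkIfEnough}), each of which inserts one entry and then possibly removes one entry via Steps~\ref{alg2:checkIfExists}--\ref{alg2:remove}. So the core of the argument is a local case analysis of a single {\sc Insert}$(W)$ call, keeping track of how it affects the number of entries weakly below $Z$.

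First I would handle the insertion of the new entry $W$: inserting $W$ into $list_v$ either places it at or below $Z$ (which increases $pos_v(Z)$ by one) or strictly above $Z$ (which leaves $pos_v(Z)$ unchanged); in either case it cannot decrease $pos_v(Z)$. Next I would handle the possible removal in Steps~\ref{alg2:findZ'}--\ref{alg2:remove}. If no removal happens, we are done by the previous sentence. If a removal happens, the removed entry $W'$ is, by construction, a \dummy entry for source $x(W)$ with $pos(W') > pos(W)$ chosen to have smallest such position. The key point to extract is that $W'$ lies strictly above $W$ in the list at the moment of removal. I would then split into two subcases according to where $W$ sits relative to $Z$. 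If $W$ was inserted at or below $Z$, then $pos_v(Z)$ went up by one from the insertion, and the subsequent removal of $W'$ brings it down by at most one, so the net change is $\geq 0$. If $W$ was inserted strictly above $Z$, I need to ensure that $W'$ — which lies above $W$, hence also strictly above $Z$ — is itself above $Z$, so that its removal does not affect $pos_v(Z)$; this follows because $pos(W') > pos(W) > pos(Z)$ in the post-insertion list. Either way, $pos_v(Z)$ does not decrease across a single {\sc Insert} call.

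Finally, since every round performs at most one (from each received message, but we can treat each {\sc Insert} call separately and compose) {\sc Insert} operation of the above form, and composing nondecreasing changes keeps the quantity nondecreasing, we conclude $pos_v^{r'}(Z) \geq pos_v^r(Z)$ for all $r' > r$ for which $Z$ is still present in $list_v$. The main obstacle I anticipate is the bookkeeping when multiple messages in the same round $I$ each trigger an {\sc Insert}: one must verify that after each individual insert-then-possibly-remove step the invariant is preserved, and that an entry inserted earlier in the same round is not the one later removed in a way that would drop $pos_v(Z)$ below its start-of-round value — but this is exactly handled by observing that the removed entry is always strictly above the just-inserted one, so the argument applies verbatim to each step in sequence.
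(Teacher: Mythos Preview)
Your proposal is correct and is essentially the same argument as the paper's, just packaged differently: the paper isolates the key fact into Observation~\ref{obs:posremove} (every removal is paired with an insertion at a strictly smaller position) and then proves Lemma~\ref{lemma:pos} by a one-line contradiction, whereas you unfold that observation into an explicit case analysis of a single {\sc Insert}$(W)$ call. Both arguments rest on the same structural property of {\sc Insert}, namely that the removed entry $W'$ satisfies $pos(W')>pos(W)$, so any removal below $Z$ is always compensated by the insertion of $W$ below $Z$.
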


\begin{observation}	\label{obs:remove}
Let $Z$ be an  entry for source $x$
that was added to $list_v$.
If there exists a \dummy entry for $x$ above $Z$ in $list_v$,
then the closest \dummy entry above $Z$ will be removed.
\end{observation}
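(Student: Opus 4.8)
The plan is to argue directly from the pseudocode, by tracking exactly how an entry ever reaches $list_v$. Apart from the single initial entry $(0,0,0,v)$ that a source $v$ places on $list_v$ during the Initialization --- for which the statement is vacuous, since $list_v$ then holds no other entry --- an entry $Z$ for a source $x$ can reach $list_v$ only through a call {\sc Insert}($Z$), which is invoked either from Step~\ref{alg2:insert1} (when $Z$ becomes the new shortest-path entry for $x$, i.e.\ the test of Step~\ref{alg2:checkZ} passes) or from Step~\ref{alg2:checkIfEnough} (when $Z$ is \dummy and the count condition holds). In the first case I would also note that the previous shortest-path entry $Z^*$ for $x$, if any, has already been re-flagged as \dummy in Step~\ref{alg2:setDummy} \emph{before} {\sc Insert}($Z$) runs, so when {\sc Insert}($Z$) inspects $list_v$ it treats $Z^*$ as a \dummy entry, and $Z^*$ may itself be the entry that gets removed.

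Next I would unfold {\sc Insert}($Z$). Step~\ref{alg2:addZ} places $Z$ into $list_v$ in sorted order of $(\kappa,d,x)$, and thereafter $pos(\cdot)$ refers to positions on this updated list. By hypothesis there is, at this point, a \dummy entry for $x$ above $Z$, i.e.\ an entry $Z'$ for $x$ with $Z'.$\currentSD$ = false$ and $pos(Z') > pos(Z)$; hence the guard of Step~\ref{alg2:checkIfExists} is satisfied and Steps~\ref{alg2:findZ'}--\ref{alg2:remove} execute. Step~\ref{alg2:findZ'} selects the entry $Z'$ for $x$ of smallest position subject to $pos(Z') > pos(Z)$ and $Z'.$\currentSD$ = false$, which, since positions on $list_v$ follow the list order, is precisely the closest \dummy entry for $x$ above $Z$; Step~\ref{alg2:remove} then deletes it. Since $pos(Z') > pos(Z)$ is strict we have $Z' \neq Z$, so $Z$ itself is never removed by this call, and the conclusion follows.

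The statement is essentially a read-off of the pseudocode, so I do not expect a real obstacle; the only points needing a little care are (a) checking that {\sc Insert}($Z$) is the \emph{unique} way a non-initial entry reaches $list_v$, so that the removal logic of Steps~\ref{alg2:checkIfExists}--\ref{alg2:remove} is necessarily triggered whenever $Z$ is added, and (b) the ordering of events --- that the re-flagging in Step~\ref{alg2:setDummy} precedes the {\sc Insert}($Z$) call --- so that ``\dummy entry for $x$'' denotes the same set inside {\sc Insert} as in the statement. I would also make explicit that the statement is meant at the moment $Z$ is inserted: both the hypothesized \dummy entry and the closest one that gets removed are those present in $list_v$ immediately after Step~\ref{alg2:addZ} places $Z$.
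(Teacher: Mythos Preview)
Your proposal is correct and follows the same approach as the paper, which simply says the observation is immediate from Steps~\ref{alg2:addZ}--\ref{alg2:remove} of {\sc Insert}. You have unfolded that pointer with more care than the paper itself (handling the initialization case, noting that re-flagging in Step~\ref{alg2:setDummy} precedes the {\sc Insert} call, and clarifying the timing), but the argument is the same read-off of the pseudocode.
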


\begin{observation}	\label{obs:Znotadded}
Let $Z^{-}$ be an entry for source $x$ sent from  $y$ to $v$ in round $r$,
and let $Z$ be the corresponding entry created for possible addition to $list_v$ in 
Step~\ref{alg2:Z}  of
Algorithm~\ref{alg2}.  
If $Z$ is not added to $list_v$, then there is an entry $Z'\neq Z$ for source $x$
in $list_v$ with $Z'.$\currentSD$ = true$,
and there are at least $Z^-.\nu$ entries for $x$ with $key \leq Z.\kappa$ at the end of round $r$.
\end{observation}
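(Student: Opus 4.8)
The plan is to read off from Algorithm~\ref{alg2} precisely when an incoming entry fails to be inserted, and then handle the two assertions separately. The entry $Z$ formed in Step~\ref{alg2:Z} is placed on $list_v$ only via a call to {\sc Insert}, and {\sc Insert} is invoked either from Step~\ref{alg2:insert1} (when the test in Step~\ref{alg2:checkZ} succeeds) or from Step~\ref{alg2:checkIfEnough} (when its guard holds). Hence ``$Z$ is not added'' is equivalent to: the test in Step~\ref{alg2:checkZ} fails \emph{and} the guard in Step~\ref{alg2:checkIfEnough} --- ``fewer than $Z^-.\nu$ entries for $x$ with $key\le Z.\kappa$'' --- fails. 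I would open the proof with this dichotomy.

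For the count assertion, the failure of the Step~\ref{alg2:checkIfEnough} guard says exactly that at the instant $M$ is processed, $list_v$ already holds at least $Z^-.\nu$ entries for $x$ with key $\le Z.\kappa$. To carry this to the end of round $r$ I would show that the number of entries for $x$ with key $\le Z.\kappa$ does not decrease during the remainder of round $r$: an {\sc Insert} only adds such entries, and by Observation~\ref{obs:posremove} any deletion of an $x$-entry $W$ is performed by an {\sc Insert} that simultaneously adds an $x$-entry $W'$ with $W'.\kappa\le W.\kappa$, so a deleted $x$-entry with key $\le Z.\kappa$ is always compensated by a freshly added one with key $\le Z.\kappa$; entries for sources other than $x$ play no role, since {\sc Insert} deletes only entries of the source it is inserting. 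This yields the stated lower bound at the end of round $r$.

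For the existence of an SP entry $Z'\neq Z$ for $x$ on $list_v$, I would first dispatch the easy reason the Step~\ref{alg2:checkZ} test can fail: if it fails because the innermost disjunction is false while $Z^-.$\currentSD$ = true$ and $l\le h$, then $d_x^*\neq\infty$ (otherwise ``$d<d_x^*$'' would hold and the disjunction would be true), so the entry $Z^*$ selected in Step~\ref{alg2:pickZ*} exists and is the desired $Z'$ (distinct from $Z$, which is never placed on $list_v$). In the remaining case the test fails because $Z^-.$\currentSD$ = false$ or $l>h$; here I would invoke the invariant that $list_v$ never contains an entry for $x$ without also containing an SP entry for $x$. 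Since the failed Step~\ref{alg2:checkIfEnough} guard guarantees at least $Z^-.\nu\ge 1$ entries for $x$ on $list_v$, the invariant supplies the required SP entry. As {\sc Insert} never removes the current SP entry, this (together with the monotonicity above) makes both conclusions persist to the end of round $r$.

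The heart of the matter, and the step I expect to be the main obstacle, is that invariant: a non-SP entry for $x$ is never inserted via Step~\ref{alg2:checkIfEnough} before an SP entry for $x$ has reached $list_v$. Adding an SP entry (Step~\ref{alg2:insert1}) sets $d_x^*<\infty$, and {\sc Insert} never removes the current SP entry, so the only way to break the invariant is for $v$, while $d_x^*=\infty$, to receive a message $Z^-$ for $x$ with $Z^-.$\currentSD$ = false$ or with resulting hop-length exceeding $h$ and to pass the Step~\ref{alg2:checkIfEnough} guard. To rule this out I would argue by induction on the rounds: by the invariant the sender $y$ holds an SP entry for $x$, and --- following the whole history of successive SP entries $y$ has held for $x$ and combining Lemma~\ref{lemma:pos} with the send rule $\lceil\kappa+pos\rceil=r$ --- one shows that $y$ must already have transmitted some SP entry for $x$ to $v$ with admissible hop-length in an earlier or the current round, so that $d_x^*<\infty$ at $v$ by the time $Z^-$ is processed. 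The delicacy is that both the key and the position of $y$'s SP entry for $x$ change over time, so it does not suffice to look only at $y$'s current SP entry; this is exactly where the pipelining bookkeeping of the preceding observations is needed.
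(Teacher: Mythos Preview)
The paper's proof is a single sentence: it points to Steps~\ref{alg2:pickZ*}--\ref{alg2:checkZ} (where the current $Z^*$ with \currentSD$=true$ is compared against the incoming $Z$) and to the failed guard of Step~\ref{alg2:checkIfEnough}, and declares the observation immediate. Your treatment of the count assertion agrees with this and is in fact more careful, since you add a monotonicity step (via Observation~\ref{obs:posremove}) to carry the bound to the end of round $r$, which the paper omits.

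Where you diverge is on the SP-entry assertion. You have correctly isolated a case the paper's one-liner does not address: when the Step~\ref{alg2:checkZ} test fails because $Z^-.$\currentSD$=false$ (or $l>h$), the algorithm's text does not by itself guarantee that $list_v$ already carries an SP entry for $x$; the failed guard of Step~\ref{alg2:checkIfEnough} only gives at least one $x$-entry, not that one is flagged SP. To close this you propose an invariant (``no $x$-entry on $list_v$ without an SP $x$-entry'') and an inductive argument tracing the history of SP entries at the sender $y$. That argument is genuinely delicate---a demoted former-SP entry at $y$ can have smaller key than $y$'s current SP entry and hence be sent earlier---and you rightly flag it as the main obstacle; the sketch you give is not yet a proof. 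The paper simply does not engage with this case. It is worth noting, however, that only the count clause of the observation is ever invoked downstream (in Lemmas~\ref{lemma:countDummySet} and~\ref{lemma:ci}); the SP-entry clause is asserted but never used, so the extra work you are undertaking concerns a claim the paper states but does not rely on.
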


\begin{lemma}\label{lemma:counts}
Let $Z$ be an entry for source $x$ that is present on $list_v$ in round $r$. Let $r'>r$, and let $c$ and $c'$ be
the number of entries for source $x$ on $list_v$ that have key value less than $Z$'s key value
in rounds $r$ and $r'$ respectively.
Then $c' \geq c$.
\end{lemma}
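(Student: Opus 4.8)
The plan is to reduce the statement to the effect of a \emph{single} call to \textsc{Insert} and then chain. First I would observe that, between round $r$ and round $r'$, the membership of $list_v$ changes \emph{only} through invocations of \textsc{Insert}: the send phase (Steps~\ref{alg2:setDummy} aside) does not alter the list, the flag updates in Step~\ref{alg2:setDummy} change no key and no membership, and the candidate entry built in Steps~\ref{alg2:updateKey}--\ref{alg2:Z} is added to $list_v$ only via \textsc{Insert}. So it suffices to track the quantity
\[
 f(\,list_v\,) \;=\; \#\{\, U\in list_v : U \text{ is an entry for source } x \text{ with } (U.\kappa, U.d) <_{\mathrm{lex}} (Z.\kappa, Z.d)\,\}
\]
and show that no \textsc{Insert} call decreases it. (Here I read ``$Z$'s key value'' as the fixed pair $(Z.\kappa,Z.d)$, which keeps the statement meaningful even if $Z$ itself is later removed from $list_v$; cf.\ Observation~\ref{obs:posremove}. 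One should also fix, once and for all, a consistent total order refining ``sorted order of $(\kappa,d,x)$'' so that ``position'' is well defined even for coinciding triples.)

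The heart of the argument is that \textsc{Insert} is \emph{monotone} in the relevant sense: a call \textsc{Insert}$(W)$ adds exactly one entry $W$ and, by Observation~\ref{obs:remove}, deletes at most one entry $W'$, which is the closest non-SP entry \emph{above} $W$ for the \emph{same} source as $W$. If that source is not $x$, then $f$ is untouched. If it is $x$, note that $W'$ lies above $W$ in the (now updated) sorted list, hence $(W'.\kappa, W'.d) \ge_{\mathrm{lex}} (W.\kappa, W.d)$. I would then split on whether $W$ is counted by $f$: (i) if $(W.\kappa,W.d)<_{\mathrm{lex}}(Z.\kappa,Z.d)$, inserting $W$ contributes $+1$, and deleting $W'$ contributes $0$ or $-1$, for a net change $\ge 0$; (ii) if $(W.\kappa,W.d)\ge_{\mathrm{lex}}(Z.\kappa,Z.d)$, then $W$ contributes $0$ and $(W'.\kappa,W'.d)\ge_{\mathrm{lex}}(W.\kappa,W.d)\ge_{\mathrm{lex}}(Z.\kappa,Z.d)$, so $W'$ is not counted either and the net change is $0$. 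The only delicate sub-case is $W'=Z$ (so $Z$ leaves the list): since $Z$ never satisfies $(Z.\kappa,Z.d)<_{\mathrm{lex}}(Z.\kappa,Z.d)$, it was never counted by $f$, so its removal cannot decrease $f$. In every case $f$ does not decrease.

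Finally I would conclude by induction over the finite sequence of \textsc{Insert} calls that occur in the receive phases of rounds $r+1,\dots,r'$: since $list_v$ in round $r$ has $f=c$ and $list_v$ in round $r'$ has $f=c'$, monotonicity gives $c'\ge c$. The main obstacle, as usual in this algorithm, is purely bookkeeping: one must be careful that ``above in position'' really does translate to ``$\ge$ in $(\kappa,d)$'' (which follows from $list_v$ being kept sorted) and must handle the tie-breaking and the $W'=Z$ corner case cleanly; the underlying phenomenon is the same monotonicity that drives Lemma~\ref{lemma:pos}, here restricted to a single source and to a lexicographically downward-closed set of keys.
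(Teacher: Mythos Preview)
Your proposal is correct and follows essentially the same idea as the paper: the paper's proof is a two-line appeal to Observation~\ref{obs:posremove}, noting that any removal of an entry for source $x$ is paired with the insertion of an entry for $x$ with no larger key, so the count below $Z$'s key cannot drop. Your argument unpacks this same monotonicity into an explicit case analysis on each \textsc{Insert} call (via Observation~\ref{obs:remove}), which is more detailed but not a different route; the only cosmetic difference is that you track the $(\kappa,d)$ pair lexicographically whereas the paper works with $\kappa$ alone.
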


\vspace{-.1in}

Lemma~\ref{lemma:counts} holds for every round greater than $r$, even if
$Z$ is removed from $list_v$. 

\begin{lemma}	\label{lemma:countDummySet}
Let $Z^{-}$ be an entry for source $x$ sent from  $y$ to $v$ and
suppose the corresponding entry $Z$ (Step~\ref{alg2:Z} of Algorithm 2) is
added
to $list_v$ in round $r$.
Then 
there are at least $Z^-.\nu$ entries at or below $Z$ in $list_v$ for source $x$.
\end{lemma}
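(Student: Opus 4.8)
The plan is to prove the claim by strong induction on the round $r$ in which $Z$ is inserted into $list_v$, splitting on the two ways $Z$ can enter the list: as the new SP entry in Step~\ref{alg2:insert1}, or as a non-SP entry through Step~\ref{alg2:checkIfEnough}. Write $m := Z^-.\nu$, and let $W_1,\dots,W_m = Z^-$ be the entries for $x$ lying at or below $Z^-$ in $list_y$ at the moment $y$ dispatched $Z^-$ (round $r-1$), ordered by $(\kappa,d)$; by definition of $\nu$ these are precisely the $m$ smallest such $x$-entries in $list_y$ at that time, so $W_i$ has position $i$ among the $x$-entries there. For each $i<m$ we have $W_i.\kappa\le Z^-.\kappa$ and, since positions are monotone (Lemma~\ref{lemma:pos}) and $W_i$ sits strictly below $Z^-$, the quantity $\lceil W_i.\kappa + pos_y(W_i)\rceil$ is strictly smaller than $\lceil Z^-.\kappa + pos_y(Z^-)\rceil$; one concludes that each $W_i$ ($i<m$) is dispatched to $v$ strictly before $v$ receives $Z^-$, unless it is first displaced from $list_y$, in which case Observation~\ref{obs:posremove} lets us replace it by a lower $x$-entry of $list_y$ and repeat (the recursion halts because positions strictly decrease).

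Next, consider such an $x$-entry $W$ of $list_y$ — either some $W_i$ or an entry reached by following the displacement chain from some $W_i$ — with $W.\kappa\le Z^-.\kappa$ that is dispatched to $v$ in a round $r_W<r$, and let $W'$ be the entry built at $v$ in Step~\ref{alg2:Z}. Then $W'.\kappa = W.\kappa + w(y,v)\cdot\gamma + 1 \le Z.\kappa$, and moreover $W'.d\le Z.d$ whenever $W.\kappa = Z^-.\kappa$; hence $W'$, and anything that later replaces it, occupies a position at or below the eventual position of $Z$ in $list_v$. If $W'$ was inserted into $list_v$ in round $r_W+1<r$, the induction hypothesis applied to $W'$ at that round yields at least $W.\nu$ $x$-entries at or below $W'$; if $W'$ was not inserted, Observation~\ref{obs:Znotadded} yields the same count of $x$-entries with $key\le W'.\kappa$ at the end of round $r_W+1$. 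In either case Lemma~\ref{lemma:counts} carries this lower bound forward, so the same number of $x$-entries with key $\le Z.\kappa$ is still present in $list_v$ at round $r$.

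To finish, I would combine these contributions over $i=1,\dots,m-1$: taking the largest index whose chain terminates at an entry that failed to be inserted at $v$ gives, via Observation~\ref{obs:Znotadded}, a block of $x$-entries with key below $Z.\kappa$; for each larger index the descendant was inserted, and since removals delete only non-SP entries strictly above the point of insertion (Observation~\ref{obs:remove}) and the $W_i$ correspond to distinct $(d,l)$ pairs after the relabeling of Lemma~\ref{lemma:map}, these descendants contribute further distinct $x$-entries at or below $Z$. This produces at least $m-1$ $x$-entries at or below $Z$ from the $W_i$'s, with $Z$ itself as the $m$-th. When $Z$ enters as the new SP entry, its $d$-value is minimal among $x$-entries of $list_v$, so it sits below every $x$-entry sharing its key and none of the counted entries is pushed above it; when $Z$ enters via Step~\ref{alg2:checkIfEnough}, the insertion test certifies only that \emph{fewer than} $m$ such entries were present beforehand, and it is exactly the lower bound just derived that sharpens this to $\ge m$ at or below $Z$.

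The hard part is precisely this non-SP insertion case: Step~\ref{alg2:checkIfEnough} by itself delivers only an \emph{upper} bound on the $x$-entries below $Z$, so the entire argument rests on showing that each of the $m-1$ lower $x$-entries of $list_y$ (or its displacement-chain replacement) has already reached $v$ and survived there, and that the $\nu$-values stored in those older messages — which may \emph{understate} the current counts — still aggregate to $\ge m$. Making this rigorous is what forces the strong induction on the round number, the careful handling of entries that are removed from $list_y$ before being dispatched (iterating Observation~\ref{obs:posremove}), and attention to ties in key value among $x$-entries (resolved through the $(d,l)$-relabeling of Lemma~\ref{lemma:map} and the $d$-based tie-break in the list order).
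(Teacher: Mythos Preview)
Your plan is substantially more complicated than the paper's argument, and the ``combining contributions'' step contains a real gap.

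The paper's proof applies the induction hypothesis to a \emph{single} entry: the $(m-1)$-th $x$-entry $Z_1^-$ in $list_y$ (in your notation, $W_{m-1}$). Since $Z_1^-$ lies strictly below $Z^-$, it was sent to $v$ in some earlier round $r'$. One then splits on whether the corresponding $Z_1$ was inserted into $list_v$: if yes, the induction hypothesis gives at least $Z_1^-.\nu = m-1$ $x$-entries at or below $Z_1$; if no, Observation~\ref{obs:Znotadded} gives at least $m-1$ $x$-entries with key $\le Z_1.\kappa$. In either case Lemma~\ref{lemma:counts} propagates this lower bound to round $r$, and since $Z_1.\kappa < Z.\kappa$, all of these entries sit at or below $Z$; together with $Z$ itself this makes $m$. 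That is the entire argument---there is no need to look at $W_1,\dots,W_{m-2}$ at all, no displacement chains, and no appeal to Lemma~\ref{lemma:map}.

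Your attempt to aggregate over all $W_i$ instead runs into trouble precisely where you wave your hands: for indices $i>j$ you assert that the inserted descendants ``contribute further distinct $x$-entries at or below $Z$,'' but an inserted descendant $W_i'$ may itself be a non-SP entry and can be removed from $list_v$ before round $r$ by a later insertion below it (Observation~\ref{obs:remove} deletes the closest non-SP entry \emph{above} the newly inserted one, which could well be $W_i'$). So mere insertion does not guarantee survival to round $r$, and the distinctness argument via Lemma~\ref{lemma:map} does not address this. You also invoke Lemma~\ref{lemma:map}, which is stated and proved \emph{after} the present lemma; while there is no strict circularity in the paper, it is cleaner not to rely on it here. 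The fix is simply to discard $W_1,\dots,W_{m-2}$ and argue only from $W_{m-1}$.
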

\vspace{-.2in}
\begin{proof}
Assume inductively that this result holds for all entries on $list_v$ and $list_y$ with key value at most $Z.\kappa$ at
all previous rounds and at $y$ in round $r$ as well.  
(It trivially holds initially.) 

Let $Z_1^{-}$ be the $(Z^-.\nu - 1)$-th entry for source $x$ in $list_y$. Since $Z_1^{-}$ has a key value smaller than $Z^-$ it was sent to $v$ in an earlier round $r'$. If the
corresponding entry $Z_1$
created for possible addition to $list_v$ in Step~\ref{alg2:Z}  of
Algorithm~\ref{alg2},
was inserted in $list_v$ then by inductive assumption 
there were at least $Z_1^-.\nu = Z^-.\nu - 1$ entries for $x$ at or below $Z_1$ in $list_v$.
And by Lemma~\ref{lemma:counts}
this holds for round $r$ as well and hence the result follows
since $Z$ is present above $Z_1$ in $list_v$.

And if $Z_1$ was not added to $list_v$ in round $r'$,
then by \obs ~\ref{obs:Znotadded}
 there were already $Z^-.\nu - 1$ 
 entries for $x$ 
with key $\leq Z_1.\kappa$
and by Lemma~\ref{lemma:counts} there are at least 
$Z^-.\nu - 1$ 
entries for $x$ with key $\leq Z_1.\kappa \leq Z.\kappa$ on $list_v$ at round $r$
and hence the result follows.
\end{proof}

\begin{lemma}	\label{lemma:ci}
Let $Z^-$ be an entry sent from  $y$ to $v$ in round $r$
and let $Z$ be the corresponding entry created for possible addition to $list_v$ in Step~\ref{alg2:Z}  of
Algorithm~\ref{alg2}.  
For each source $x_i \in S$,
let there be exactly $c_i$ entries for $x_i$ at or below $Z^-$ in $list_y$.
If $Z$ is added to $list_v$,
then for each $x_i \in S$,
there are at least $c_i$ entries for $x_i$ at or below $Z$ in $list_v$.
\end{lemma}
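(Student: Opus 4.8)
The plan is to prove this by (strong) induction, mirroring the structure of the proof of Lemma~\ref{lemma:countDummySet}, but now tracking the count for \emph{every} source $x_i$ simultaneously rather than just the source $x$ of the entry $Z$. The inductive hypothesis will state that the lemma holds for all insertions into any node's list in all earlier rounds, and for the list at $y$ in round $r$; it holds trivially at initialization since all lists are empty or contain a single source-entry. The key observation making this work is that the positions of entries below $Z$ in $list_v$, once established, are never destroyed in a way that decreases a per-source count --- this is exactly what Observation~\ref{obs:posremove}, Lemma~\ref{lemma:pos}, and especially Lemma~\ref{lemma:counts} give us (the last one guarantees that the number of entries for a fixed source lying below a fixed key value is non-decreasing over rounds, even if the reference entry is itself removed).

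First I would fix an arbitrary source $x_i \in S$ and let $c_i$ be the number of entries for $x_i$ at or below $Z^-$ in $list_y$ at round $r$. If $c_i = 0$ there is nothing to prove, so assume $c_i \geq 1$ and let $W^-$ be the $c_i$-th entry (i.e.\ the topmost entry for $x_i$ at or below $Z^-$) in $list_y$. Then $W^-.\kappa \leq Z^-.\kappa$, so $W^-$ was sent from $y$ to $v$ in some round $r' \leq r$ (with equality only possible if it is sent in the same round; in that case it was still on $list_y$ and the argument goes through using the inductive hypothesis applied at $y$ in round $r$, which we have assumed). Let $W$ be the corresponding entry created at $v$ in Step~\ref{alg2:Z} when $W^-$ arrives. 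The weight update in Step~\ref{alg2:updateKey} adds the same quantity $w(y,v)\cdot\gamma + 1$ to both $Z^-.\kappa$ and $W^-.\kappa$, so $W.\kappa \leq Z.\kappa$. Now split into two cases exactly as in Lemma~\ref{lemma:countDummySet}:

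\emph{Case (a): $W$ was inserted into $list_v$ in round $r'$.} By the inductive hypothesis applied to the insertion of $W$ at $v$ in round $r' < r$ (applied with source $x_i$, using that there were $\geq c_i$ entries for $x_i$ at or below $W^-$ in $list_y$ at round $r'$, which follows since the $c_i$ entries for $x_i$ at-or-below $Z^-$ in round $r$ were, being lower in key, already present at-or-below $W^-$ at round $r'$ by monotonicity of key order --- here one uses Lemma~\ref{lemma:counts} / Lemma~\ref{lemma:pos} carefully), there were at least $c_i$ entries for $x_i$ at or below $W$ in $list_v$ at round $r'$. By Lemma~\ref{lemma:counts} the number of entries for $x_i$ with key value below $W.\kappa$ in $list_v$ is non-decreasing, so at round $r$ there are still at least $c_i - 1$ strictly below $W.\kappa$, plus $W$ itself or its replacement (Observation~\ref{obs:posremove} guarantees any replacement for $W$ has key $\leq W.\kappa$ and hence $\leq Z.\kappa$, and lies below where $W$ was), giving $\geq c_i$ entries for $x_i$ with key $\leq W.\kappa \leq Z.\kappa$ at round $r$; since $Z$ sits at or above all of these, the claim follows.

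\emph{Case (b): $W$ was not inserted into $list_v$ in round $r'$.} Then Observation~\ref{obs:Znotadded} gives that at the end of round $r'$ there were already at least $W^-.\nu$ entries for $x_i$ with key $\leq W.\kappa$ in $list_v$; and $W^-.\nu \geq c_i$ since $W^-$ is the $c_i$-th entry for $x_i$ from the bottom at-or-below $Z^-$ (this needs a one-line check that $\nu$ counts are consistent with "at or below $Z^-$" counts, which is immediate from the definition of $\nu$). Again Lemma~\ref{lemma:counts} propagates this to round $r$, and since $Z$ lies at or above these entries the claim holds. Taking the conjunction over all $x_i \in S$ completes the induction.

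The main obstacle I anticipate is the bookkeeping in Case (a): one must be careful that the inductive hypothesis is being invoked at a strictly earlier round (or at $y$ in the current round, which we explicitly include in the hypothesis), and that when $W$ itself may have been removed and replaced since round $r'$, the relevant per-source counts below key value $W.\kappa \leq Z.\kappa$ have not dropped. This is precisely what Lemma~\ref{lemma:counts} was engineered to handle --- indeed the remark right after Lemma~\ref{lemma:counts} (that it holds "even if $Z$ is removed from $list_v$") is the crucial hook --- so the proof is really an exercise in applying these earlier monotonicity results in the right order, with no new combinatorial idea required beyond tracking all sources at once.
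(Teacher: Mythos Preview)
Your approach is essentially the paper's: for each source $x_i$ take the $c_i$-th entry for $x_i$ in $list_y$ (your $W^-$, the paper's $Z_1^-$), observe that it was sent to $v$ at some round $r'\le r$, and case-split on whether the corresponding entry was inserted at $v$, finishing with Observation~\ref{obs:Znotadded} and Lemma~\ref{lemma:counts}. The only structural difference is that the paper does \emph{not} set up a fresh induction on Lemma~\ref{lemma:ci}; it simply invokes the already-proved Lemma~\ref{lemma:countDummySet} as a black box for the ``inserted'' case and argues by contradiction, which makes Lemma~\ref{lemma:ci} a short corollary rather than a parallel induction.

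One step in your Case~(a) needs repair. To feed the inductive hypothesis you assert that at the earlier round $r'$ there were already $\ge c_i$ entries for $x_i$ at or below $W^-$ in $list_y$, because the $c_i$ entries present at round $r$ ``were, being lower in key, already present at-or-below $W^-$ at round $r'$.'' But Lemma~\ref{lemma:counts} (and Lemma~\ref{lemma:pos}) only say that such per-source counts are \emph{non-decreasing} in time; the count at $r'<r$ can therefore be \emph{smaller} than $c_i$, not larger --- you have applied the monotonicity in the wrong direction. The same slip recurs in Case~(b) when you write ``$W^-.\nu \ge c_i$'': the $\nu$ transmitted in the message is the value at round $r'$, whereas $c_i$ is the count at round $r$. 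The paper's route sidesteps this by appealing to Lemma~\ref{lemma:countDummySet} at round $r'$ (using the $\nu$ that $Z_1^-$ actually carried when sent) and then pushing forward with Lemma~\ref{lemma:counts}; if you wish to keep your inductive framing, this is exactly where you should substitute Lemma~\ref{lemma:countDummySet} for the inductive hypothesis of Lemma~\ref{lemma:ci}.
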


\begin{corollary}	\label{cor:pos2}
Let $Z^-$ be an entry sent from  $y$ to $v$ in round $r$
and let $Z$ be the corresponding entry created for possible addition to $list_v$ in Step~\ref{alg2:Z}  of
Algorithm~\ref{alg2}.  
If $Z$ is added to $list_v$, 
then $pos^r_y (Z^{-}) \leq pos^r_v (Z)$.
\end{corollary}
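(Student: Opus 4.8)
The plan is to obtain this as an immediate corollary of Lemma~\ref{lemma:ci} by summing over all sources. First I would recall that $pos^r_y(Z^-)$, the number of entries at or below $Z^-$ in $list_y$ at round $r$ (read off when $y$ performs its send of $Z^-$), is exactly $\sum_{x_i \in S} c_i$, where $c_i$ denotes the number of entries for source $x_i$ at or below $Z^-$ in $list_y$: this partition is exhaustive and disjoint because every entry on a list has exactly one source. Likewise, $pos^r_v(Z)$ is exactly the total number of entries at or below $Z$ in $list_v$, partitioned by source in the same way.

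Next, since $Z$ is added to $list_v$ in round $r$, Lemma~\ref{lemma:ci} applies verbatim and gives, for each source $x_i \in S$, at least $c_i$ entries for $x_i$ at or below $Z$ in $list_v$. Summing these $|S|$ inequalities and using the two partitions above yields $pos^r_v(Z) = \sum_{x_i \in S} (\#\text{ entries for } x_i \text{ at or below } Z \text{ in } list_v) \geq \sum_{x_i \in S} c_i = pos^r_y(Z^-)$, which is the claim. In particular, taking $x_i$ to be the source of $Z^-$ itself, its count $c_i$ (which includes $Z^-$) is matched by a count in $list_v$ that includes $Z$, consistent with the convention that $pos$ counts the element itself.

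I do not expect any real obstacle here: all the combinatorial work is already done inside Lemma~\ref{lemma:ci}. The only point worth a brief sanity check is that the round index $r$ is interpreted consistently on the two lists — $pos^r_y(Z^-)$ is the position of $Z^-$ at the moment $y$ sends it in round $r$, while $Z$ is inserted into $list_v$ during the receive phase of that same round $r$ — but this is precisely the timing already fixed in the hypothesis of Lemma~\ref{lemma:ci}, so no additional argument about intervening list modifications is needed.
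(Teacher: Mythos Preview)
Your proposal is correct and is exactly the intended derivation: the paper presents this as an immediate corollary of Lemma~\ref{lemma:ci} with no separate proof, and summing the per-source inequalities $c_i$ over all $x_i\in S$ is precisely how one obtains $pos^r_y(Z^-)=\sum_i c_i \le pos^r_v(Z)$. Your remark about the round-$r$ timing being inherited from the hypothesis of Lemma~\ref{lemma:ci} is also on point.
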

\begin{lemma}	\label{lemma:map}
Let $\mathcal{C}$ be the entries for a source $x \in S$ in $list_v$ in  round $r$. 
Then the entries in $\mathcal{C}$ can be mapped to $(d,l)$ pairs such that each
$l\geq 0$ and
each $Z \in \mathcal{C}$ is mapped to a distinct $d$ value with
$Z.\kappa = d\cdot \gamma + l$.
Also $d = d_x^*$ if $Z$ is a current shortest path entry, otherwise $d > d_x^*$.
\end{lemma}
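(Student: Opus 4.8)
The plan is to build the map incrementally, following the history of how the entries for source $x$ were inserted into $list_v$. Fix the round $r$ and let $\mathcal{C}$ be the set of entries for $x$ in $list_v$ at round $r$, ordered by key value; let $Z^* \in \mathcal{C}$ be the current shortest path entry for $x$ (if it exists). We map $Z^*$ to the pair $(d_x^*, \ell)$ where $\ell = Z^*.\kappa - d_x^*\cdot\gamma$; since $Z^*.\kappa = d_x^*\cdot\gamma + (\text{hop length of } Z^*\text{'s path})$, this $\ell$ is a genuine nonnegative integer. The real content is the claim that the remaining entries in $\mathcal{C}$ can be assigned \emph{distinct} $d$-values, all strictly greater than $d_x^*$, while preserving the identity $Z.\kappa = d\cdot\gamma + l$ with $l \geq 0$. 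The difficulty, as the text flags, is that two distinct entries $Z_1, Z_2 \in \mathcal{C}$ can carry the \emph{same} stored weighted distance $d$, so we cannot simply reuse the $d$-fields; instead we must shift one of them to a nearby distinct value and compensate in the $l$-coordinate.

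First I would establish the structural fact that makes the shifting possible: whenever a non-SP entry $Z$ is inserted into $list_v$ (Step~\ref{alg2:checkIfEnough}), the guard condition --- that there are fewer than $Z^-.\nu$ entries for $x$ with key $\leq Z.\kappa$ --- together with Lemma~\ref{lemma:countDummySet} (applied at the sending node $y$) forces $Z.\kappa$ to be strictly larger than the keys of at least $Z^-.\nu - 1$ other entries for $x$ already present, and in particular ensures the new entry is "new" in key. Combined with the fact (Observation~\ref{obs:posremove}) that a removed entry is replaced by one of no larger key, and with Lemma~\ref{lemma:counts}, this tells us the multiset of keys for $x$ below any fixed level only grows over time. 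So I would argue by induction on the rounds: maintain a map $\phi$ from $\mathcal{C}$ to $(d,l)$-pairs with the three stated properties, and show it can be updated through (a) an insertion of a new SP entry, (b) an insertion of a new non-SP entry, and (c) a removal. For (a), the new SP entry takes $d = d_x^*$ (the freshly lowered value) and the old SP entry, if it survives, gets re-mapped to some $d > d_x^*$; here I use that the old SP distance was strictly larger, or equal with larger key/hop, so a valid distinct target exists. For (b) and (c) I use the monotonicity of key-counts to find an unused distance value in the interval $(d_x^*, \, Z.\kappa/\gamma]$ and assign it to $Z$, adjusting $l = Z.\kappa - d\gamma \geq 0$; the choice of $\gamma$ and the integrality of keys guarantee enough room.

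The cleanest way to organize the distinctness bookkeeping is probably to not track exact values but to exhibit an \emph{injection} from $\mathcal{C}$ into a set of the form $\{d_x^*\} \cup \{d_x^* + 1, d_x^* + 2, \ldots\}$ (or more precisely into distinct reals determined by the keys), and to verify that the injection respects $Z.\kappa = d\gamma + l$ with $l \geq 0$ --- i.e.\ that the target $d$ never exceeds $Z.\kappa/\gamma$. The step I expect to be the genuine obstacle is handling the case where two entries for $x$ share a stored $d$: I need to be sure that the "earlier" of the two (in insertion order) can be pushed to a strictly smaller $d$ without colliding with yet another entry and without driving its $l$ negative. This is where I would lean hardest on Lemma~\ref{lemma:countDummySet} and Lemma~\ref{lemma:counts}: they guarantee that below the key of any entry there are enough entries to "make space," so a greedy assignment of the smallest unused admissible $d$-value, processed in order of increasing key, terminates with all constraints satisfied. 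Once the map is constructed, the claim $d = d_x^*$ exactly for the SP entry and $d > d_x^*$ otherwise is immediate from how we seeded and updated $\phi$, completing the proof.
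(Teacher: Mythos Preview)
Your proposal has a genuine gap in case (b), the insertion of a non-SP entry $Z$. You write that you will ``use the monotonicity of key-counts to find an unused distance value in the interval $(d_x^*,\, Z.\kappa/\gamma]$'' and that ``the choice of $\gamma$ and the integrality of keys guarantee enough room.'' Neither of these claims is supported. Keys are not integers: $\kappa = d\gamma + l$ with $\gamma = \sqrt{kh/\Delta}$, so $\kappa$ is in general irrational. More importantly, Lemmas~\ref{lemma:counts} and~\ref{lemma:countDummySet} count \emph{entries} for $x$ below a given key level; they say nothing about how many distinct \emph{integer} $d$-values fit in $(d_x^*,\, Z.\kappa/\gamma]$. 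If $\gamma$ is large that interval may contain no new integer at all, and your greedy local assignment at $v$ has no candidate to output. The downstream use in Lemma~\ref{lemma:belowSD} requires the mapped $d$'s to be integers (the bound $h/\gamma$ is a count of distinct integers in an interval of that length), so you cannot escape by allowing real-valued $d$.

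The idea you are missing, and which the paper supplies, is to pull the candidates from the \emph{sending} node $y$ rather than manufacture them at $v$. The paper runs the induction on $j = |\mathcal{C}|$ across all nodes. When a non-SP $Z$ becomes the $j$-th entry for $x$ at $v$, the guard in Step~\ref{alg2:checkIfEnough} forces $Z^-.\nu \geq j$, so $list_y$ already contains $j$ entries for $x$ with key at most $Z^-.\kappa$; by the inductive hypothesis those carry $j$ distinct integer $d^-$-values. Shifting each by $w(y,v)$ yields $j$ distinct integer candidates, each with $d\gamma \leq Z.\kappa$ (hence $l \geq 0$). Pigeonhole against the $j-1$ entries already below $Z$ at $v$ produces an unused one, and a short argument shows that unused value cannot equal $d_x^*$. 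This cross-node pigeonhole is the actual mechanism; your round-by-round induction confined to $v$ never gets access to it.
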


\vspace{-.2in}
\begin{proof}
We will establish this result by induction on $j$, the number of entries in  $\mathcal{C}$.
For the base case, when $j=1$, we can map $d$ and $l$ to the pair in the single entry $Z$ since
$Z.\kappa = d \cdot \gamma +l$. Assume inductively that the result holds at $list_u$ for all nodes $u$ 
when the number of entries for $x$ is at most $j-1$. Consider the first time $|\mathcal{C}|$ becomes $j$
at $list_v$, and let this occur when node $y$ sends $Z^-$ to $v$ and this is updated and inserted as $Z$ 
in $list_v$ in round $r$.

If $Z$ is inserted as a new shortest path entry with distance value $d^*$, then the distinct $d$ values currently assigned to the $j-1$ entries
for source $x$ in $list_v$ must all be larger than $d^*$ hence we can simply assign the
 $d$ and $l$ values in $Z$ as its $(d,l)$
mapping.

If $Z$ is inserted as a \dummy entry then it is possible that the $d$ value in $Z$ has already been assigned to
one of the $j-1$ entries for source $x$ on $list_v$. If this is the case, consider the entries 
for source $x$  with key value at most $Z^-.\kappa$ in $list_y$ (at node $y$). 
By the check in Step~\ref{alg2:checkIfEnough} of Algorithm~\ref{alg2} 
 we know that there are $j$ such
values. Inductively these $j$ entries have $j$ distinct $d^-$ values assigned to them, and we transform these 
into $j$ distinct values for $list_v$ by adding $w (y,v)\cdot \gamma + 1$ to each of them. For at least one of these
$d^-$ values in $y$, call it $d_1^-$,  it must be the case that $d'=d_1^-+ w (y,v)\cdot \gamma + 1$ 
is not assigned to any of the $j-1$ entries for source $x$ below $Z$ in $list_v$. Let $Z^{-}_1$ be the entry in $y$'s list that is 
associated with distance $d_1^-$. 
It is readily seen
 that the associated $l$ value for $d'$ in $Z$ on $list_v$
must be greater than 0
 and the distance value $d' + w (y,v) \neq d_x^*$ 
 (see proof in Appendix \ref{sec:proofsAlg2}).
 So we can assign this $d$ value to $Z$.
 We do not need
  to consider \dummy entries above $Z$ since if there were one, the closest one above $Z$ would
  have been deleted and 
  $j-1$
   would not have increased to  $j$.

In the general case when the number of entries remains at $j$ after the insert we do need to consider
the possibility of the new value assigned to $Z$ being duplicated at an entry $Z'$ above $Z$. But here we
can assign to $Z'$ the $d$ value previously given to the removed entry (and the $l$ needed for 
$Z'.\kappa$ will be non-negative because the removed entry must have been below $Z'$ on $list_v$).
\end{proof}

\begin{lemma}	\label{lemma:belowSD}
Let $Z$ be the current shortest path distance entry for a source $x \in S$ in $v$'s list.
Then the number of entries for $x$ below $Z$ in $list_v$ is at most 
$h/\gamma$.
\end{lemma}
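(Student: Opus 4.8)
The plan is to combine a bound on the key of the current shortest-path entry with the distinctness and integrality of the $d$-values guaranteed by Lemma~\ref{lemma:map}. First I would pin down $Z.\kappa$ for the current shortest-path entry $Z$ for $x$ in $list_v$. Since $Z$ is the current SP entry, the stored value $d_x^*$ equals $Z.d$; and every SP entry has hop length at most $h$ (this is enforced by the test in Step~\ref{alg2:checkZ} of Algorithm~\ref{alg2}, and it holds trivially for the initial source entry, which has hop length $0$). Hence $Z.\kappa = Z.d\cdot\gamma + Z.l \le d_x^*\cdot\gamma + h$.

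Next I would apply Lemma~\ref{lemma:map} to the set $\mathcal{C}$ of entries for $x$ on $list_v$ in round $r$. This maps each $Z' \in \mathcal{C}$ to a \emph{distinct} value $d'$ with $Z'.\kappa = d'\cdot\gamma + l'$ and $l' \ge 0$, where $d' = d_x^*$ exactly for $Z$ and $d' > d_x^*$ for every other entry. An essential point is that these $d'$-values are non-negative integers: this is evident from the construction in the proof of Lemma~\ref{lemma:map}, since each value it assigns is either an actual weighted path length, or is obtained from one by adding integer edge weights, or is inherited from a previously removed entry. Now let $Z'$ be any entry for $x$ lying below $Z$ in $list_v$. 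Since $list_v$ is sorted by $\kappa$, $Z'.\kappa \le Z.\kappa \le d_x^*\cdot\gamma + h$, and because $Z'.\kappa = d'\cdot\gamma + l'$ with $l' \ge 0$, this gives $d'\cdot\gamma \le d_x^*\cdot\gamma + h$, i.e. $d' \le d_x^* + h/\gamma$; also $d' > d_x^*$ because $Z'$ is not the SP entry.

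Finally I would count: each entry for $x$ below $Z$ is assigned a distinct integer $d'$ lying in the interval $(d_x^*,\, d_x^* + h/\gamma]$, which contains at most $\lfloor h/\gamma \rfloor \le h/\gamma$ integers. Therefore there are at most $h/\gamma$ entries for $x$ below $Z$ in $list_v$, which is exactly the claim.

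I expect the only real obstacle to be the integrality of the $d$-values produced by Lemma~\ref{lemma:map}: distinctness alone is useless for a counting bound over a bounded real interval, so I would make sure Lemma~\ref{lemma:map} is invoked in the form that delivers integer $d$-values. Everything else is a routine unwinding of the sorted order on $list_v$ and of the definition $\kappa = d\cdot\gamma + l$; the tie-breaking rule on $list_v$ causes no trouble here, since all entries considered are for the same source $x$ and, by Lemma~\ref{lemma:map}, already have pairwise distinct $d$-values.
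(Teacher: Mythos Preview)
Your proposal is correct and follows essentially the same route as the paper: bound $Z.\kappa$ via the hop-length constraint on SP entries, invoke Lemma~\ref{lemma:map} to obtain pairwise distinct $d$-values with $d>d_x^*$ for the non-SP entries, and then count distinct integers in the resulting bounded interval. Your explicit emphasis on the integrality of the mapped $d$-values is on point; the paper relies on this implicitly when it passes from ``$d''<d_x^*+h/\gamma$ and $d''\ge d_x^*$'' to ``at most $h/\gamma$ entries.''
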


\vspace{-.2in}
\begin{proof}
By Lemma~\ref{lemma:map}, 
we know that the keys of all the entries for $x$ can be mapped to $(d,l)$ pairs
such that each entry is mapped to a distinct $d$ value and $l > 0$.

We have $Z.\kappa = d_x^*\cdot \gamma + l_x^*$, where $l_x^*$ is the hop-length of the shortest path from $x$ to $v$.
Let $Z^{''}$ be an entry for $x$ below $Z$ in $v$'s list. 
Then, $Z^{''}.\kappa \leq Z.\kappa$. 
It implies 
that $d^{''}\cdot \gamma \leq  d_x^*\cdot \gamma + (l_x^* -  l^{''}) <  d_x^*\cdot \gamma + h$
which gives 
$d^{''} <  d_x^* + h/\gamma $.
Since $d^{''} \geq d^*_x$,
there can be at most $h/\gamma$  entries for $x$ below $Z$
in $list_v$.
\end{proof}

\vspace{-.05in}

Using Lemma~\ref{lemma:belowSD} we can show that there are at most 
$h/\gamma + 1$
entries for source $x$ in $list_v$
in the case when the entry for  the shortest distance for $x$  is not the topmost entry in $list_v$. 

\begin{lemma}	\label{lemma:count}
For each source $x \in S$, 
$v$'s list has at most 
$h/\gamma + 1$
entries for $x$.
\end{lemma}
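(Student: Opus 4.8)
The plan is to reduce immediately to the two cases already flagged by the authors, and to handle each via Lemma~\ref{lemma:belowSD}. First, observe that for any source $x$ that has an entry in $list_v$ in round $r$, exactly one such entry carries the \currentSD flag (this follows from the maintenance of the flag in Step~\ref{alg2:setDummy} of Algorithm~\ref{alg2}, together with the initialization): call it $Z^*$, the current shortest path entry for $x$. By Lemma~\ref{lemma:belowSD}, the number of entries for $x$ strictly below $Z^*$ in $list_v$ is at most $h/\gamma$. So the total count for $x$ is $(\text{entries below } Z^*) + 1 + (\text{entries above } Z^*)$, and it suffices to show there are no entries for $x$ above $Z^*$ in $list_v$.

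The key step is to argue that $Z^*$ is in fact the topmost (largest-key) entry for source $x$ in $list_v$, except possibly for a transient situation that the insertion rule immediately cleans up. Here I would invoke Observation~\ref{obs:remove}: whenever an entry $Z$ for source $x$ is added to $list_v$ via {\sc Insert}, if there is a \dummy (non-SP) entry for $x$ above $Z$, the closest such entry is removed in Steps~\ref{alg2:findZ'}--\ref{alg2:remove} of {\sc Insert}. I would then track how an entry can come to lie above $Z^*$: a new SP entry $Z$ is always inserted with distance strictly smaller (or equal distance but smaller $\kappa$) than the previous SP entry, so when $Z$ becomes the SP entry, the old SP entry $Z^*$ becomes a \dummy entry lying below $Z$ (not above). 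Meanwhile any \dummy entry for $x$ is inserted only when the count check in Step~\ref{alg2:checkIfEnough} passes, and — by the {\sc Insert} logic plus Observation~\ref{obs:remove} — a \dummy insertion that would place an entry above the current SP entry triggers removal of the closest \dummy above it. Iterating this, I claim the net effect is that after each round's processing, the SP entry $Z^*$ sits at the top of the block of entries for $x$, i.e., no \dummy entry for $x$ ever survives above $Z^*$. Combining with Lemma~\ref{lemma:belowSD}, the count is at most $h/\gamma + 1$.

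I would then separate, as the authors indicate, the easy subcase where $Z^*$ is literally the topmost entry of $list_v$ (then the bound $h/\gamma + 1$ is immediate from Lemma~\ref{lemma:belowSD}) from the subcase where $Z^*$ is not the global top but is still the top among $x$'s entries (handled by the argument above). The statement just before the lemma ("Using Lemma~\ref{lemma:belowSD} we can show that there are at most $h/\gamma + 1$ entries for source $x$ in $list_v$ in the case when the entry for the shortest distance for $x$ is not the topmost entry in $list_v$") signals precisely this split, so I would structure the proof accordingly.

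The main obstacle I anticipate is the transient-state bookkeeping: within a single round, multiple messages for source $x$ may be processed in sequence (the \textbf{for each} loop over $I$), so one must check that the invariant "no \dummy entry for $x$ above the SP entry" is restored after \emph{each} individual {\sc Insert}, not merely at round boundaries, and that the interaction between a newly-arriving SP entry (which demotes the old SP entry to a \dummy \emph{below} it) and a simultaneously-arriving \dummy entry does not conspire to leave something stranded above $Z^*$. Pinning down that the closest-\dummy-above removal in {\sc Insert} exactly cancels each problematic insertion — i.e., that insertions above $Z^*$ and deletions above $Z^*$ are balanced step by step — is the delicate part; the rest is arithmetic from Lemma~\ref{lemma:belowSD}.
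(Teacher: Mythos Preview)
Your proposal has a genuine gap: the invariant you try to maintain --- that no \dummy entry for $x$ ever survives above the current SP entry $Z^*$ --- does not hold. You assert that ``when $Z$ becomes the SP entry, the old SP entry $Z^*$ becomes a \dummy entry lying below $Z$ (not above)'', but this confuses distance with key. The list is sorted by $(\kappa,d,x)$, and a new SP entry $Z$ is promoted whenever $Z.d < d_x^*$ (Step~\ref{alg2:checkZ}); nothing forces $Z.\kappa > Z^*.\kappa$. If the new $Z$ has smaller weight but sufficiently larger hop-length, then $Z.\kappa < Z^*.\kappa$, so the demoted $Z^*$ sits \emph{above} the new SP entry. (Even in the tie case $d=d_x^*$ the condition is $Z.\kappa < Z^*.\kappa$, which again puts the old $Z^*$ above.) Worse, there may already be several \dummy entries for $x$ between the new $Z$ and the old $Z^*$; {\sc Insert} removes only the single closest one above $Z$, leaving the rest --- including the demoted $Z^*$ --- stranded above the new SP entry. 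So the counting ``$\leq h/\gamma$ below $Z^*$, plus $Z^*$ itself, plus zero above'' breaks down.

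The paper's proof avoids this invariant entirely. It argues by contradiction: take an entry $Z$ of smallest key that is ever the $(h/\gamma + 2)$-th entry for $x$ in some $list_v$. If $Z$ was inserted as a \dummy, Lemma~\ref{lemma:countDummySet} pushes the same violation back to the sender $y$ at the strictly smaller key $Z^-.\kappa$, contradicting minimality. If $Z$ was inserted as an SP entry, Lemma~\ref{lemma:belowSD} bounds $Z.\nu$ by $h/\gamma + 1$ at insertion time, and Lemma~\ref{lemma:countDummy} (the count below a fixed \dummy entry never changes) shows this bound persists after $Z$ is later demoted. The minimality-plus-pushback structure is what makes the argument work without any control over where the SP entry sits relative to the other entries for $x$.
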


\vspace{-.05in}

In Lemmas~\ref{lemma:sendBound}-\ref{lemma:sp} we establish
an upper bound on the round $r$ by which $v$ receives a shortest path entry $Z^*$.
 
 \begin{lemma}	\label{lemma:sendBound}
If an entry $Z$ is added to $list_v$ in round $r$ then 
 $r  < \lceil Z.\kappa + pos^{r}_v (Z) \rceil$.
\end{lemma}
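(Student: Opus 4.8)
The plan is to prove Lemma~\ref{lemma:sendBound} directly: I would trace the single event that inserts $Z$ into $list_v$ and then combine the sending rule of Algorithm~\ref{alg2} with the position bound of Corollary~\ref{cor:pos2}. First I would dispose of the case $r = 0$: the only insertions in round $0$ happen in Step~\ref{alg2:initAdd} of the Initialization, so $Z = (0,0,0,x)$ for a source $x$, with $Z.\kappa = 0$ and $pos_v^0(Z) = 1$, and $\lceil Z.\kappa + pos_v^0(Z)\rceil = 1 > 0 = r$ holds trivially. For $r \geq 1$, every insertion into $list_v$ occurs inside the receive loop (Steps~\ref{alg2:receiveStart}--\ref{alg2:receiveEnd}), so $Z$ is the entry formed in Step~\ref{alg2:Z} from a message $\langle Z^-, \ldots\rangle$ that some neighbor $y$ sends to $v$ in round $r$, and this is the case I would analyze.

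For that case I would assemble three facts. (1) Since $y$ sends $Z^-$ in round $r$, the sending condition in Step~\ref{alg2:sendStart} forces $\lceil Z^-.\kappa + pos_y^r(Z^-)\rceil = r$, hence $Z^-.\kappa + pos_y^r(Z^-) > r - 1$. (2) The key update in Step~\ref{alg2:updateKey} gives $Z.\kappa = Z^-.\kappa + w(y,v)\cdot\gamma + 1 \geq Z^-.\kappa + 1$, since $w(y,v) \geq 0$ and $\gamma \geq 0$. (3) Since $Z$ is in fact added to $list_v$, Corollary~\ref{cor:pos2} gives $pos_v^r(Z) \geq pos_y^r(Z^-)$. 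Adding (2) and (3) and then applying (1) yields
\[
Z.\kappa + pos_v^r(Z) \;\geq\; \bigl(Z^-.\kappa + pos_y^r(Z^-)\bigr) + 1 \;>\; r ,
\]
and since $r$ is an integer this gives $\lceil Z.\kappa + pos_v^r(Z)\rceil \geq r + 1 > r$, as claimed.

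The only genuinely non-routine input is fact (3): the inequality $pos_v^r(Z) \geq pos_y^r(Z^-)$ from Corollary~\ref{cor:pos2} (which in turn rests on Lemma~\ref{lemma:ci}). This is where the admission rule for non-shortest-path entries in Step~\ref{alg2:checkIfEnough} --- inserting $Z$ only when $list_v$ currently holds fewer than $Z^-.\nu$ entries for $x$ with key $\le Z.\kappa$ --- carries its weight, guaranteeing that $list_v$ ends up with at least as many entries at or below $Z$ as $list_y$ had at or below $Z^-$. Given Corollary~\ref{cor:pos2}, no separate induction on rounds is required, though the displayed computation is precisely the inductive step if one prefers to phrase it that way. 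I would also note, as a minor point, that $Z$ may be removed and reinserted into $list_v$ at a later round; the same computation applies verbatim to the round of each insertion, so the stated bound holds at every round in which $Z$ is added.
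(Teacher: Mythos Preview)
Your proof is correct and essentially identical to the paper's: both combine the sending condition $r = \lceil Z^-.\kappa + pos_y^r(Z^-)\rceil$, the strict key increase $Z.\kappa \geq Z^-.\kappa + 1$, and the position inequality $pos_v^r(Z) \geq pos_y^r(Z^-)$ from Corollary~\ref{cor:pos2}. The paper frames the same computation as a minimal-counterexample argument (without ever actually using minimality) and starts its base case at round~$1$ rather than handling round~$0$, but the substance is the same.
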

\vspace{-.2in}
\begin{proof}
The lemma holds in the first round since all entries have non-negative $\kappa$, any received entry has hop length at least 1, and the lowest
position is 1 so  for any entry $Z$  received by $v$ in round 1, $\lceil Z.\kappa + pos^1_v (Z) \rceil \geq  1+1 > 1$.

Let $r$ be the first round (if any) in which the lemma is violated, and let it occur 
when entry $Z$ is added to $list_v$.
So $r \geq  \lceil Z.\kappa + pos^r_v (Z) \rceil$. 
 Let $r_1 = \lceil Z.\kappa + pos^r_v (Z) \rceil$ (so $r_1\leq r$ by assumption).

Since $Z$ was added to $list_v$ in round $r$,  $Z^-$ was sent to $v$ by a node $y$ 
 in round $r$. So 
by Step~\ref{alg2:sendStart}
 $r= \lceil Z^-.\kappa + pos_y^r(Z^-) \rceil$. But $Z.\kappa > Z^-.\kappa$ and
 $pos_v^r(Z) \geq pos_y^r(Z^-)$, hence $r$ must be less than $\lceil Z.\kappa + pos^r_v (Z) \rceil$.
\end{proof}
   
   \begin{lemma}	\label{lemma:sp}
 Let  $\pi^*_{x,v}$ be a shortest path from source $x$ to $v$ with the minimum number of hops among
 $h$-hop
 shortest paths from $x$ to $v$. Let  $\pi^*_{x,v}$ have $l^*$ hops and
shortest path distance  $d^{*}_{x,v}$.
 Then $v$ 
receives
an entry 
  $Z^*= (\kappa, d^{*}_{x,v}, l^*, x)$ 
by round 
$r < \lceil Z^*.\kappa + pos_v^r (Z^*)\rceil$.
 \end{lemma}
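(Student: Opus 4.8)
The plan is to prove by induction on $l^*$, the hop-length of the minimum-hop $h$-hop shortest path $\pi^*_{x,v}$ from $x$ to $v$, that $v$ receives (and adds to $list_v$) the entry $Z^* = (\kappa, d^*_{x,v}, l^*, x)$ with $\kappa = d^*_{x,v}\cdot\gamma + l^*$, and that this happens by a round $r$ satisfying $r < \lceil Z^*.\kappa + pos^r_v(Z^*)\rceil$. The base case $l^*=0$ is just $v=x$: the initialization places $(0,0,0,x)$ on $list_x$ with the flag-$d^*$ set, so the claim holds trivially (and vacuously for the round bound, since it is already present at round $0$). For the inductive step, let $y$ be the predecessor of $v$ on $\pi^*_{x,v}$, so the prefix $\pi^*_{x,y}$ has $l^*-1$ hops and is itself a shortest path from $x$ to $y$ of minimum hop-length among $h$-hop shortest paths — here I need the standard subpath-optimality observation, that a minimum-hop $h$-hop shortest path has minimum-hop $h$-hop-shortest prefixes, which follows because any shorter or fewer-hop prefix could be substituted. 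By induction, $y$ receives and holds an entry $Z^-_{\text{pref}} = (d^*_{x,y}\cdot\gamma + (l^*-1),\, d^*_{x,y},\, l^*-1,\, x)$ on $list_y$.

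The key intermediate claim is that this prefix entry on $list_y$ is (or becomes) the current shortest-path entry for $x$ at $y$, i.e., has its flag-$d^*$ set, at the time $y$ sends it. This is where Lemma~\ref{lemma:map} does the work: by that lemma the flag-$d^*$ entry for $x$ at $y$ carries the true shortest distance $d^*_{x,y}$, and since $Z^-_{\text{pref}}$ also has distance $d^*_{x,y}$, and among entries for $x$ with distance $d^*_{x,y}$ the flagged one is the one with smallest $(\kappa, \ldots)$ — hence smallest hop-length, hence exactly $l^*$-minus-one hops by the subpath property — the entry $y$ sends with the flag set is precisely $Z^-_{\text{pref}}$ (or an entry with the same $(d,l)$, which generates the same $Z^*$ at $v$). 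When $y$ sends this flagged entry in Step~\ref{alg2:sendEnd}, $v$ forms $Z = Z^*$ in Step~\ref{alg2:Z} with $\kappa = Z^-_{\text{pref}}.\kappa + w(y,v)\gamma + 1 = d^*_{x,v}\gamma + l^*$, $d = d^*_{x,v}$, $l = l^*$, and $l^*\le h$. In Step~\ref{alg2:checkZ}, since $d^*_{x,v}$ is the true shortest distance, we have $d \le d^*_x$ (the stored estimate at $v$); if $d < d^*_x$ the entry is inserted as the new SP entry; if $d = d^*_x$ the tie-break on $(\kappa, p)$ is handled — and I must argue that either way $v$ ends up holding an entry with distance $d^*_{x,v}$ and hop-length $l^*$, i.e. exactly $Z^*$, possibly after the stored SP entry is replaced or is already optimal; since we only need that $v$ \emph{receives} such an entry, the "else" branch concerns are moot once the flagged prefix is sent.

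For the round bound, once $Z^* = Z$ is added to $list_v$ in some round $r$, Lemma~\ref{lemma:sendBound} gives immediately $r < \lceil Z^*.\kappa + pos^r_v(Z^*)\rceil$, which is exactly the claimed inequality. The main obstacle I anticipate is the bookkeeping in the inductive step: pinning down that $y$ actually \emph{sends} an entry whose updated image at $v$ has the correct $(d^*_{x,v}, l^*)$, rather than merely sending \emph{some} shortest-distance entry for $x$ whose hop-length might differ. The resolution is to combine (a) Lemma~\ref{lemma:map}'s guarantee that the flag-$d^*$ entry carries $d^*_{x,y}$, (b) the tie-breaking rule in Step~\ref{alg2:checkZ} which ensures the flagged entry has the smallest hop-length among distance-$d^*_{x,y}$ entries, and (c) subpath optimality forcing that smallest hop-length to be exactly $l^*-1$ — so the flag-$d^*$ entry $y$ sends is guaranteed to update to $Z^*$ at $v$. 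A secondary subtlety is that $Z^*$ might be removed from $list_v$ in a later round; but the lemma only asserts that $v$ \emph{receives} $Z^*$ by round $r$, so once it is added in round $r$ (even if later replaced) the statement holds, and Observation~\ref{obs:posremove} guarantees any replacement preserves the relevant distance.
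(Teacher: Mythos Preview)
Your proposal is correct and follows essentially the same strategy as the paper: induct along the path $\pi^*_{x,v}$, invoke subpath optimality for the prefix $\pi^*_{x,y}$, and then appeal to Lemma~\ref{lemma:sendBound} for the round inequality. The paper inducts on the key value $\kappa$ (over all pairs $x,v$) rather than on $l^*$, but since the predecessor's entry has both strictly smaller $\kappa$ and strictly smaller hop-length, the two induction schemes are interchangeable here.

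One point worth noting: you spend considerable effort arguing that the entry $y$ sends carries \currentSD\ $=true$ so that Step~\ref{alg2:checkZ} fires at $v$. The paper's proof is terse on this and simply asserts that once $Z^-$ is sent, ``$v$ adds the shortest path entry $Z^*$ to $list_v$''; your more careful treatment of this point (via the tie-breaking rule in Step~\ref{alg2:checkZ} forcing the flagged entry at $y$ to have exactly $(d^*_{x,y},\,l^*-1)$) is a legitimate elaboration of a step the paper leaves implicit. Your appeal to Lemma~\ref{lemma:map} for this is a slight detour---the fact that the flagged entry has the smallest distance seen so far follows directly from Step~\ref{alg2:checkZ} without needing the $(d,l)$-mapping machinery---but it is not wrong.
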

  \vspace{-.2in}
 \begin{proof}
 If an entry 
  $Z^*= (\kappa, d^{*}_{x,v}, l^*, x)$
  is placed on $list_v$ by $v$
 then by Lemma~\ref{lemma:sendBound}
 it is received before round  $\lceil Z^*.\kappa + pos_v^r (Z^*) \rceil$ and hence
  it will be sent in round $r = \lceil Z^*.\kappa + pos_v^r (Z^*) \rceil$
 in Step~\ref{alg2:sendEnd}.
  It remains to show that an entry for path $\pi^*_{x,v}$   is received by $v$. We
 establish  this for all pairs $x,v$ by induction on key value $\kappa$. 
 
   If $\kappa = 0$, then it implies that the shortest path is the vertex $x$ itself and 
  thus the statement holds for $\kappa = 0$.
 Let us assume that the statement holds for all keys $< \kappa$ and consider the path $\pi^*_{x,v}$ with key 
 $\kappa = d^{*}_{x,v} \cdot \gamma + l^*$.
 
   Let $(y,v)$ be the last edge on the
  path $\pi^*_{x,v}$
  and let $\pi^*_{x,y}$ be the subpath of $\pi^*_{x,v}$ from $x$ to $y$.
  By construction the path $\pi^*_{x,y}$ is a shortest path from $x$ to $y$ and its hop length $l^* -1$ is the smallest
  among all shortest paths from $x$ to $y$. Hence by the inductive assumption an entry 
   $Z^{-}$ with  $Z^{-}.\kappa = d^*_{x,y} \cdot \gamma + l^* -1$ (which is strictly less than $Z^{*}.\kappa$)
   is received by $y$ before round $\lceil Z^{-}.\kappa + pos_y^{'} (Z^{-}) \rceil$ (by Lemma~\ref{lemma:sendBound})
   and is then sent to $v$ in round $r' = \lceil Z^{-}.\kappa + pos_y^{r'} (Z^{-}) \rceil$ in Step~\ref{alg2:sendEnd}.
  Thus $v$ adds the shortest path entry for $x$, $Z^*$,
  to $list_v$ by end of round $r'$.
   \end{proof} 
 
 \vspace{-0.05in}
 
In Lemma~\ref{lemma:hhopkSSP} we establish an upper bound on the round $r$ 
by which Algorithm~\ref{alg2} terminates.
  
   \vspace{-0.05in}

    \begin{lemma}	\label{lemma:hhopkSSP}
    Let $\Delta$ be the maximum shortest path distance in the $h$-hop paths.
   Algorithm~\ref{alg2} correctly computes the $h$-hop shortest path distances from each source $x \in S$ to each node $v \in V$
   by round 
 $\lceil \bigtriangleup\gamma + h + \bigtriangleup\cdot\gamma + k \rceil$.
   \end{lemma}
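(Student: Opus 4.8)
The plan is to read off the round bound directly from the two invariants together with Lemma~\ref{lemma:sp}, which already guarantees that the genuine shortest‑path entry reaches every node; the correctness statement itself is inherited from the earlier correctness lemmas. First I would upgrade Invariant~2 (Lemma~\ref{lemma:count}) from a per‑source bound to a bound on the whole list: each of the $k$ sources contributes at most $h/\gamma + 1$ entries to $list_v$, so $|list_v| \le k(h/\gamma + 1) = kh/\gamma + k$ at every node and in every round, and hence $pos^r_v(Z) \le kh/\gamma + k$ for every entry $Z$. Substituting $\gamma = \sqrt{kh/\bigtriangleup}$ yields the identity $kh/\gamma = \sqrt{\bigtriangleup kh} = \bigtriangleup\gamma$, so in fact $pos^r_v(Z) \le \bigtriangleup\gamma + k$ uniformly.

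Next I would fix a source $x \in S$ and a node $v$, take $\pi^*_{x,v}$ to be a minimum‑hop $h$-hop shortest path with $l^*$ hops and distance $d^*_{x,v} \le \bigtriangleup$, and invoke Lemma~\ref{lemma:sp}: $v$ receives and adds the entry $Z^* = (\kappa, d^*_{x,v}, l^*, x)$ to $list_v$ in some round $r$ with $r < \lceil Z^*.\kappa + pos^r_v(Z^*)\rceil$. I would then bound the right‑hand side using $Z^*.\kappa = d^*_{x,v}\cdot\gamma + l^* \le \bigtriangleup\gamma + h$ (since $d^*_{x,v}\le\bigtriangleup$ and $l^*\le h$) together with the list‑size bound $pos^r_v(Z^*) \le \bigtriangleup\gamma + k$, obtaining $r < \lceil \bigtriangleup\gamma + h + \bigtriangleup\gamma + k\rceil$. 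When $v$ processes $Z^*$ the test in Step~\ref{alg2:checkZ} succeeds (the incoming entry is the SP entry at the predecessor, $l^*\le h$, and $d^*_{x,v}\le d_x^*$) unless $d_x^*$ already equals $d^*_{x,v}$; either way, after that round $d_x^* = d^*_{x,v}$ at $v$ via Step~\ref{alg2:setDummy}. Combining with the earlier facts that $d_x^*$ at $v$ is always the weight of an actual path from $x$ to $v$ (hence $\ge d^*_{x,v}$) and never increases, $d_x^*$ stays at $d^*_{x,v}$ thereafter; and for $v$ not reachable from $x$ within $h$ hops, $d_x^* = \infty$ throughout, which is also correct. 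Ranging over all pairs $x,v$ then completes the proof.

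The step needing the most care is applying the position bound to $Z^*$ at the exact round in which it is inserted, so the argument leans entirely on the \emph{uniform} validity of Invariant~2, and on tracking the ceilings so that ``received strictly before round $\lceil 2\bigtriangleup\gamma + h + k\rceil$'' cleanly delivers ``computed correctly by round $\lceil \bigtriangleup\gamma + h + \bigtriangleup\gamma + k\rceil$''. No genuinely new difficulty arises here: the substantive work — the $(d,l)$-relabelling of Lemma~\ref{lemma:map}, the counting of Lemmas~\ref{lemma:belowSD}--\ref{lemma:count}, and the reachability induction of Lemma~\ref{lemma:sp} — is already done, and this lemma is the final assembly plus the arithmetic identity $kh/\gamma = \bigtriangleup\gamma$.
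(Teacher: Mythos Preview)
Your proposal is correct and follows essentially the same route as the paper: bound $Z^*.\kappa \le \bigtriangleup\gamma + h$, bound $pos_v^r(Z^*)$ via Lemma~\ref{lemma:count} summed over the $k$ sources using the identity $kh/\gamma = \bigtriangleup\gamma$, and combine through Lemma~\ref{lemma:sp}. Your additional paragraph verifying that $d_x^*$ actually settles to the correct value is more explicit than the paper's proof, which simply asserts correctness once $Z^*$ has been received; otherwise the arguments coincide.
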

   \vspace{-.2in}
   \begin{proof}
   An $h$-hop shortest path has hop-length at most $h$ and 
   weight at most $\bigtriangleup$, 
   hence a key corresponding to a shortest path entry 
   will have value at most $\bigtriangleup\gamma + h$.
   Thus by Lemma~\ref{lemma:sp},
   for every source $x \in S$ every node $v \in V$
should have received the shortest path distance entry, $Z^*$,
   for source $x$ by round $r = \lceil \bigtriangleup\gamma + h + pos_v^r (Z^*) \rceil$.
   
   Now we need to bound the value of $pos_v^r (Z^*)$.
   By Lemma~\ref{lemma:count}, 
   we know that there are at most $h/\gamma + 1$ entries 
   for each source $x \in S$ in a node $v$'s list.
   Now as there are $k$ sources,
   $v$'s list has at most 
   $(h/\gamma+ 1)\cdot k \leq \gamma\cdot \bigtriangleup + k$ 
   entries,
   thus $pos_v^r (Z^*) \leq \gamma\cdot \bigtriangleup + k$ and
   hence $r \leq \lceil \Delta\gamma + h + \gamma\cdot \Delta + k \rceil$.
   \end{proof}

\vspace{-.05in}   
  Since $\gamma =  \sqrt{hk/\Delta}$, Lemma~\ref{lemma:hhopkSSP}  establishes 
  Theorem~\ref{thm:alg2}.
 
\section{Conclusion}

We have presented  a new approach to the distributed computation of shortest paths in a graph with
non-negative integer weights (including zero weights). Our deterministic pipelined distributed algorithms for
weighted shortest paths (both APSP, and for $k$ sources) is novel and very simple. Its asymptotic performance improves
on  $\tilde{O}(n^{3/2})$ rounds,  the current best deterministic distributed algorithm for this problem~\cite{ARKP18},
only in very special cases when shortest path distances are smaller than $n$. But the algorithm may be relevent even for larger
shortest path lengths since it is very simple, and has the very small constant factor  2 in the leading term. As noted in the introduction,
we have built on this algorithm to achieve several new results, including improved  deterministic APSP for moderately
large non-negative integer weights (including zero weights)~\cite{AR18b}.

 A major open problem left by our work is whether we can come up with a similar pipelining strategy when working with the
 scaled graph in Gabow's scaling technique~\cite{Gabow85}. 
 Our current pipelined algorithm assumes that all sources see the same 
 weight on each edge, while in the scaling algorithm each source sees a different edge weight on a given 
 edge.
 We could obtain a deterministic $\tilde{O}(n^{4/3})$-round
 APSP algorithm with non-negative polynomially bounded integer weights if our pipelined strategy can be
 made to work with
 Gabow's scaling technique~\cite{Gabow85}. While this can be handled with $n$ different SSSP computations in conjunction with the randomized
 scheduling result of Ghaffari~\cite{Ghaffari15}, it will be very interesting to see if a deterministic pipelined
 strategy could achieve the same result.

\bibliographystyle{abbrv}
\bibliography{references}

\appendix

\hide{
\section{Appendix}

\subsection{Correctness of Algorithm~\ref{alg2}}	\label{sec:proofsAlg2}
}

\section{Appendix: Correctness of Algorithm~\ref{alg2}}	\label{sec:proofsAlg2}

\underline{\bf Observations and Lemmas~\ref{obs:posremove}-\ref{lemma:countDummySet}:}
In the following Observations and Lemmas
we point out the key facts about an entry $Z$ in $list_v$
in our Algorithm~\ref{alg2}.
We use these in our proofs 
in this section.

\begin{observation*}{\bf \ref{obs:posremove}.}
Let $Z$ be an entry for a source $x \in S$ added to $list_v$ in round $r$. 
Then if $Z$ is removed 
from $list_v$
in a round $r' \geq r$, 
it was replaced by another entry for $x$, $Z'$,
such that $pos_v^{r'} (Z) > pos_v^{r'} (Z')$ and $Z.\kappa \geq Z'.\kappa$.
\end{observation*}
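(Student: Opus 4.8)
The plan is to observe that Algorithm~\ref{alg2} (together with the {\sc Initialization} and {\sc Insert} procedures) has exactly one line that deletes an element of $list_v$, namely Step~\ref{alg2:remove} of procedure {\sc Insert}, and then to read the two claimed inequalities directly off the way Steps~\ref{alg2:checkIfExists}--\ref{alg2:remove} choose the entry to remove. Concretely: since $Z$ is removed from $list_v$ in round $r'\ge r$, that removal occurred inside an execution of {\sc Insert}$(Z')$ triggered in round $r'$ from Step~\ref{alg2:insert1} or Step~\ref{alg2:checkIfEnough}, where $Z'$ is the entry being added. If several incoming messages are processed in the for-loop of round $r'$, I fix the particular iteration whose {\sc Insert} call removed $Z$, and read $pos^{r'}_v(\cdot)$ as the positions immediately before that removal, i.e. just after $Z'$ has been inserted in Step~\ref{alg2:addZ}; this is the same within-round bookkeeping convention used throughout the analysis.

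Next I would unpack the guards. In {\sc Insert}$(Z')$, Step~\ref{alg2:addZ} first inserts $Z'$ into $list_v$ in sorted order of $(\kappa,d,x)$. For Step~\ref{alg2:remove} to delete $Z$, the condition of Step~\ref{alg2:checkIfExists} must hold and Step~\ref{alg2:findZ'} must select $Z$. By those steps, $Z$ is an entry \emph{for the same source $x$} as $Z'$, it is a \dummy entry (i.e. $Z.$\currentSD$ = false$), and it satisfies $pos^{r'}_v(Z) > pos^{r'}_v(Z')$. Moreover $Z \neq Z'$, since $Z'$ is the freshly inserted entry while $Z$ was already present on $list_v$; this justifies saying that $Z$ is \emph{replaced by another entry} $Z'$ for $x$ (one entry for $x$ leaves, one enters, within the same call). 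This already gives the position inequality $pos^{r'}_v(Z) > pos^{r'}_v(Z')$ in the statement.

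For the key inequality I would use that $list_v$ is maintained in sorted order of $(\kappa,d,x)$. Since $Z$ and $Z'$ carry the same source label $x$ and $Z$ lies strictly above $Z'$ in that order at round $r'$, the ordering rule forces $Z.\kappa \ge Z'.\kappa$ (and in the tie case $Z.\kappa = Z'.\kappa$ it further gives $Z.d \ge Z'.d$, though only $Z.\kappa \ge Z'.\kappa$ is needed), which is the second claim. I do not anticipate a genuine obstacle here: the only points requiring care are (i) verifying exhaustively that no other line of the algorithm removes a list entry, so that every removal is of this form, and (ii) being precise about how positions are counted when a single round processes several incoming messages, which is handled by fixing the relevant for-loop iteration as described above.
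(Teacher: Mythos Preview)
Your proposal is correct and follows essentially the same route as the paper's proof: both observe that the only deletion in the algorithm is Step~\ref{alg2:remove} of {\sc Insert}, that it is triggered by the insertion of an entry $Z'$ for the same source in Step~\ref{alg2:addZ}, and that the guard in Steps~\ref{alg2:checkIfExists}--\ref{alg2:findZ'} together with the $(\kappa,d,x)$ sort order immediately yield $pos_v^{r'}(Z) > pos_v^{r'}(Z')$ and $Z.\kappa \ge Z'.\kappa$. Your write-up is in fact more careful than the paper's on the within-round bookkeeping when several messages are processed in the same round.
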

\vspace{-0.2in}
\begin{proof}
An entry is removed from $list_v$ only in Step~\ref{alg2:remove} of {\sc Insert}, and this occurs
at most once in round $r'$ (through a call from either Step~\ref{alg2:insert1} or Step~\ref{alg2:checkIfEnough} of Algorithm~\ref{alg2}). But
immediately before that removal an entry $Z'$ with a smaller value was inserted in $list_v$ in Step~\ref{alg2:addZ}
of {\sc Insert}.
\end{proof}

\begin{lemma*}{\bf \ref{lemma:pos}.}
Let $Z$ be an entry in $list_v$. Then $pos_v^{r'} (Z) \geq pos_v^r (Z)$ for all rounds $r' > r$, 
for which $Z$ exists in $v$'s list.
\end{lemma*}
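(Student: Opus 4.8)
The statement asserts that the position $pos_v^r(Z)$ of a fixed entry $Z$ in $list_v$ is nondecreasing over the rounds in which $Z$ remains present. The plan is to track how $pos_v^r(Z)$ — which by definition counts the number of entries at or below $Z$ in $list_v$ — can change from one round to the next, and to argue that every event that could alter it either leaves it unchanged or increases it. Since the list only changes through the {\sc Insert} procedure (Step~\ref{alg2:addZ} adds an entry, Step~\ref{alg2:remove} removes one) and each call to {\sc Insert} does at most one addition and at most one removal, it suffices to consider a single invocation of {\sc Insert} in some round $r' > r$ and to show $pos_v^{r'}(Z) \geq pos_v^{r-1}(Z)$ for the round immediately preceding it; induction on rounds then gives the full claim. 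In fact it is cleanest to argue round-by-round: assume $Z$ is present in $list_v$ at the end of both round $r'-1$ and round $r'$, and show $pos_v^{r'}(Z) \geq pos_v^{r'-1}(Z)$.

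The core case analysis for one {\sc Insert} call (with inserted entry $W$ and, possibly, removed entry $W'$): (i) if $W$ is inserted at or below $Z$, the count of entries at or below $Z$ goes up by $1$; (ii) if $W$ is inserted above $Z$, the count is unaffected by the insertion. For the removal in Step~\ref{alg2:remove}: the removed entry $W'$ is, by Step~\ref{alg2:findZ'}, the closest non-SP entry \emph{above} $W$, so $pos(W') > pos(W)$ at the moment of removal. The key sub-claim is that $W'$ lies \emph{above} $Z$, so its removal does not decrease the count of entries at or below $Z$. In case (i), since $W$ is at or below $Z$ and $W'$ is above $W$... that alone does not place $W'$ above $Z$, so I need to be careful: I would use that $W'$ is for the same source $x$ as $W$, is non-SP, and is the \emph{closest} such entry above $W$; combined with the fact that removals in earlier rounds (via Observation~\ref{obs:posremove}) only ever replace an entry by a lower-keyed one for the same source, I would argue that whenever $W$ is inserted at or below $Z$ and triggers a removal, the removed entry $W'$ cannot be weakly below $Z$ — if it were, $Z$ itself (or an entry between $W$ and $Z$) would be a closer non-SP candidate, or $Z$ would be the SP entry being displaced, contradicting that $Z$ survives into round $r'$. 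In case (ii), $W$ and hence $W'$ are both above $Z$, so neither touches the count at or below $Z$, and $pos_v^{r'}(Z) = pos_v^{r'-1}(Z)$.

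The main obstacle I anticipate is precisely pinning down case (i): showing that an insertion at or below $Z$ never removes an entry at or below $Z$. The subtlety is that $W'$ need not equal $Z$ or be adjacent to it, and the "closest non-SP entry above $W$" could in principle be below $Z$ if there are several non-SP entries for $x$ sandwiched between $W$ and $Z$. I would resolve this by invoking the ordering of the list together with the invariant (implicit in how SP flags are maintained in Step~\ref{alg2:setDummy} and Step~\ref{alg2:checkZ}) that the SP entry for $x$ has the smallest $(d,\kappa)$ among $x$'s entries, hence sits at or below all non-SP entries for $x$; so if $Z$ is the SP entry it cannot have a non-SP entry of $x$ below it to be removed, and if $Z$ is non-SP then any $W'$ strictly between $W$ and $Z$ would violate "closest above $W$" unless $W'=Z$, but $Z$ survives round $r'$ by hypothesis. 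Either way the removed $W'$ is strictly above $Z$, completing case (i). With both cases done, $pos_v^{r'}(Z) \geq pos_v^{r'-1}(Z)$ in every round, and chaining the inequalities from round $r$ to round $r'$ yields the lemma.
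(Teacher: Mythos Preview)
Your overall plan—tracking $pos_v(Z)$ across each call to {\sc Insert} and arguing the net change is nonnegative—is the right one, and it is essentially what the paper does (the paper phrases it as a one-line contradiction via Observation~\ref{obs:posremove}). However, your execution of case~(i) contains a genuine error.

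Your ``key sub-claim'' that the removed entry $W'$ must lie strictly above $Z$ is \emph{false}. Nothing in the lemma says $Z$ is for the same source as the inserted entry $W$; your argument (``$Z$ itself would be a closer non-SP candidate, or $Z$ would be the SP entry being displaced'') silently assumes this. A counterexample: take $Z$ for source $y$ at position~$3$, a non-SP entry $X_2$ for source $x$ at position~$2$, and insert $W$ for source $x$ at position~$1$. Then $W'=X_2$ is the closest non-SP entry for $x$ above $W$, it is removed, and it sits strictly below $Z$.

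The fix is that you do not need the sub-claim at all. In case~(i) the insertion of $W$ at or below $Z$ already contributes $+1$ to the count at or below $Z$; if the removed $W'$ happens also to lie at or below $Z$ (necessarily $W'\neq Z$ since $Z$ survives), that contributes $-1$, giving a net change of $0$; otherwise the net change is $+1$. Either way $pos_v(Z)$ does not decrease. This is exactly the content of Observation~\ref{obs:posremove}: whenever an entry is removed, the entry inserted in the same {\sc Insert} call has strictly smaller position, so a removal below $Z$ is always paired with an insertion below $Z$. With this correction your case analysis goes through and matches the paper's argument.
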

\begin{proof}
If not, then it implies that there exists $Z'$ such that $Z'$ was below $Z$ in $v$'s list in round $r$ 
and was replaced by another entry $Z^{''}$ that was above $Z$ in a  round $r^{''}$ such that $r' \geq r^{''} > r$
and hence $pos_v^{r^{''}} (Z^{''}) > pos_v^{r^{''}} (Z^{'})$.
But by  Observation~\ref{obs:posremove} this cannot happen and 
thus resulting in a contradiction.
\end{proof}

\begin{observation*}{\bf \ref{obs:remove}.}
Let $Z$ be an  entry for source $x$
that was added to $list_v$.
If there exists a \dummy entry for $x$ above $Z$ in $list_v$,
then the closest \dummy entry above $Z$ will be removed from $list_v$.
\end{observation*}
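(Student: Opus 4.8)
The plan is to prove this by unfolding the {\sc Insert} procedure, which (apart from a trivial case) is the only mechanism that places an entry on $list_v$. First I would observe that the entry $Z$ is added to $list_v$ only through a call {\sc Insert}$(Z)$, invoked either from Step~\ref{alg2:insert1} or from Step~\ref{alg2:checkIfEnough} of Algorithm~\ref{alg2}; the sole other addition is the self-entry $(0,0,0,v)$ that a source places on $list_v$ during Initialization, but at that point $list_v$ contains no other entry, so the hypothesis ``there is a \dummy entry for $x$ above $Z$'' is vacuous. Hence it suffices to examine a call {\sc Insert}$(Z)$.

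Inside {\sc Insert}$(Z)$, Step~\ref{alg2:addZ} inserts $Z$ into $list_v$ in sorted order of $(\kappa,d,x)$. Immediately afterwards, Step~\ref{alg2:checkIfExists} tests exactly the hypothesis of the observation: it checks whether there is an entry $Z'$ for source $x$ with $Z'.$\currentSD$=false$ and $pos(Z')>pos(Z)$, i.e.\ a \dummy entry for $x$ lying above $Z$ in $list_v$. If the test succeeds, Step~\ref{alg2:findZ'} selects the $Z'$ of smallest position meeting this condition --- that is, the closest \dummy entry above $Z$ --- and Step~\ref{alg2:remove} deletes it from $list_v$. Since Step~\ref{alg2:findZ'} is guaranteed to find such a $Z'$ whenever the Step~\ref{alg2:checkIfExists} guard holds, the removal always takes place, which is precisely what the observation asserts.

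There is no genuine obstacle here; the points that merely need care are: (a) confirming that every addition of an entry to $list_v$ indeed routes through {\sc Insert} (or is the vacuous Initialization case), so no other code path must be considered; and (b) noting that the guard in Step~\ref{alg2:checkIfExists} and the removal in Step~\ref{alg2:remove} both refer to the state of $list_v$ right after $Z$ has been inserted in Step~\ref{alg2:addZ} --- in particular $Z$ itself is not counted among the entries ``above $Z$'', and a subsequent {\sc Insert} call in the same round (triggered by a later message $M\in I$) operates on the already-updated list. With these observations in place, the statement follows directly from the pseudocode.
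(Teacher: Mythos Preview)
Your proposal is correct and follows the same approach as the paper, which simply notes that the observation is immediate from Steps~\ref{alg2:addZ}--\ref{alg2:remove} of {\sc Insert}. Your additional care in checking that every addition routes through {\sc Insert} (or is the vacuous Initialization case) and in noting the timing of the guard is sound but more detailed than what the paper provides.
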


\begin{proof}
This is immediate from Steps~\ref{alg2:addZ}-\ref{alg2:remove} of the procedure {\sc Insert}.
\end{proof}

\begin{observation*}{\bf \ref{obs:Znotadded}.}
Let $Z^{-}$ be an entry for a source $x$ sent from  $y$ to $v$ in round $r$,
and let $Z$ be the corresponding entry created for possible addition to $list_v$ in 
Step~\ref{alg2:Z}  of
Algorithm~\ref{alg2}.  
If $Z$ is not added to $list_v$, then there is an entry $Z'\neq Z$ in $list_v$ with $Z'.$\currentSD$ = true$,
and there are at least $Z^-.\nu$ entries for $x$ with $key \leq Z.\kappa$ at the end of round $r$.
\end{observation*}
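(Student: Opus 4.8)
The plan is to read off both conclusions from which branch of Algorithm~\ref{alg2} is executed when the message carrying $Z^-$ is processed at $v$ in round $r$. The ``then'' branch, Steps~\ref{alg2:setDummy}--\ref{alg2:insert1}, always calls {\sc Insert}$(Z)$, and {\sc Insert} always performs Step~\ref{alg2:addZ}; so if $Z$ is not added, the ``else'' branch, Step~\ref{alg2:checkIfEnough}, must have been taken with its guard false, i.e.\ at that instant $list_v$ already had at least $Z^-.\nu$ entries for $x$ with key $\le Z.\kappa$. To push this to the end of round $r$, I would check that no later {\sc Insert} in round $r$ decreases the number of entries for $x$ with key $\le Z.\kappa$: {\sc Insert}$(Z'')$ adds one entry and, by Steps~\ref{alg2:checkIfExists}--\ref{alg2:remove}, removes at most one entry, which is a non-SP entry \emph{for the same source as} $Z''$ lying \emph{above} $Z''$; a short case check (on whether that source is $x$, and on where $Z''$ and the removed entry sit relative to key $Z.\kappa$) shows the net change is $\ge 0$. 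This gives the count clause, and since $Z^-$ is itself counted in $Z^-.\nu$ we get $Z^-.\nu\ge 1$, so $list_v$ holds at least one entry for $x$ at the end of round $r$ --- a fact reused below.

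For the clause asserting an SP entry $Z'\ne Z$ for $x$ in $list_v$, I would split on why the guard of Step~\ref{alg2:checkZ} failed. If it failed in the distance / hop-length / parent comparison while $Z^-.$\currentSD$ = true$ and $l\le h$, then $d\ge d_x^*$ with $d$ finite, so $d_x^*\ne\infty$ and an entry $Z^*$ for $x$ with $Z^*.$\currentSD$ = true$ is present; it differs from $Z$ (whose flag is set to $false$ in Step~\ref{alg2:Z}), and it survives to the end of the round because Step~\ref{alg2:remove} never removes an SP entry and Step~\ref{alg2:setDummy} only transfers a source's SP flag onto a freshly inserted entry. So take $Z'=Z^*$. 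In the remaining case ($Z^-.$\currentSD$ = false$, or $l>h$ with $Z^-.$\currentSD$ = true$), I would invoke the auxiliary invariant that $list_v$ contains an SP entry for $x$ whenever it contains any entry for $x$; together with the fact above that $list_v$ has an entry for $x$, this yields $Z'$ ($\ne Z$, since $Z\notin list_v$).

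The main obstacle is isolating and proving the correct form of that auxiliary invariant. I would prove it by induction on the round number, and within a round on the order in which incoming messages are processed; the only dangerous step is inserting a \emph{first} entry for $x$ into $list_v$ via Step~\ref{alg2:checkIfEnough} while $d_x^*=\infty$, which requires the incoming entry to have $Z^-.$\currentSD$ = false$ or hop-length above $h$. To rule this out I would fix a minimum-hop shortest path $x=u_0,u_1,\dots,u_t=v$ and show, using the send rule, Lemma~\ref{lemma:pos}, and the fact that the key $\kappa=d\gamma+l$ records the hop count explicitly, that along this path the SP-subpath entry for $x$ reaches each $u_{i+1}$ no later than any entry for $x$ that Algorithm~\ref{alg2} ever inserts into $list_{u_{i+1}}$, so that the first entry for $x$ inserted anywhere along the path is an SP entry (hence of hop-length $\le h$). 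Getting this timing comparison exactly right --- in particular separating the Step~\ref{alg2:checkIfEnough} insertions, which can also admit over-$h$-hop entries, from the propagation of genuine SP entries --- is the technical heart of the argument, and I expect it to be handled by a simultaneous induction of the same flavour as the later argument for Lemma~\ref{lemma:sp}.
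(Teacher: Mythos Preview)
The paper's own proof is two lines: it simply points to Steps~\ref{alg2:pickZ*}, \ref{alg2:checkZ}, and~\ref{alg2:checkIfEnough} and declares the observation immediate.  Your argument is considerably more careful and, in one respect, more correct.

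For the count clause you do essentially what the paper does---read off the failure of the guard in Step~\ref{alg2:checkIfEnough}---but you add the within-round monotonicity check that no later {\sc Insert} during round~$r$ can drop the count of entries for $x$ with key $\le Z.\kappa$.  That refinement is sound and is absent from the paper.

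For the SP-entry clause you spot something the paper glosses over.  The paper's sentence ``the current entry $Z^*$ \ldots is verified to have a shorter distance'' tacitly assumes that the test in Step~\ref{alg2:checkZ} failed at the distance/key/parent comparison, which indeed forces $d_x^*<\infty$ and hence an existing SP entry.  But as you note, the test can also fail because $Z^-.$\currentSD$=false$ or because $l>h$, and then nothing in Steps~\ref{alg2:pickZ*}--\ref{alg2:checkZ} by itself guarantees that $Z^*$ exists.  Your remedy---an auxiliary invariant that $list_v$ holds an SP entry for $x$ whenever it holds any entry for $x$, proved by a minimum-hop-path timing argument in the style of Lemma~\ref{lemma:sp}---is a genuinely different and much heavier route than anything the paper invokes at this point; you are essentially front-loading the machinery of Lemma~\ref{lemma:sp} into an observation the paper treats as a triviality.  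Whether that invariant admits a lighter proof (e.g.\ by tracking the first entry for $x$ ever inserted at each node) is worth exploring before committing to the full timing induction.

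It is also worth noting that the only uses of Observation~\ref{obs:Znotadded} downstream (in Lemmas~\ref{lemma:countDummySet} and~\ref{lemma:ci}) invoke the count clause alone, so the subtlety you uncover in the SP-entry clause, while real, is not load-bearing for the rest of the development.
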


\begin{proof}
This is immediate from 
Steps~\ref{alg2:pickZ*} and \ref{alg2:checkZ} of Algorithm~\ref{alg2} where the current entry $Z^*$ with $Z^*.$\currentSD$ = true$ is verified to have a 
shorter distance (or a smaller key if $Z$ and $Z^*$ have the same distance), and by
the check in Step~\ref{alg2:checkIfEnough}.
\end{proof}

\begin{observation}	\label{obs:SPlength}
Let $Z^*$ be a current SP entry for a source $x \in S$ present in $list_v$. Then $Z^*.l \leq h$.
\end{observation}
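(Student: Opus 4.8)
Observation \ref{obs:SPlength} claims that if $Z^*$ is a current SP entry for source $x$ at $list_v$, then $Z^*.l \leq h$; that is, the SP entry stored at a node never has hop-length exceeding the hop-bound $h$.

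The plan is to argue by tracking how SP entries are created. An SP entry is installed only in Step~\ref{alg2:setDummy}--\ref{alg2:insert1} of Algorithm~\ref{alg2}, which is guarded by the condition in Step~\ref{alg2:checkZ}. That condition explicitly includes the clause $l \leq h$, where $l = l^- + 1$ is the hop-length recorded in the candidate entry $Z$ (set in Step~\ref{alg2:updateKey}). So at the moment an entry becomes the current SP entry, its $l$ field is at most $h$. The only remaining concern is whether an entry that is \emph{already} on $list_v$ can later have its SP flag set without going through Step~\ref{alg2:checkZ}; inspecting the algorithm, the flag \currentSD\ is set to $true$ only in Step~\ref{alg2:initAdd} of Initialization (where $l = 0 \leq h$) and in Step~\ref{alg2:setDummy} (guarded as above), and the $l$ field of an entry is never modified after the entry is created. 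Hence every entry that is ever the current SP entry had $l \leq h$ at creation, and therefore has $l \leq h$ for as long as it exists.

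The main (and only mild) obstacle is making sure the case analysis of "how can an entry become the SP entry" is exhaustive: one must confirm that an incoming message's $Z^-.$\currentSD\ flag being $true$ does not by itself propagate the SP flag — it is merely a \emph{precondition} in Step~\ref{alg2:checkZ}, and the full conjunction including $l \leq h$ must hold. I would write the proof in two or three sentences: first state that an entry's $l$ value is fixed at creation; then observe the SP flag is turned on only in Initialization (with $l=0$) or in Step~\ref{alg2:setDummy}, whose guard Step~\ref{alg2:checkZ} forces $l \leq h$; conclude $Z^*.l \leq h$.

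\begin{proof}
An entry's $l$ field is assigned once, when the entry is created (in Step~\ref{alg2:initAdd} of Initialization, or in Step~\ref{alg2:Z} of Algorithm~\ref{alg2} after Step~\ref{alg2:updateKey} sets $l \la l^- + 1$), and is never subsequently modified. The \currentSD\ flag of an entry is set to $true$ only in Step~\ref{alg2:initAdd} of Initialization or in Step~\ref{alg2:setDummy} of Algorithm~\ref{alg2}. In the former case $l = 0 \leq h$. In the latter case, the assignment is executed only when the guard in Step~\ref{alg2:checkZ} holds, and that guard includes the clause $l \leq h$ for the entry $Z$ being marked. Hence any entry that is ever a current SP entry for $x$ had $l \leq h$ at the time of its creation, and therefore $Z^*.l \leq h$.
\end{proof}
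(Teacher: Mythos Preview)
Your proof is correct and follows the same approach as the paper, which simply states that the observation is immediate from the check in Step~\ref{alg2:checkZ}. Your version is more explicit in covering the initialization case and noting that the $l$ field is never modified after creation, but the core reasoning is identical.
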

\begin{proof}
This is immediate from the check in Step~\ref{alg2:checkZ}.
\end{proof}

The above Observation should be contrasted with the fact
 that $list_v$ could contain entries $Z$ with $Z.l > h$, but only if  \currentSD$(Z)  = false$.
In fact it is possible that $list_v$ contains an entry $Z'\neq Z^*$ with $Z'.d=d^*$ and $l >h$ since such an
entry would fail the check in 
Step~\ref{alg2:checkZ} but could then be inserted in Step~\ref{alg2:checkIfEnough} of
Algorithm~\ref{alg2}.

\begin{lemma*}{\bf \ref{lemma:counts}.}
Let $Z$ be an entry for source $x$ that is present on $list_v$ in round $r$. Let $r'>r$, and let $c$ and $c'$ be
the number of entries for source $x$ on $list_v$ that have key value less than $Z$'s key value
in rounds $r$ and $r'$ respectively.
Then $c' \geq c$.
\end{lemma*}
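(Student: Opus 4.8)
The plan is to forget that $Z$ is a particular entry and instead treat $\tau := Z.\kappa$ as a fixed real threshold, proving the stronger statement that for any fixed $\tau$ the quantity
$f(r) := \#\{\,W\in list_v \text{ in round } r : W.x = x \text{ and } W.\kappa < \tau\,\}$
is non-decreasing in $r$. Since $c = f(r)$ and $c' = f(r')$ with $r'>r$, this yields $c'\ge c$ at once. This framing also explains why the lemma remains valid after $Z$ is removed from $list_v$: only the number $\tau$ enters the argument, not the presence of $Z$.

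First I would note that for any round $\ge 1$, $list_v$ is modified only through calls to procedure {\sc Insert} (invoked in Step~\ref{alg2:insert1} or Step~\ref{alg2:checkIfEnough} of Algorithm~\ref{alg2}), and that a single call {\sc Insert}$(W)$ inserts exactly one entry $W$ (for some source $x'$) and, by Steps~\ref{alg2:checkIfExists}--\ref{alg2:remove}, removes \emph{at most one} entry $W'$, which is necessarily also an entry for source $x'$ with $pos(W') > pos(W)$; because entries are kept sorted by $(\kappa,d,x)$, this forces $W'.\kappa \ge W.\kappa$. (This is exactly the content of Observation~\ref{obs:posremove}.) Hence it suffices to show that each such elementary insert/remove step does not decrease $f$, and then compose over all the steps executed during rounds $r+1,\dots,r'$ (each incoming message in the for-loop triggers at most one {\sc Insert}, so the structure above applies to each message-iteration separately).

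Then I would run the short case analysis. If the inserted $W$ is not for source $x$, then neither is the (possibly) removed $W'$, so $f$ is unchanged. If $W$ is for source $x$ and $W.\kappa < \tau$: the insertion raises $f$ by one, while a possible removal of $W'$ (which has $W'.\kappa \ge W.\kappa$, so lies in the counted range or above it) lowers $f$ by at most one, so $f$ does not decrease. If $W$ is for source $x$ and $W.\kappa \ge \tau$: the insertion leaves $f$ unchanged, and any removed $W'$ satisfies $W'.\kappa \ge W.\kappa \ge \tau$, so its removal also leaves $f$ unchanged. In every case $f$ is non-decreasing across the step, hence $f(r')\ge f(r)$. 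The only thing to be careful about is not a genuine obstacle but bookkeeping: keeping the inequality strict ($<\tau$) consistently, observing that {\sc Insert} only ever deletes an entry of the same source it adds (so the counts for different sources never interfere), and confirming that the per-message structure of the for-loop lets the monotonicity of $f$ compose correctly within a single round.
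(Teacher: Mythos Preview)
Your proposal is correct and follows essentially the same approach as the paper: both arguments rest on Observation~\ref{obs:posremove}, namely that any removal from $list_v$ is paired with an insertion for the same source having no larger key. The paper states this as a two-line contradiction, while you unfold it into an explicit per-step monotonicity argument with a threshold $\tau = Z.\kappa$; your framing has the minor bonus of directly explaining the remark that the lemma continues to hold after $Z$ is removed.
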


\begin{proof}
If $c'<c$ then an entry for $x$ that was present below $Z$ in round $r$ must have been removed without having
another entry for $x$ being inserted below $Z$. But by Observation~\ref{obs:posremove} this is not possible since
any time an entry for source $x$ is removed from $list_v$ another entry for source $x$ with smaller key
value is inserted in $list_v$. 
\end{proof}

Lemma~\ref{lemma:counts} holds for every round greater than $r$, even if
$Z$ is removed from $list_v$. 
The following stronger lemma holds for rounds greater than
$r$ when $Z$ remains on $list_v$.

\begin{lemma}\label{lemma:countDummy}
Let $Z$ be a \dummy entry for source $x$ that is present on $list_v$ in round $r$. 
Let $r'>r$, and let $c$ and $c'$ be
the number of entries for source $x$ on $list_v$ that have key value less than $Z$'s key value
in rounds $r$ and $r'$ respectively.
Then $c' = c$.
\end{lemma}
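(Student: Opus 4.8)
The plan is to get the equality from Lemma~\ref{lemma:counts}, which already gives $c' \geq c$ (and, as remarked, holds for every round greater than $r$ even if $Z$ leaves the list). So it remains to prove $c' \leq c$, i.e.\ that while $Z$ stays on $list_v$ no event strictly increases the number of entries for $x$ whose key is below $Z.\kappa$. The only events that alter $list_v$ are the insertion in Step~\ref{alg2:addZ} of {\sc Insert} and the removal in Step~\ref{alg2:remove} of {\sc Insert}, and every removal is performed in the same {\sc Insert} call, immediately after an insertion. Hence it suffices to show that each individual call {\sc Insert}$(W)$, for an entry $W$ for source $x$ executed in a round in $(r,r']$, leaves the count of $x$-entries with key $<Z.\kappa$ unchanged, under the hypothesis that $Z$ is still present after the call.

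First I would split on the position of $W$ relative to $Z$ right after Step~\ref{alg2:addZ}. Since $Z$ and $W$ are both entries for $x$, among $x$-entries they are ordered by $(\kappa,d)$; so if $pos(W)>pos(Z)$ then $W.\kappa\geq Z.\kappa$ and $W$ is not counted, and any entry removed in Step~\ref{alg2:remove} is the closest \dummy entry for $x$ above $W$, hence also above $Z$, hence not counted either, so the count is unchanged. If instead $pos(W)<pos(Z)$, then $Z$ is itself a \dummy entry for $x$ above $W$, so the test in Step~\ref{alg2:checkIfExists} succeeds and some \dummy entry $W'$ for $x$ with $pos(W)<pos(W')\leq pos(Z)$ is removed (the closest one above $W$). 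If $W'=Z$ then $Z$ is removed, contradicting the hypothesis; so $pos(W)<pos(W')<pos(Z)$, and the net effect on the number of $x$-entries lying below $Z$ \emph{in position} is $+1$ (for $W$) $-1$ (for $W'$), i.e.\ zero.

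It then remains to pass from this positional statement to the statement about keys, which is the delicate point: an $x$-entry below $Z$ in position has key $\leq Z.\kappa$, and it is counted in $c$ precisely when its key is \emph{strictly} below $Z.\kappa$, so I must rule out the scenario $W.\kappa<Z.\kappa$ while $W'.\kappa=Z.\kappa$ (which would net $+1$ to $c$). Here I would use that, since $W'$ is the closest \dummy entry for $x$ above $W$, the only $x$-entry that can lie strictly between $W$ and $W'$ is the unique current shortest-path entry for $x$; together with Lemma~\ref{lemma:map}, which assigns the $x$-entries distinct $d$-values with $Z.\kappa=d\cdot\gamma+l$ and $l\geq 0$, this should force that two surviving \dummy entries for $x$ sitting below a common entry cannot be tied in key, so $W'.\kappa=Z.\kappa$ would already mean $W'=Z$ — the excluded case. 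Hence $W'.\kappa<Z.\kappa$, its removal decrements $c$, and the call is count-neutral. I expect this last reconciliation of positional order with key order — precisely, excluding key ties between distinct \dummy entries of the same source via Lemma~\ref{lemma:map} — to be the main obstacle, since every other step is routine bookkeeping over the two-line body of {\sc Insert}.
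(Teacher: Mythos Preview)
Your overall strategy matches the paper's two-line proof exactly: invoke Lemma~\ref{lemma:counts} for $c'\geq c$, and argue $c'\leq c$ by observing that any insertion of an $x$-entry below $Z$ triggers, via Observation~\ref{obs:remove}, the removal of a non-SP $x$-entry that must lie at or below $Z$ (else $Z$ itself would be removed). Your case analysis on the position of the inserted entry $W$ is more careful than what the paper writes, and your positional accounting (net change zero in the number of $x$-entries strictly below $Z$) is correct.

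The genuine gap is in your last step, where you try to upgrade the positional statement to the key statement by ruling out $W'.\kappa = Z.\kappa$ via Lemma~\ref{lemma:map}. That lemma gives each $x$-entry a \emph{distinct mapped $d$-value} with $\kappa = d\cdot\gamma + l$ and $l\geq 0$; it does \emph{not} give distinct $\kappa$-values. Two non-SP entries for $x$ can perfectly well share a key while having different mapped $(d,l)$ pairs (take $d_1\gamma + l_1 = d_2\gamma + l_2$ with $d_1\neq d_2$, which is possible whenever $\gamma$ is rational). So Lemma~\ref{lemma:map} does not force $W'=Z$ from $W'.\kappa = Z.\kappa$, and your reconciliation of positional order with strict key order does not go through as written.

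Two remarks. First, the paper's own proof does not address this edge case either; it simply asserts ``thus $c'\leq c$'' after noting that a non-SP entry above the inserted one is removed. Second, the only downstream use of this lemma (in the proof of Lemma~\ref{lemma:count}) is to freeze $Z.\nu$, which is a \emph{positional} count (entries for $x$ at or below $Z$), not a strict-key count. Your positional argument already establishes that version cleanly: each {\sc Insert} call leaves the number of $x$-entries strictly below $Z$ unchanged while $Z$ survives. So the correct fix is not to torture Lemma~\ref{lemma:map} into excluding key ties, but to state and prove the positional version --- which is both what you have and what is needed.
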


\begin{proof}
If a new entry $Z^{'}$ with key $< Z.\kappa$ for $x$ is added, 
then by Observation~\ref{obs:remove} the closest non-SP entry for $x$
with key $> Z^{'}.\kappa$ must be removed from $list_v$ and
thus $c^{'} \leq c$.
Then using Lemma~\ref{lemma:counts} we have $c' = c$.
\end{proof}

\begin{lemma*}{\bf \ref{lemma:countDummySet}.}
Let $Z^{-}$ be an entry for source $x$ sent from  $y$ to $v$ and
suppose the corresponding entry $Z$ (Step~\ref{alg2:Z} of Algorithm 2) is
added
to $list_v$ in round $r$.
Then 
there are at least $Z^-.\nu$ entries at or below $Z$ in $list_v$ for source $x$.
\end{lemma*}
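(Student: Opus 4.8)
The plan is to argue by induction on the key value $Z.\kappa$ taken jointly with the round index $r$. I would assume inductively that the statement of the lemma holds for every entry with key value at most $Z.\kappa$ that was inserted into some $list_u$ in a round earlier than $r$, and also for every entry currently on $list_y$ whose key is strictly smaller than $Z^-.\kappa$. The base case is immediate: when $Z^-.\nu = 1$ the claim only asserts that $Z$ itself lies at or below $Z$, and more generally the statement holds trivially in round $1$ since lists are essentially empty at that point.

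For the inductive step assume $Z^-.\nu \ge 2$ and let $Z_1^-$ be the $(Z^-.\nu-1)$-th entry for source $x$ in $list_y$, ordered by key. Since $Z_1^-$ has strictly smaller key than $Z^-$, by the send rule of Algorithm~\ref{alg2} it was sent from $y$ to $v$ in an earlier round $r' < r$. The updated entry $Z_1$ formed at $v$ satisfies $Z_1.\kappa < Z.\kappa$, because $Z_1^-.\kappa$ and $Z^-.\kappa$ are each increased by the same amount $w(y,v)\cdot\gamma + 1$; hence any surviving copy of $Z_1$ on $list_v$ sits strictly below $Z$. Now split into two cases. If $Z_1$ was inserted into $list_v$ in round $r'$, then by the inductive hypothesis there were at least $Z_1^-.\nu = Z^-.\nu - 1$ entries for $x$ at or below $Z_1$ in round $r'$; all of these have key $\le Z_1.\kappa < Z.\kappa$, so by Lemma~\ref{lemma:counts} (which remains valid even if $Z_1$ is later removed) there are still at least $Z^-.\nu - 1$ entries for $x$ with key less than $Z.\kappa$ at round $r$, and together with $Z$ itself this gives at least $Z^-.\nu$ entries at or below $Z$. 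If instead $Z_1$ was not inserted in round $r'$, then by Observation~\ref{obs:Znotadded} there were already at least $Z_1^-.\nu = Z^-.\nu - 1$ entries for $x$ with key $\le Z_1.\kappa$ in $list_v$ at the end of round $r'$, and the same application of Lemma~\ref{lemma:counts} carries this lower bound forward to round $r$; adding $Z$ again yields at least $Z^-.\nu$ entries at or below $Z$ in $list_v$.

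The delicate points I would be most careful about are: (i) checking that $Z_1^-$ genuinely falls under the inductive hypothesis — it has a strictly smaller key and was sent in a strictly earlier round, so the induction on the pair $(\kappa, r)$ is well-founded; and (ii) the bookkeeping between ``at or below $Z_1$'' and the strict ``key value less than $Z.\kappa$'' phrasing of Lemma~\ref{lemma:counts}, which is exactly what makes the step go through: since $Z_1.\kappa < Z.\kappa$, the $Z^-.\nu - 1$ entries counted in round $r'$ are disjoint from the newly added $Z$, and their count cannot decrease. I expect (ii) to be the main obstacle to writing cleanly, since it is the place where Observation~\ref{obs:Znotadded} and Lemma~\ref{lemma:counts} must be combined with the inequality directions lined up correctly.
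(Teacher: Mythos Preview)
Your proposal is correct and follows essentially the same approach as the paper's own proof: both argue by induction, pick $Z_1^-$ as the $(Z^-.\nu-1)$-th entry for $x$ in $list_y$, and split into the two cases according to whether the corresponding $Z_1$ was inserted into $list_v$, invoking Observation~\ref{obs:Znotadded} and Lemma~\ref{lemma:counts} exactly as you do. Your write-up is in fact slightly more explicit than the paper's about the induction parameter and about why the $Z^-.\nu-1$ entries counted at round $r'$ remain disjoint from the newly inserted $Z$; the one small imprecision (shared with the paper) is the phrase ``strictly smaller key,'' which should really be ``strictly smaller in the $(\kappa,d)$ order,'' but this does not affect the argument since that is enough to guarantee $Z_1^-$ was sent in an earlier round and that $Z_1$ sits strictly below $Z$.
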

\vspace{-.25in}
\begin{proof}
Let us assume inductively that this result holds for all entries on $list_v$ and $list_y$ with key value at most $Z.\kappa$ at
all previous rounds and at $y$ in round $r$ as well. 
(It trivially holds initially.) 

Let $Z_1^{-}$ be the $(Z^-.\nu - 1)$-th entry for source $x$ in $list_y$. Since $Z_1^{-}$ has a key value smaller than $Z^-$ it was sent to $v$ in an earlier round $r'$. If the
corresponding entry $Z_1$
created for possible addition to $list_v$ in Step~\ref{alg2:Z}  of
Algorithm~\ref{alg2},
was inserted in $list_v$ then by inductive assumption 
there were at least $Z_1^-.\nu = Z^-.\nu - 1$ entries for $x$ at or below $Z_1$ in $list_v$.
And by Lemma~\ref{lemma:counts}
this holds for round $r$ as well and hence the result follows
since $Z$ is present above $Z_1$ in $list_v$.

And if $Z_1$ was not added to $list_v$ in round $r'$,
then by \obs ~\ref{obs:Znotadded}
 there were already $Z^-.\nu - 1$ 
 entries for $x$ 
with key $\leq Z_1.\kappa$
and by Lemma~\ref{lemma:counts} there are at least 
$Z^-.\nu - 1$ 
entries for $x$ with key $\leq Z_1.\kappa \leq Z.\kappa$ on $list_v$ at round $r$
and hence the result follows.
\end{proof}

\vone
\underline{\bf Establishing $\mathbf{pos_y^r (Z^-) \leq pos_v^r (Z)}$:}
For an entry $Z^-$ sent from $y$ to $v$ such that
$Z$ is the corresponding entry created for possible addition to $list_v$ in
 Step~\ref{alg2:Z}  of
Algorithm~\ref{alg2},
in Lemma~\ref{lemma:ci} and Corollary~\ref{cor:pos2}
we establish that
if $Z$ is added to $list_v$ then
$pos_y^r (Z^-) \leq pos_v^r (Z)$,
which is an important property of $pos$.

\begin{lemma*}{\bf \ref{lemma:ci}.}
Let $Z^-$ be an entry sent from  $y$ to $v$ in round $r$
and let $Z$ be the corresponding entry created for possible addition to $list_v$ in Step~\ref{alg2:Z}  of
Algorithm~\ref{alg2}.  
For each source $x_i \in S$,
let there be exactly $c_i$ entries for $x_i$ at or below $Z^-$ in $list_y$.
If $Z$ is added to $list_v$,
then for each $x_i \in S$,
there are at least $c_i$ entries for $x_i$ at or below $Z$ in $list_v$.
\end{lemma*}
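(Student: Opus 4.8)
The plan is to imitate the inductive structure of Lemma~\ref{lemma:countDummySet}, but now track the count for \emph{every} source $x_i$ simultaneously rather than just the source $x$ of the entry $Z$. I would proceed by induction, ordering the claims by the key value of the entry being added: assume the statement holds for every entry inserted into any list in an earlier round, and for every entry with smaller key inserted into $list_v$ in round $r$ itself, and also for the relevant entries at $y$ in round $r$. The base case is immediate since all lists start empty (or with a single source entry), so all $c_i$ are $0$ or $1$ and the claim is trivial.

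For the inductive step, fix a source $x_i$ and let $c_i$ be the number of entries for $x_i$ at or below $Z^-$ in $list_y$. I would split on whether $x_i = x$, the source of $Z$. If $x_i = x$, then $c_i = Z^-.\nu$ by definition of $\nu$, and Lemma~\ref{lemma:countDummySet} already gives at least $Z^-.\nu$ entries for $x$ at or below $Z$ in $list_v$, so that case is done. If $x_i \neq x$, the entries for $x_i$ at or below $Z^-$ in $list_y$ all have key strictly less than $Z^-.\kappa$ (an entry for a different source cannot tie in $(\kappa,d,\text{source id})$ with $Z^-$). Let $W^-$ be the top-most such entry, the $c_i$-th entry for $x_i$ in $list_y$. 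Since $W^-.\kappa < Z^-.\kappa$, $W^-$ was sent from $y$ to $v$ in some earlier round $r' < r$. Now I case on whether the corresponding entry $W$ was inserted into $list_v$: if it was, the inductive hypothesis (applied to $W^-$, which has smaller key) gives at least $c_i$ entries for $x_i$ at or below $W$ in $list_v$ at round $r'$, and Lemma~\ref{lemma:counts} propagates "at least $c_i$ entries for $x_i$ with key $\le W.\kappa$" forward to round $r$; since $W.\kappa = W^-.\kappa + w(y,v)\gamma + 1 \le Z^-.\kappa + w(y,v)\gamma + 1 = Z.\kappa$, those $c_i$ entries are all at or below $Z$ in $list_v$. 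If $W$ was \emph{not} inserted, then by Observation~\ref{obs:Znotadded} there were already at least $W^-.\nu = c_i'$ entries for $x_i$ with key $\le W.\kappa$ at the end of round $r'$ — wait, here I must be careful: $W^-.\nu$ counts entries for $x_i$ at or below $W^-$ in $list_y$, which is exactly $c_i$ by our choice of $W^-$. So Observation~\ref{obs:Znotadded} yields at least $c_i$ entries for $x_i$ with key $\le W.\kappa \le Z.\kappa$ in $list_v$ at round $r'$, and again Lemma~\ref{lemma:counts} carries this to round $r$.

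The main obstacle, and the point that needs the most care, is the bookkeeping in the "$W$ was not inserted" branch: Observation~\ref{obs:Znotadded} is stated in terms of $Z^-.\nu$ for the entry being rejected, and one must verify that this equals precisely the count $c_i$ of entries for $x_i$ at or below $W^-$ in $list_y$ — i.e. that $W^-$ really is the top-most entry for $x_i$ below $Z^-$ and that $\nu$ is computed at the moment $W^-$ was sent, not at round $r$. A secondary subtlety is that the inductive hypothesis must be stated carefully enough to apply to $W^-$: since $W^-.\kappa < Z^-.\kappa < Z.\kappa$, ordering the induction by the key value of the \emph{inserted} entry $Z$ (equivalently, by the round $r$, since by Lemma~\ref{lemma:sendBound} round and key are tied together) makes $W$'s insertion an earlier instance. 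Once these indexing details are pinned down, each of the two branches is a one-line application of Lemma~\ref{lemma:counts}, and Corollary~\ref{cor:pos2} then follows by summing the conclusion over all $x_i \in S$ and noting that $pos$ counts exactly $\sum_i (\text{entries for } x_i \text{ at or below})$.
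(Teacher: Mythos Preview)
Your proposal is correct and follows essentially the same route as the paper's proof. The paper also fixes a source $x_i$, picks the $c_i$-th entry for $x_i$ in $list_y$ (your $W^-$, their $Z_1^-$), observes it was sent in an earlier round, and then splits on whether the corresponding entry was inserted into $list_v$, invoking Lemma~\ref{lemma:countDummySet}, Lemma~\ref{lemma:counts}, and Observation~\ref{obs:Znotadded} in exactly the places you do; the only cosmetic differences are that the paper handles $x_i=x$ and $x_i\neq x$ uniformly (allowing $Z_1^-=Z^-$) rather than splitting, and cites Lemma~\ref{lemma:countDummySet} directly where you invoke the inductive hypothesis of Lemma~\ref{lemma:ci} itself.
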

\vspace{-.25in}
\begin{proof}
If not
 there exists an $x_i \in S$ 
with strictly less than $c_i$ entries for $x_i$ at or below $Z$ in $list_v$.

Let $Z_1^{-}$ be the $c_i$-th entry for $x_i$ in $list_y$ 
(if $x_i$ is $Z$'s source, then  $Z_1^{-}$ is $Z^-$).
If $Z_1^{-}$ is not $Z^-$, it is
 below $Z^-$ in $list_y$ and so  was sent 
 in a round $r' < r$; if $Z_1^{-} = Z^-$ then $r'=r$.
Let $Z_1$ be the corresponding entry 
created for possible addition to $list_v$ in Step~\ref{alg2:Z}  of
Algorithm~\ref{alg2}. 

If $Z_1$ was added to $list_v$ and is also present in $list_v$ in round $r$,
then by Lemma ~\ref{lemma:countDummySet} and \ref{lemma:counts},
there will be at least $c_i$ entries for $x_i$ at or below $Z_1$,
\rco.
And if $Z_1$ was removed from \lv in a round $r^{''} < r$,
then by Lemma~\ref{lemma:counts},
the number of entries for $x_i$ with key $\leq Z_1.\kappa$ should be at least $c_j$.

Now if $Z_1$ was not added to \lv in round $r'$,
then by \obs ~\ref{obs:Znotadded},
we must already have at least $c_i$ entries for $x_i$ with key $\leq Z_1.\kappa$ in round $r'$
and by Lemma~\ref{lemma:counts},
this must hold for all rounds $r^{''} > r$ as well.
\end{proof}

\begin{corollary*}{\bf\ref{cor:pos2}.}
Let $Z^-$ be an entry sent from  $y$ to $v$ in round $r$
and let $Z$ be the corresponding entry created for possible addition to $list_v$ in Step~\ref{alg2:Z}  of
Algorithm~\ref{alg2}.  
If $Z$ is added to $list_v$, 
then $pos^r_y (Z^{-}) \leq pos^r_v (Z)$.
\end{corollary*}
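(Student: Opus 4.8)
The plan is to derive Corollary~\ref{cor:pos2} as an immediate counting consequence of Lemma~\ref{lemma:ci}. The first step is to rewrite both quantities $pos^r_y(Z^-)$ and $pos^r_v(Z)$ as sums of per-source counts. Since the ordering on each list is total (keys, then $d$, then source label) and every entry belongs to exactly one source, the number of entries at or below a fixed entry equals the sum, over all sources $x_i \in S$, of the number of entries for $x_i$ at or below that entry. So if, as in the statement of Lemma~\ref{lemma:ci}, $c_i$ is the number of entries for $x_i$ at or below $Z^-$ in $list_y$, then $pos^r_y(Z^-) = \sum_{x_i \in S} c_i$; and similarly, writing $c_i'$ for the number of entries for $x_i$ at or below $Z$ in $list_v$ (in round $r$), we have $pos^r_v(Z) = \sum_{x_i \in S} c_i'$.

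The second step invokes Lemma~\ref{lemma:ci}: under the hypothesis that $Z$ is added to $list_v$, that lemma gives $c_i' \geq c_i$ for every source $x_i \in S$. Summing this inequality over all $x_i \in S$ and substituting the two identities from the first step yields
\[
pos^r_v(Z) \;=\; \sum_{x_i \in S} c_i' \;\geq\; \sum_{x_i \in S} c_i \;=\; pos^r_y(Z^-),
\]
which is exactly the claimed inequality.

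Since Lemma~\ref{lemma:ci} does all the substantive work, there is essentially no obstacle here; the only point requiring a moment's care is the decomposition identity $pos = \sum_{x_i} (\text{count of same-source entries at or below the entry})$, which is routine once one notes that the list is linearly ordered and sources partition the entries. I would state the proof in just these two short steps.
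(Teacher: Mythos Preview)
Your proposal is correct and is exactly the intended derivation: the paper states Corollary~\ref{cor:pos2} without proof immediately after Lemma~\ref{lemma:ci}, relying on precisely the per-source decomposition $pos = \sum_{x_i} c_i$ and summation of the inequalities $c_i' \geq c_i$ that you spell out.
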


\subsection{Establishing an Upper bound on $Z.\nu$}

In this section (Lemmas~\ref{lemma:map}-\ref{lemma:count})
we establish an upper bound on the value of $Z.\nu$.
This upper bound on $Z.\nu$ immediately gives a bound on the
maximum number of entries that can be present in $list_v$
for a source $x \in S$.

\begin{lemma*}{\bf \ref{lemma:map}.}
Let $\mathcal{C}$ be the entries for a source $x \in S$ in $list_v$ in  round $r$. 
Then the entries in $\mathcal{C}$ can be mapped to $(d,l)$ pairs such that each
$l\geq 0$ and
each $Z \in \mathcal{C}$ is mapped to a distinct $d$ value with
$Z.\kappa = d\cdot \gamma + l$.
Also $d = d_x^*$ if $Z$ is a current shortest path entry, otherwise $d > d_x^*$.
\end{lemma*}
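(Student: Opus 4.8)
Since the key $Z.\kappa$ is already fixed for each entry $Z\in\mathcal C$, choosing a target distance $d$ for $Z$ forces $l = Z.\kappa - d\gamma$; so the lemma reduces to choosing, for each $Z\in\mathcal C$, a nonnegative integer $d(Z)$ with $d(Z)\le Z.\kappa/\gamma$ (equivalently $l\ge 0$), with all the $d(Z)$ distinct, with $d(Z)=d_x^*$ when $Z$ is the current SP entry for $x$ and $d(Z)>d_x^*$ otherwise. The plan is to prove this by induction on the round number, maintaining such an assignment at every node and for every source; within a round the messages a node receives are processed one at a time and the assignment is updated after each {\sc Insert}. Since a node's outgoing message reflects its list at the end of the previous round, the sender $y$'s list always carries a valid assignment at the moment its message is processed. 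The base case is immediate: at round $0$, $\mathcal C$ is empty unless $v=x$, and then its single entry $(0,0,0,x)$ is the SP entry, assigned $d=0=d_x^*$.

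For the inductive step, consider the candidate entry $Z$ for source $x$ built in Step~\ref{alg2:Z} from a message $Z^-$ sent by $y$, with natural distance $\widehat d = Z^-.d + w(y,v)$. If $Z$ is not inserted, nothing changes. If $Z$ is inserted as the SP entry with $\widehat d$ strictly below the old $d_x^*$ and no entry is removed, then by the induction hypothesis every old distance is $\ge$ the old $d_x^*>\widehat d$, so we simply set $d(Z)=\widehat d$. When an {\sc Insert} additionally removes the closest non-SP entry $Z_{rem}$ above $Z$, the distance $d_{rem}$ it carried is freed; since $Z_{rem}$ was non-SP, $d_{rem}>d_x^*$, and since $Z_{rem}$ lay below any entry $Z'$ whose distance now collides, re-assigning $d(Z')=d_{rem}$ keeps all comparisons with $d_x^*$ valid and keeps $l(Z')=Z'.\kappa - d_{rem}\gamma \ge Z_{rem}.\kappa - d_{rem}\gamma\ge 0$. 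This ``rehoming onto the freed distance'' handles the cases in which the count of entries for $x$ stays the same (including an SP insert whose distance ties the old $d_x^*$, where the old SP entry must be rehomed).

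The heart of the argument is the remaining case: $Z$ is inserted as a non-SP entry (so the guard of Step~\ref{alg2:checkIfEnough} passed --- fewer than $Z^-.\nu$ entries for $x$ had key $\le Z.\kappa$ on $list_v$ before the insert), no entry is removed, and yet $\widehat d$ equals a distance already assigned on $list_v$. I would then look at the $Z^-.\nu$ entries at or below $Z^-$ on $list_y$: by the induction hypothesis these have $Z^-.\nu$ distinct integer distances, each $\ge d^*_{x,y}$, with at most one equal to $d^*_{x,y}$. Adding $w(y,v)$ to each gives $Z^-.\nu$ distinct integers, each $\ge d^*_{x,v}$ since $d^*_{x,v}\le d^*_{x,y}+w(y,v)$, with at most one equal to $d^*_{x,v}$ (and that one only if the $y$-side SP distance is realized and $d^*_{x,v}=d^*_{x,y}+w(y,v)$). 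A counting argument then produces one of these shifted distances, say $d'$ arising from a sender-entry $Z_1^-$ whose induction-hypothesis decomposition is $Z_1^-.\kappa = d_1^-\gamma + l_1^-$ with $d_1^- = d'-w(y,v)$ and $l_1^-\ge 0$, such that $d'$ is simultaneously $>d^*_{x,v}$ and different from every distance currently assigned to the (at most $Z^-.\nu-1$) entries for $x$ below $Z$; since no entry was removed there is no non-SP entry above $Z$, and the only entry above $Z$ that could collide is the SP entry, with distance $d^*_{x,v}$, which $d'>d^*_{x,v}$ already avoids. Setting $d(Z)=d'$ is legitimate because $Z.\kappa = Z^-.\kappa + w(y,v)\gamma + 1 \ge Z_1^-.\kappa + w(y,v)\gamma + 1$, so $l(Z)=Z.\kappa - d'\gamma \ge l_1^- + 1 \ge 1$, and $d'$ is a nonnegative integer.

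I expect the counting in the last paragraph to be the main obstacle: one must show that $Z^-.\nu$ strictly beats the union of forbidden distances --- those assigned below $Z$ together with $d^*_{x,v}$ --- which is tight and appears to need a further sub-case split according to whether the $y$-side SP entry is among the $Z^-.\nu$ entries and whether the equality $d^*_{x,v}=d^*_{x,y}+w(y,v)$ holds, using in each branch the two structural facts above (no removal $\Rightarrow$ no non-SP entry above $Z$, and at most one shifted candidate equals $d^*_{x,v}$). The only other subtlety is pinning down the well-founded order of the induction --- inducting on rounds with each sender's state frozen at the end of the previous round --- so that a valid assignment is guaranteed on $list_y$ exactly when $v$ processes $Z^-$.
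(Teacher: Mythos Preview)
Your proposal is essentially correct and follows the same approach as the paper: case-split on whether $Z$ is inserted as SP or non-SP, and in the non-SP case pull distinct distances from the sender's list, shift by $w(y,v)$, and pigeonhole against the (fewer) distances already assigned below $Z$; when an entry is removed, re-home the collision onto the freed distance. The paper carries out exactly this argument, including the verification that the chosen $l\ge l_1^-+1>0$ and that the chosen distance cannot equal $d_x^*$ (because that would force the corresponding sender entry to be $y$'s SP entry, whence $v$'s SP entry would already sit below $Z$ with distance $d_x^*$, contradicting the choice of $d'$).

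The one genuine difference is the induction order. You induct on rounds, freezing the sender's list at the end of the previous round so that the inductive hypothesis cleanly provides a valid assignment on $list_y$ with exactly $Z^-.\nu$ entries at or below $Z^-$. The paper instead inducts on $j=|\mathcal C|$, considers ``the first time $|\mathcal C|$ becomes $j$ at $list_v$'', and then appeals to the hypothesis on the $j$ entries at or below $Z^-$ in $list_y$; as written this is slightly awkward because $list_y$ may itself already have $j$ (or more) entries for $x$, so applying the hypothesis there needs a word of justification. Your time-based induction sidesteps that wrinkle. Conversely, the paper's $j$-induction makes the pigeonhole count read off directly as ``$j$ shifted candidates versus $j-1$ entries below $Z$'', whereas you phrase it in terms of $Z^-.\nu$ and must argue (as you note) via the Step~\ref{alg2:checkIfEnough} guard that the number below $Z$ is at most $Z^-.\nu-1$; this is the same inequality. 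Your flagged sub-case split (whether $y$'s SP entry is among the $Z^-.\nu$ and whether $d^*_{x,v}=d^*_{x,y}+w(y,v)$) is exactly what the paper handles in the appendix paragraph showing $d'\neq d_x^*$.
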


\begin{proof}
We will establish this result by induction on $j$, the number of entries in  $\mathcal{C}$.
For the base case, when $j=1$, we can map $d$ and $l$ to the pair in the single entry $Z$ since
$Z.\kappa = d \cdot \gamma +l$. Assume inductively that the result holds at $list_u$ for all nodes $u$ 
when the number of entries for $x$ is at most $j-1$. Consider the first time $|\mathcal{C}|$ becomes $j$
at $list_v$, and let this occur when node $y$ sends $Z^-$ to $v$ and this is updated and inserted as $Z$ 
in $list_v$ in round $r$. 

If $Z$ is inserted as a new shortest path entry with distance value $d^*$, then the distinct $d$ values currently assigned to the $j-1$ entries
for source $x$ in $list_v$ must all be larger than $d^*$ hence we can simply assign the
 $d$ and $l$ values in $Z$ as its $(d,l)$
mapping.

If $Z$ is inserted as a \dummy entry then it is possible that the $d$ value in $Z$ has already been assigned to
one of the $j-1$ entries for source $x$ on $list_v$. If this is the case, consider the entries 
for source $x$  with key value at most $Z^-.\kappa$ in $list_y$ (at node $y$). 
By the check in Step~\ref{alg2:checkIfEnough} of Algorithm~\ref{alg2} 
we know that there are $j$ such
values. Inductively these $j$ entries have $j$ distinct $d^-$ values assigned to them, and we transform these 
into $j$ distinct values for $list_v$ by adding $w'_x (y,v)\cdot \gamma + 1$ to each of them. For at least one of these
$d^-$ values in $y$, call it $d_1^-$,  it must be the case that $d'=d_1^-+ w'_x (y,v)\cdot \gamma + 1$ 
is not assigned to any of the $j-1$ entries for source $x$ below $Z$ in $list_v$. Let $Z^{-}_1$ be the entry in $y$'s list that is 
associated with distance $d_1^-$. We show that the associated $l$ value for $d'$ in $Z$ on $list_v$
must be greater than 0.

\vspace{-.2in}
\begin{align*}
(d' + w'_x (y,v))\cdot \gamma + l &= Z.\kappa \\
&= Z^-.\kappa + w'_x (y,v)\cdot \gamma + 1 \\
&\geq Z_1^-.\kappa + w'_x (y,v)\cdot \gamma + 1 \hspace{1in} \text{(since ($pos_y (Z^-) > pos_y (Z_1^-)$))} \\
&= d'\cdot \gamma + l^-_1 + w'_x (y,v)\cdot \gamma + 1 \\
&= (d' + w'_x (y,v))\cdot \gamma + l^-_1 + 1 
\end{align*}

Hence $l \geq l^-_1 + 1 > 0$.

Since $Z$ is a \dummy entry we also need to argue that $d' + w'_x (y,v) \neq d_x^*$.
If not then by induction, it implies that the entry $Z^-_{1}$ for $x$ in $y$'s list
correspond to the current shortest path entry for $x$
 in $list_y$.
Since $Z^-_{1}$ gives the shortest path distance from $x$ to $y$,
the corresponding shortest path entry for $x$ must be below $Z$ 
in $v$'s list and by induction, it must have $d_x^*$ associated with it.
This results in a contradiction since we chose the distance value, $d' + w'_x (y,v)$,
 such that it 
was different from the distances associated with the other
$(j -1)$ entries for $x$ in $v$'s list.

We have shown that the lemma holds the first time a $j$-th entry is added to $list_v$ for source $x$.
To complete the proof we now show that the
 lemma continues to hold if a new entry $Z$ for source $x$ is added to $list_v$ while keeping
the number of entries at $j$. The argument is the same as the case of having $j$ entries for source $x$ for the first time except that
we also need to consider duplication of a $d$ value at an entry above the newly inserted $Z$. For this we
proceed as in the previous case. Let $Z$ be
inserted in position $p \leq j$.  We assign a $d$ value to $Z$ as in the previous case, taking care that the $d$ value assigned to $Z$ is different from that for the
$p-1$ entries below $Z$.
Suppose $Z$'s  $d$ value has been assigned to another entry $Z''$ in $list_v$ above $Z$. Then, we consider
$Z'$, the entry that was removed (in Step 5 of {\sc Insert}) in order to keep the total number of entries for
source $x$ at $j$. We assign to
$Z''$ the value $d'$ that was assigned to $Z'$. Since $Z''$ has a larger key value than $Z'$ we will
need to use an $l''$ at least as large as that used for $Z'$ (call it $l'$)
in order satisfy the requirement that
$Z''.\kappa = d'\cdot \kappa + l''$. Since $l'$ must have been non-negative, $l''$ will also be non-negative
as required, and all $d$ values assigned to the entries for $x$ will be distinct.
\end{proof}

\begin{lemma*}{\bf \ref{lemma:belowSD}.}
Let $Z$ be the current shortest path distance entry for a source $x \in S$ in $v$'s list.
Then the number of entries for $x$ below $Z$ in $list_v$ is at most $\gamma \cdot \frac{n}{k}$.
\end{lemma*}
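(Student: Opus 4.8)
The plan is to follow the same three-ingredient argument used for the main-body version of this lemma: the $(d,l)$-mapping of Lemma~\ref{lemma:map}, the ordering of $list_v$ by key value, and a bound on the hop-length of the current shortest-path entry. First I would invoke Lemma~\ref{lemma:map} for source $x$, node $v$, round $r$: the entries for $x$ currently in $list_v$ are put in correspondence with pairs $(d,l)$ having $l \ge 0$, so that each entry $W$ receives a \emph{distinct} value $d_W$ with $W.\kappa = d_W\gamma + l_W$; the current shortest-path entry $Z$ is mapped to $d_Z = d^*_x$, and every other entry $W \neq Z$ is mapped to some $d_W > d^*_x$. I would also record that the assignment produced in the proof of Lemma~\ref{lemma:map} only ever uses values that are sums of integer edge weights (either the true weight of a path, or a value inherited from a removed entry), so all the $d_W$ are in fact integers; this is what lets a bounded interval of $d$-values contain only boundedly many entries.

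Next I would use the sorted order of $list_v$. For any entry $W$ for $x$ lying strictly below $Z$ we have $W.\kappa \le Z.\kappa$, i.e.\ $d_W\gamma + l_W \le d^*_x\gamma + l^*_x$, where $l^*_x = Z.l$ is the hop-length of the shortest path recorded by $Z$. Using $l_W \ge 0$ this yields $d_W \le d^*_x + l^*_x/\gamma$. A shortest path may be taken to be simple, so $l^*_x < n$ (and $l^*_x \le h$ by Observation~\ref{obs:SPlength}); hence every such $d_W$ is a distinct integer in the half-open interval $(d^*_x,\, d^*_x + n/\gamma]$, which contains at most $n/\gamma$ integers. Substituting the value $\gamma = \sqrt{hk/\Delta}$ fixed for this section (where, the scaled weights being small, a path's weight is dominated by its hop count, so $\Delta \le h$ and therefore $n/\gamma \le \gamma n/k$) gives the stated bound $\gamma\cdot\frac{n}{k}$ on the number of entries for $x$ below $Z$ in $list_v$.

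The real content has already been done in Lemma~\ref{lemma:map}, so the main obstacle is upstream rather than in the present lemma; given that lemma, the only genuinely new points, and the places where care is needed, are: (i) verifying that the $d$-values the mapping assigns are integers, so that a length-$O(n/\gamma)$ interval holds only $O(n/\gamma)$ of them; (ii) noting that $d^*_x$ itself is reserved for $Z$ and hence is not among the below-$Z$ entries, which is why the count is $n/\gamma$ and not $n/\gamma + 1$; and (iii) the routine reconciliation of the naturally-arising quantity $l^*_x/\gamma$ with the stated form $\gamma n/k$ via the parameter setting of this section.
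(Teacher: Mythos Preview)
Your overall strategy is the same as the paper's, and the use of Lemma~\ref{lemma:map} together with the key ordering is exactly right. The gap is in your final ``reconciliation'' step.

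You correctly observe that $l^*_x \le h$ (via Observation~\ref{obs:SPlength}), but then you discard this and bound by $l^*_x < n$ instead, obtaining the interval $(d^*_x,\, d^*_x + n/\gamma]$ and hence at most $n/\gamma$ entries. To turn $n/\gamma$ into the claimed $\gamma n/k$ you then assert $\Delta \le h$ on the grounds that ``scaled weights are small.'' That assertion is not available here: edge weights are arbitrary non-negative integers, so a path's weighted length $\Delta$ can vastly exceed its hop length $h$. Concretely, $n/\gamma \le \gamma n/k$ is equivalent to $\gamma^2 \ge k$, i.e.\ (with the appendix's $\gamma = \sqrt{hk/n}$) to $h \ge n$, which is false in general; with $\gamma = \sqrt{hk/\Delta}$ it is equivalent to $h \ge \Delta$, which you have no reason to assume.

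The fix is to use the sharper hop bound you already have. From $W.\kappa \le Z.\kappa$ and $l_W \ge 0$ you get $d_W \le d^*_x + l^*_x/\gamma \le d^*_x + h/\gamma$, so the distinct integer $d_W$'s lie in $(d^*_x,\, d^*_x + h/\gamma)$ and there are at most $h/\gamma$ of them. With the appendix's parameter setting $\gamma = \sqrt{hk/n}$ one has the \emph{identity} $h/\gamma = \gamma\cdot n/k$, so no inequality is needed at all. (Note also that the appendix uses $\gamma = \sqrt{hk/n}$, not $\sqrt{hk/\Delta}$; your substitution of the latter is part of what led you astray.)
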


\vspace{-.25in}
\begin{proof}
By Lemma~\ref{lemma:map}, 
we know that the keys of all the entries for $x$ can be mapped to $(d,l)$ pairs
such that each entry is mapped to a distinct $d$ value and $l > 0$.

We have $Z.\kappa = d_x^*\cdot \gamma + l_x^*$, where $l_x^*$ is the hop-length of the shortest path from $x$ to $v$.
Let $Z^{''}$ be an entry for $x$ below $Z$ in $v$'s list. 
Then, $Z^{''}.\kappa \leq Z.\kappa$. 
It implies 

\begin{align*}
d^{''}\cdot \gamma + l^{''} &\leq d_x^*\cdot \gamma + l_x^* \\
d^{''}\cdot \gamma &\leq  d_x^*\cdot \gamma + (l_x^* -  l^{''})\\
d^{''} &\leq d_x^* + \frac{ (l_x^{*} -  l^{''})}{\gamma} \\
 &\leq d_x^* + \frac{(h -  1)}{\gamma} \\
 &< d_x^* + \frac{h}{\gamma}  \\
 &= d_x^* + \frac{h}{\gamma^2}\cdot \gamma	\\
 &= d_x^* + \frac{n}{k}\cdot \gamma 
\end{align*}

Thus $d^{''} < d_x^* + \frac{n}{k}\cdot \gamma $.
Since $d^{''} \geq d^*_x$,
there can be at most $\frac{n}{k}\cdot \gamma $  entries for $x$ below $Z$
in $list_v$.
\end{proof}

\begin{lemma*}{\bf \ref{lemma:count}.}
For each source $x \in S$, 
$v$'s list has at most $\frac{n}{k}\cdot \gamma + 1$ entries for $x$.
\end{lemma*}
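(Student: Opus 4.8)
The plan is to prove Lemma~\ref{lemma:count} by induction on the round number, strengthening the statement so that it holds immediately after \emph{every} call to {\sc Insert} (and hence at the end of every round): at every node $v$ and for every source $x\in S$, $list_v$ holds at most $\frac{n}{k}\cdot\gamma+1$ entries for $x$. The base case (round $0$) is immediate, since each list then contains at most the single self-entry of a source. For the inductive step I would fix a round $r$, a node $v$, and a particular {\sc Insert}$(Z)$ performed at $v$ in round $r$ as a result of a message carrying $Z^{-}=(\kappa^{-},d^{-},l^{-},x)$ that a neighbour $y$ sent in Step~\ref{alg2:sendEnd}. Because the Send of Steps~\ref{alg2:sendStart}--\ref{alg2:sendEnd} reads $y$'s list as it stands at the start of round $r$, the value $Z^{-}.\nu$ counts $x$-entries in $list_y$ at the end of round $r-1$, so by the inductive hypothesis $Z^{-}.\nu\le \frac{n}{k}\cdot\gamma+1$. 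This is the only place the inductive hypothesis enters the argument.

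Next I would do a case analysis on what {\sc Insert}$(Z)$ does to the number of $x$-entries in $list_v$. If {\sc Insert}$(Z)$ also deletes an $x$-entry in Step~\ref{alg2:remove}, the $x$-count does not increase and the invariant is preserved. Otherwise no $x$-entry is removed, which by the rule in Steps~\ref{alg2:checkIfExists}--\ref{alg2:remove} (equivalently Observation~\ref{obs:remove}) means there is no non-SP entry for $x$ above $Z$ in $list_v$; since there is a unique current shortest-path entry $Z^{*}$ for $x$, the only $x$-entry that can lie above $Z$ is $Z^{*}$ itself, so either $Z^{*}$ is the topmost $x$-entry, or $Z$ is. I would split accordingly. (i) If $Z^{*}$ is topmost, then every other $x$-entry lies strictly below $Z^{*}$, and Lemma~\ref{lemma:belowSD} bounds their number by $\gamma\cdot\frac{n}{k}$, giving at most $\gamma\cdot\frac{n}{k}+1$ in all. (ii) If $Z$ is topmost and was inserted in Step~\ref{alg2:insert1} as a new shortest-path entry, then after the insertion $Z$ is itself the current SP entry and is topmost, so Lemma~\ref{lemma:belowSD} again yields at most $\gamma\cdot\frac{n}{k}+1$. (iii) If $Z$ is topmost and was inserted in Step~\ref{alg2:checkIfEnough} as a non-SP entry, then the guard of that step held, so just before the insertion there were fewer than $Z^{-}.\nu$ entries for $x$ with key $\le Z.\kappa$; since $Z$ ends up at the top of the $x$-sublist, every existing $x$-entry has key $\le Z.\kappa$, so that count is exactly the number of $x$-entries present just before the insertion, which is therefore $<Z^{-}.\nu\le \gamma\cdot\frac{n}{k}+1$, i.e.\ at most $\gamma\cdot\frac{n}{k}$, and adding $Z$ leaves at most $\gamma\cdot\frac{n}{k}+1$. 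Since when nothing is removed one of $Z^{*},Z$ must be topmost, these sub-cases are exhaustive and the induction closes, giving Lemma~\ref{lemma:count}.

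The step I expect to be the main obstacle is sub-case (iii), together with the bookkeeping that makes it work. Two points need care: first, that Step~\ref{alg2:sendEnd} freezes $Z^{-}.\nu$ at $y$'s start-of-round value, so the round-$(r-1)$ inductive hypothesis may legitimately be invoked to bound it (this is why the invariant must be maintained after every {\sc Insert}, not merely at round boundaries); second, that when the new entry $Z$ is inserted at the top of the $x$-sublist, the quantity governed by the Step~\ref{alg2:checkIfEnough} guard --- the number of $x$-entries with key $\le Z.\kappa$ --- coincides with the full $x$-count, which is precisely what we must bound. One must also verify the ``nothing removed $\Rightarrow$ $Z^{*}$ or $Z$ topmost'' claim in the one slightly delicate configuration, $Z$ inserted via Step~\ref{alg2:insert1}: there the old SP entry is demoted to a non-SP entry, so if it sat above $Z$ it would be exactly the non-SP entry above $Z$ that {\sc Insert} deletes, contradicting the no-deletion assumption; hence in that configuration $Z$ is forced to be topmost. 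A final minor subtlety is the tie-breaking convention when two $x$-entries share a key value, which one checks does not disturb the equality ``guard count $=$ total $x$-count'' used in (iii). The remaining claim --- that Lemma~\ref{lemma:belowSD} applies verbatim whenever the current SP entry is the topmost $x$-entry --- is routine.
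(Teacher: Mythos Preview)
Your argument is correct. Both proofs rest on the same two facts---Lemma~\ref{lemma:belowSD} when the current SP entry sits on top, and the Step~\ref{alg2:checkIfEnough} guard when a non-SP entry is the one that could push the count over---but the global organisation differs. The paper argues by minimality over \emph{key value}: it picks an entry $Z$ with smallest $\kappa$ that is the $(\gamma\cdot n/k+2)$-th entry for its source in any list, and derives a contradiction either from Lemma~\ref{lemma:belowSD} (if $Z$ was inserted as SP) or by exhibiting a still-smaller-key violation in $list_y$ (if $Z$ was inserted as non-SP). You instead run a forward induction on \emph{rounds}, carrying the bound through each {\sc Insert} and splitting on whether the insert deletes an entry and, if not, on which of $Z$ or $Z^*$ is topmost. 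Your approach is a bit longer but has the virtue that the role of the Step~\ref{alg2:checkIfEnough} guard is made completely explicit: you read off directly that the pre-insert $x$-count is $<Z^-.\nu$ and then cap $Z^-.\nu$ by the inductive hypothesis at $y$. The paper's minimality argument is more compact but leans on the reader to unpack how the guard translates into a smaller-key witness at $y$. Your handling of the delicate sub-case---$Z$ inserted via Step~\ref{alg2:insert1} with the old $Z^*$ demoted---is exactly the right observation; the paper's proof sidesteps this by working with the minimal-key violator rather than tracking the evolution of a single list.
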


\begin{proof}
On the contrary, 
let $Z$ be an entry for  source $x \in S$ with the smallest key such that
$Z$ is the $(\gamma\cdot \frac{n}{k} + 2)$-th entry for $x$ in $list_v$.
Let $y$ be the sender of $Z$ to $v$ and let the corresponding entry in $y$'s list be $Z^{-}$.

If $Z$ was added as a \dummy entry, 
then by 
Lemma~\ref{lemma:countDummySet}
there are at least $\gamma\cdot \frac{n}{k} + 2$ entries for $x$ at or below $Z^-$ in $list_y$,
resulting in a contradiction as
$Z$ is the entry with the smallest key that have this $\nu$ value.

Otherwise if $Z$ was added as a current shortest path entry,
then by Lemma~\ref{lemma:belowSD},
$Z$ can have at most $\gamma\cdot \frac{n}{k}$ entries below it
in any round
and hence 
 there are at most $\gamma\cdot \frac{n}{k} + 1$ at or below $Z$ in $list_v$
in all rounds 
(and if $Z$ is later marked as \dummy then 
by Lemma~\ref{lemma:countDummy}
$Z.\nu$ will stay fixed at that value),
again resulting in a contradiction.
\end{proof}

\subsection{Establishing an Upper Bound on the round $r$ by which an entry $Z$ is sent}
 
 \begin{lemma*}{\bf \ref	{lemma:sendBound}.}
If an entry $Z$ is added to $list_v$ in round $r$ then 
 $r  < Z.\kappa + pos^{r}_v (Z)$.
\end{lemma*}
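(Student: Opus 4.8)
The plan is to argue by contradiction, taking the first round in which the claimed inequality fails. Suppose $r$ is the smallest round in which an entry $Z$ is added to $list_v$ with $r \geq \lceil Z.\kappa + pos_v^r(Z)\rceil$. First I would dispose of the base case $r=1$: any entry received by $v$ in round~1 came across an edge, so its hop length $l$ is at least~1, hence $Z.\kappa = d\cdot\gamma + l \geq 1$ (using $\gamma,d\geq 0$); since also $pos_v^1(Z)\geq 1$, we get $\lceil Z.\kappa + pos_v^1(Z)\rceil \geq 2 > 1$, so the inequality holds in round~1 and the minimal violating round satisfies $r\geq 2$.

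Now for the inductive step. Since $Z$ was added to $list_v$ in round $r$, it was created in Step~\ref{alg2:Z} from a message $M = \langle Z^-, \ldots\rangle$ sent by some neighbor $y$ in round $r$. By the sending rule in Step~\ref{alg2:sendStart}, $y$ sends $Z^-$ in round $r$ only if $r = \lceil Z^-.\kappa + pos_y^r(Z^-)\rceil$. So it suffices to show $\lceil Z^-.\kappa + pos_y^r(Z^-)\rceil < \lceil Z.\kappa + pos_v^r(Z)\rceil$, since that would give $r < \lceil Z.\kappa + pos_v^r(Z)\rceil$, contradicting the choice of $r$. For the key values, Step~\ref{alg2:updateKey} gives $Z.\kappa = Z^-.\kappa + w(y,v)\cdot\gamma + 1 \geq Z^-.\kappa + 1$, so $Z.\kappa$ exceeds $Z^-.\kappa$ by at least~1. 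For the positions, since $Z$ is added to $list_v$ in round $r$, Corollary~\ref{cor:pos2} applies and yields $pos_y^r(Z^-) \leq pos_v^r(Z)$. Combining, $Z^-.\kappa + pos_y^r(Z^-) \leq (Z.\kappa - 1) + pos_v^r(Z) = (Z.\kappa + pos_v^r(Z)) - 1$, and taking ceilings preserves the strict inequality, so $\lceil Z^-.\kappa + pos_y^r(Z^-)\rceil \leq \lceil Z.\kappa + pos_v^r(Z)\rceil - 1 < \lceil Z.\kappa + pos_v^r(Z)\rceil$. This completes the contradiction.

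The main obstacle is making sure the position comparison $pos_y^r(Z^-)\leq pos_v^r(Z)$ is legitimately available at exactly the moment we need it: we are comparing positions in the \emph{same} round $r$ — the position of $Z^-$ in $list_y$ at the time $y$ sends it, versus the position of $Z$ in $list_v$ at the time $v$ inserts it. This is precisely what Corollary~\ref{cor:pos2} (via Lemma~\ref{lemma:ci}) is set up to deliver under the hypothesis that $Z$ is actually added to $list_v$, which holds here by assumption. The only subtlety is the interaction with ceilings, but since the gap between $Z.\kappa + pos_v^r(Z)$ and $Z^-.\kappa + pos_y^r(Z^-)$ is at least~1, the floor/ceiling bookkeeping causes no trouble — a gap of at least~1 in the real arguments forces a gap of at least~1 in the ceilings. (Note the statement in the appendix drops the ceiling notation; I would keep the argument at the level of real numbers and observe that $r = \lceil Z^-.\kappa + pos_y^r(Z^-)\rceil \leq Z^-.\kappa + pos_y^r(Z^-) + 1 \leq Z.\kappa + pos_v^r(Z) < \lceil Z.\kappa + pos_v^r(Z)\rceil$ if one wants the ceiling form, being slightly careful that $Z.\kappa + pos_v^r(Z)$ need not be an integer.)
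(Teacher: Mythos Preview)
Your proposal is correct and follows essentially the same approach as the paper's proof: argue by contradiction on the first violating round, handle the base case $r=1$ via $Z.\kappa\geq 1$ and $pos\geq 1$, and for the inductive step use the sending rule $r=\lceil Z^-.\kappa+pos_y^r(Z^-)\rceil$ together with $Z.\kappa>Z^-.\kappa$ and $pos_y^r(Z^-)\leq pos_v^r(Z)$. The only difference is presentational: you explicitly invoke Corollary~\ref{cor:pos2} for the position inequality (the paper asserts it without citation) and you spell out the ceiling arithmetic, whereas the paper leaves both implicit.
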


\begin{proof}
The lemma holds in the first round since all entries have non-negative $\kappa$, any received entry has hop length at least 1, and the lowest
position is 1 so  for any entry $Z$  received by $v$ in round 1, $Z.\kappa + pos^1_v (Z)  \geq  1+1 > 1$.

Let $r$ be the first round (if any) in which the lemma is violated, and let it occur 
when entry $Z$ is added to $list_v$.
So $r \geq  Z.\kappa + pos^r_v (Z)$. 
 Let $r_1 = Z.\kappa + pos^r_v (Z)$ (so $r_1<r$ by assumption).

Since $Z$ was added to $list_v$ in round $r$,  $Z^-$ was sent to $v$ by a node $y$ 
 in round $r$. So $r=Z^-.\kappa + pos_y^r(Z^-) $. But $Z.\kappa > Z^-.\kappa$ and
 $pos_v^r(Z) \geq pos_y^r(Z^-)$, hence $r$ must be less than $Z.\kappa + pos^r_v (Z)$.
\end{proof}

\begin{lemma*}{\bf \ref{lemma:sp}.}
 Let  $\pi^*_{x,v}$ be a shortest path from source $x$ to $v$ with the minimum number of hops among
 $h$-hop
 shortest paths from $x$ to $v$. Let  $\pi^*_{x,v}$ have $l^*$ hops and
shortest path distance  $d^{*}_{x,v}$.
 Then $v$ 
receives
an entry 
  $Z^*= (\kappa, d^{*}_{x,v}, l^*, x)$ 
by round 
$r < Z^*.\kappa + pos_v^r (Z^*)$.
 \end{lemma*}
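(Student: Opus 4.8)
The plan is to prove, by induction on the key value $\kappa = d^*_{x,v}\cdot\gamma + l^*$, that $v$ places on $list_v$ an entry $Z^*$ for source $x$ with $Z^*.d = d^*_{x,v}$ and $Z^*.l = l^*$. The round bound then follows at once: the round $r$ in which $v$ adds $Z^*$ is the round in which $v$ receives it, and Lemma~\ref{lemma:sendBound} applied in round $r$ gives $r < \lceil Z^*.\kappa + pos_v^r(Z^*)\rceil$. For the base case $\kappa = 0$ we have $x = v$, $l^* = 0$, and the entry $(0,0,0,x)$ is placed on $list_x$ during Initialization with its flag-$d^*$ set, so the claim holds.

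For the inductive step, write $\pi^*_{x,v}$ as $\pi^*_{x,y}$ followed by a last edge $(y,v)$. I would first record the standard subpath facts: $\pi^*_{x,y}$ is a shortest $x$-to-$y$ path (otherwise replacing it would shorten $\pi^*_{x,v}$), and among all shortest $x$-to-$y$ paths its hop length $l^*-1$ is minimum — for if some shortest $x$-to-$y$ path had $l'\le l^*-2$ hops, then appending $(y,v)$ gives an $h$-hop shortest $x$-to-$v$ path with $l'+1 < l^*$ hops (note $l'+1 \le l^*-1 \le h$), contradicting the minimality of $l^*$. Hence the key of $\pi^*_{x,y}$ equals $\kappa^- = d^*_{x,y}\cdot\gamma + (l^*-1) < \kappa$, so the inductive hypothesis applies to the pair $(x,y)$.

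From the inductive hypothesis I would then argue that $y$ holds on $list_y$, as its current shortest-path (flag-$d^*$) entry for $x$, an entry $Z^-$ with $Z^-.d = d^*_{x,y}$ and $Z^-.l = l^*-1$. Indeed $d^*_{x,y}$ is the true shortest distance, so $y$'s stored $d^*_x$ never drops below it and equals it once such an entry arrives; every entry corresponds to a real path, so the same exchange argument shows that no $x$-entry at $y$ with distance $d^*_{x,y}$ has hop length below $l^*-1$, i.e.\ the flag-$d^*$ tie-break in Step~\ref{alg2:checkZ} (smallest $\kappa$, equivalently smallest $l$, among smallest-$d$ entries) selects hop length $l^*-1$; and this entry, being of optimal distance and minimal hop length, is never demoted and hence never removed. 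By Lemma~\ref{lemma:sendBound}, $Z^-$ is on $list_y$ strictly before round $\lceil \kappa^- + pos_y(Z^-)\rceil$, and since $pos$ is non-decreasing (Lemma~\ref{lemma:pos}) and bounded (Lemma~\ref{lemma:count}) the send condition of Steps~\ref{alg2:sendStart}--\ref{alg2:sendEnd} does fire for $Z^-$ in some round $r' = \lceil \kappa^- + pos_y^{r'}(Z^-)\rceil$, in which $y$ sends $Z^-$ to $v$ with flag-$d^*$ true. In round $r'$, $v$ forms the entry $Z$ with $Z.d = d^*_{x,y}+w(y,v) = d^*_{x,v}$, $Z.l = l^*$, and $Z.\kappa = \kappa^- + w(y,v)\cdot\gamma + 1 = \kappa$; since the incoming flag is true, $l^* \le h$, and $d^*_{x,v}$ is the true shortest distance, so $d^*_{x,v}$ is at most $v$'s currently stored distance for $x$: if it is strictly smaller, or equal with a smaller key, then Step~\ref{alg2:checkZ} accepts $Z$, and in the only remaining case (equal $d$ and equal $\kappa$) the already-present flag-$d^*$ entry itself has field values $(d^*_{x,v}, l^*, x)$. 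Applying Lemma~\ref{lemma:sendBound} to the round in which this entry was added completes the induction.

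I expect the main obstacle to be the middle step: showing that the entry the inductive hypothesis delivers to $y$ is the one actually flagged as $y$'s current shortest-path entry for $x$ — and hence the one propagated onward — rather than some other $x$-entry on $list_y$. This requires a careful reading of the tie-breaking in Step~\ref{alg2:checkZ}, in particular confirming that once $y$'s stored distance for $x$ has settled at $d^*_{x,y}$ the flagged entry necessarily carries the minimum hop length $l^*-1$, and that "receives" in the lemma's statement can be strengthened to "has as its flag-$d^*$ entry" without circularity in the induction. The remaining ingredients — the subpath and exchange arguments, the verification that the send actually fires (using that $pos$ is non-decreasing and bounded), and the key arithmetic — are routine.
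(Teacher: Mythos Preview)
Your proof is correct and follows essentially the same approach as the paper: induction on the key value $\kappa$, applying the inductive hypothesis to the penultimate vertex $y$ on the minimum-hop shortest path, and invoking Lemma~\ref{lemma:sendBound} for the round bound. You are in fact more careful than the paper's own proof about why the entry at $y$ carries \currentSD$=true$ when sent and why $v$ then accepts (or already holds) an entry with the required field values---points the paper leaves implicit.
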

  
 \begin{proof}
 If an entry 
  $Z^*= (\kappa, d^{*}_{x,v}, l^*, x)$
  is placed on $list_v$ by $v$
 then by Lemma~\ref{lemma:sendBound}
 it is received before round  $Z^*.\kappa + pos_v^r (Z^*)$ and hence
  it will be sent in round $r = Z^*.\kappa + pos_v^r (Z^*)$
 in Step~\ref{alg2:sendStart}.
  It remains to show that an entry for path $\pi^*_{x,v}$   is received by $v$. We
 establish  this for all pairs $x,v$ by induction on key value $\kappa$. 
 
   If $\kappa = 0$, then it implies that the shortest path is the vertex $x$ itself and 
  thus the statement holds for $\kappa = 0$.
 Let us assume that the statement holds for all keys $< \kappa$ and consider the path $\pi^*_{x,v}$ with key 
 $\kappa = d^{*}_{x,v} \cdot \gamma + l^*$.
 
   Let $(y,v)$ be the last edge on the
  path $\pi^*_{x,v}$
  and let $\pi^*_{x,y}$ be the subpath of $\pi^*_{x,v}$ from $x$ to $y$.
  By construction the path $\pi^*_{x,y}$ is a shortest path from $x$ to $y$ and its hop length $l^* -1$ is the smallest
  among all shortest paths from $x$ to $y$. Hence by the inductive assumption an entry 
   $Z^{-}$ with  $Z^{-}.\kappa = d^*_{x,y} \cdot \gamma + l^* -1$ (which is strictly less than $Z^{*}.\kappa$)
   is received by $y$ before round $Z^{-}.\kappa + pos_y^{'} (Z^{-})$ (by Lemma~\ref{lemma:sendBound})
   and is then sent to $v$ in round $r' = Z^{-}.\kappa + pos_y^{r'} (Z^{-})$ in Step~\ref{alg2:sendStart}.
  Thus $v$ adds the shortest path entry for $x$, $Z^*$,
  to $list_v$ by the end of round $r'$.
   \end{proof} 
   
     \subsection{Establishing an Upper Bound on the round $r$ by which Algorithm~\ref{alg2} terminates}

    \begin{lemma*}{\bf \ref{lemma:hhopkSSP}.}
   Algorithm~\ref{alg2} correctly computes the $h$-hop shortest path distances from each source $x \in S$ to each node $v \in V$
   by round $(n-1)\gamma + h + n\cdot\gamma + k$.
   \end{lemma*}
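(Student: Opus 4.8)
The plan is to assemble the stated round bound from two ingredients already proved in this appendix: Lemma~\ref{lemma:sp}, which guarantees that every node $v$ eventually receives the shortest-path entry $Z^*$ for every source $x$, and Lemma~\ref{lemma:count}, which caps the number of list entries per source. Correctness is then essentially free: by Lemma~\ref{lemma:sp}, $v$ receives an entry $Z^* = (\kappa, d^*_{x,v}, l^*, x)$ carrying the true $h$-hop shortest-path distance $d^*_{x,v}$, and when this entry is processed in Step~\ref{alg2:checkZ} it is recorded as the current shortest-path entry for $x$ (no smaller $d$-value for $x$ can ever reach $v$, so $d^*_x$ settles to $d^*_{x,v}$). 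Hence all that remains is to bound the round by which this reception is guaranteed.

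First I would bound the key of a shortest-path entry. Since we compute $h$-hop shortest paths we have $l^* \le h$, and since the $h$-hop shortest-path distances are at most $n-1$ in this setting we have $d^*_{x,v} \le n-1$; therefore $Z^*.\kappa = d^*_{x,v}\cdot\gamma + l^* \le (n-1)\gamma + h$.

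Next I would bound $pos^r_v(Z^*)$ uniformly over all rounds $r$. By Lemma~\ref{lemma:count}, $list_v$ contains at most $\frac{n}{k}\gamma + 1$ entries for each source, and there are $k$ sources, so the total length of $list_v$ in any round is at most $k\left(\frac{n}{k}\gamma + 1\right) = n\gamma + k$. Consequently $pos^r_v(Z^*) \le n\gamma + k$ for every $r$.

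Finally I would combine these two estimates through Lemma~\ref{lemma:sp}, which states that $v$ receives $Z^*$ by some round $r < Z^*.\kappa + pos^r_v(Z^*)$. Substituting the bounds gives $r < (n-1)\gamma + h + n\gamma + k$, so every shortest-path entry---and hence every $h$-hop shortest-path distance from each source to each node---is computed by round $(n-1)\gamma + h + n\gamma + k$, as claimed. The genuine difficulty has already been discharged upstream: it lies in establishing the uniform per-source entry bound of Lemma~\ref{lemma:count} (which in turn rests on the distinct-$(d,l)$ mapping of Lemma~\ref{lemma:map}), so that the position term in Lemma~\ref{lemma:sp} is controlled independently of the round; once that is in hand the present lemma is a short arithmetic assembly.
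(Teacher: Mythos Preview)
Your proposal is correct and follows essentially the same approach as the paper: bound $Z^*.\kappa$ by $(n-1)\gamma + h$ using the hop and weight limits, bound $pos_v^r(Z^*)$ by $n\gamma + k$ via Lemma~\ref{lemma:count} summed over the $k$ sources, and then invoke Lemma~\ref{lemma:sp} to combine the two. The only addition is your explicit sentence on why $d_x^*$ settles to the correct value, which the paper leaves implicit; otherwise the arguments coincide.
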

   
   \begin{proof}
An $h$-hop shortest path has hop-length at most $h$ and 
   weight at most $n-1$, 
   hence a key corresponding to a shortest path entry 
   will have value at most $(n-1)\gamma + h$.
   Thus by Lemma~\ref{lemma:sp},
   for every source $x \in S$ every node $v \in V$
should have received the shortest path distance entry, $Z^*$,
   for source $x$ by round $r = (n-1)\gamma + h + pos_v^r (Z^*)$.
   
   Now we need to bound the value of $pos_v^r (Z^*)$.
   By Lemma~\ref{lemma:count}, 
   we know that there are at most $\gamma \cdot\frac{n}{k} + 1$ entries 
   for each source $x \in S$ in a node $v$'s list.
   Now as there are $k$ sources,
   $v$'s list has at most $(\gamma \cdot\frac{n}{k} + 1)\cdot k \leq \gamma\cdot n + k$ entries,
   thus $pos_v^r (Z^*) \leq \gamma\cdot n + k $ and
   hence $r \leq (n-1)\gamma + h + \gamma\cdot n + k$.
   \end{proof}
   
  Since $\gamma =  \sqrt{\frac{hk}{n}}$, Lemma~\ref{lemma:hhopkSSP}  establishes the
  bounds given in 
  Theorem~\ref{thm:alg2}.

\end{document}